\newcolumntype{P}[1]{>{\centering\arraybackslash}p{#1}}
\newlist{notes}{enumerate}{1}
\setlist[notes]{label=Note: ,leftmargin=*}
\tikzstyle{startstop} = [rectangle, rounded corners, minimum width=3cm, minimum height=0.75cm,text centered, draw=black, fill=red!30]
\tikzstyle{io} = [rectangle, minimum width=3cm, minimum height=0.75cm, text centered, draw=black, fill=blue!30]
\tikzstyle{split} = [rectangle, minimum width=3cm, minimum height=1cm, text centered, text width = 4cm, draw=black, fill=orange!30]
\tikzstyle{merge} = [rectangle, minimum width=3cm, minimum height=1cm, text centered, text width = 4cm, draw=black, fill=orange!30]
\tikzstyle{test} = [rectangle, minimum width=3cm, minimum height=1cm, text centered, text width = 4cm, draw=black, fill=orange!30]
\tikzstyle{increment} = [rectangle, minimum width=3cm, minimum height=0.75cm, text centered, draw=black, fill=orange!30]
\tikzstyle{process} = [rectangle, minimum width=3cm, minimum height=1cm, text centered, text width = 3cm,draw=black, fill=orange!30]
\tikzstyle{decision} = [rectangle, minimum width=3cm, minimum height=0.75cm, text centered, draw=black, fill=green!30]
\tikzstyle{arrow} = [thick,->,>=stealth]
\let\Abstract\abstract
\long\def\abstract{\mdframed[backgroundcolor=white!20,hidealllines=true]
  \vspace*{-0.1\baselineskip}\Abstract}
\let\endAbstract\endabstract
\def\endabstract{\endAbstract\endmdframed\par\bigskip}
\title{ \textbf{ \Large Near-perfect Clustering Based on Recursive Binary Splitting Using Max-MMD} }
\date{}
\newtheorem{theorem}{Theorem}
\newtheorem{definition}{Definition}[section]
\newtheorem{lemma}{Lemma}[]
\newtheorem{remark}{Remark}
\newcolumntype{Y}{>{\centering\arraybackslash}X}
\begin{document}
\maketitle



\begin{center}
	\null\vskip-2.5cm
	\author{  \textbf{\large Sourav Chakrabarty$^{1}$, Anirvan Chakraborty$^{2}$, 
			 Shyamal\ K.\ De}$^{1}$
		\\
		$^{1}$Applied Statistics Unit \\
		Indian Statistical Institute, Kolkata, India  \\
		$^{2}$Department of Mathematics and Statistics	\\
		Indian Institute of Science Education and Research Kolkata, India
		Emails:$^{1}$chakrabartysourav024@gmail.com, $^{2}$anirvan.c@iiserkol.ac.in, $^{1}$shyamalkd@isical.ac.in
	}
\end{center}
\footnotetext[2]{The research of Anirvan Chakraborty is supported by the MATRICS grant MTR/2021/000211 provided by the Science and Engineering Research Board, Government of India}

\begin{abstract}
We develop novel clustering algorithms for functional data when the number of clusters $K$ is unknown and also when it is prefixed. These algorithms are developed based on the Maximum Mean Discrepancy (MMD) measure between two sets of observations. The algorithms recursively use a binary splitting strategy to partition the dataset into two subgroups such that they are maximally separated in terms of an appropriate weighted MMD measure. When $K$ is unknown, the proposed clustering algorithm has an additional step to check whether a group of observations obtained by the binary splitting technique consists of observations from a single population. We also obtain a bonafide estimator of $K$ using this algorithm. When $K$ is prefixed, a modification of the previous algorithm is proposed which consists of an additional step of merging subgroups which are similar in terms of the weighted MMD distance. The theoretical properties of the proposed algorithms are investigated in an oracle scenario that requires the knowledge of the empirical distributions of the observations from different populations involved. In this setting, we prove that the algorithm proposed when $K$ is unknown achieves perfect clustering while the algorithm proposed when $K$ is prefixed has the perfect order preserving (POP) property. Extensive real and simulated data analyses using a variety of models having location difference as well as scale difference show near-perfect clustering performance of both the algorithms which improve upon the state-of-the-art clustering methods for functional data.
\end{abstract}

\noindent \emph{Keywords:} characteristic kernel; empirical distributions; functional data; Maximum mean discrepancy; oracle analysis; perfect order preserving; Rand index

\noindent
\section{Introduction}
\indent Clustering of a given dataset aims at partitioning the data into subgroups such that each subgroup is as homogeneous as possible and any two distinct subgroups are as heterogeneous as possible. For functional data, the problem of clustering has received quite a lot of attention in the past primarily due to the challenges involved, namely, the functional nature of the data which entails that many of the multivariate techniques may not be directly applicable, and the fact that one could also incorporate the functional nature of the data while building a clustering procedure (e.g., also consider the derivative curves in addition to the original data curves). One of the classical approaches for clustering of functional data is to define an appropriate notion of distance between two data curves e.g., the usual $L_2$ distance and their modifications (see \cite{ferratynonparametric}, \cite{CUESTAALBERTOS20074864}, \cite{chen2014optimally}, \cite{meng2018new}, \cite{martino2019k}, \cite{chen2021clustering}) followed by a k-means/k-medoids algorithm. A second approach which is typical to functional data is to use projections of the data onto a suitable finite dimensional subspace and then use an appropriate multivariate clustering technique. Functional data may be projected in a finite dimension mainly in two ways: (i) for fully observed curves, one can project the the curves onto an orthonormal basis in the function space (see \cite{chiou2007functional}, \cite{yamamoto2012clustering}, \cite{LUZLOPEZGARCIA2015231}, \cite{delaigle2019clustering}), and (ii) represent the discretely observed data through an appropriate truncated basis expansion using spline basis, principal component basis, wavelet basis, among others (see \cite{abraham2003unsupervised}, \cite{doi:10.1198/000313007X171016}, \cite{kayano2010functional}, \cite{10.1111/j.1541-0420.2012.01828.x} and \cite{tzeng2018dissimilarity}). While some methods for functional clustering use model-based techniques (see \cite{funfem}, \cite{chamroukhi2016piecewise}, \cite{suarez2016bayesian}, \cite{margaritella2021parameter}), a few papers use pseudo-density based approaches (see \cite{JACQUES2013164}, \cite{10.1111/j.1541-0420.2012.01828.x}).
\\
\indent Majority of the literature on functional clustering deals with the situation where there is difference in location. However, the case where the difference is in the scale, or more generally, in the distributional structure has only been sparsely addressed. Furthermore, the papers which consider scale difference mostly require additional difference in the location for the method to work (see, for example, \cite{delaigle2019clustering}). On the other hand, model based or pseudo density based approaches (which potentially can deal with distributional differences) are too restrictive in their approaches. Model based approaches usually make Gaussian or some other parametric mixture model assumptions while pseudo-density based procedures use approximations of the likelihood due to the nonexistence of a conventional notion of density for functional data. However, the existing work in these directions are still restricted to detecting difference only in the location. 
\noindent
\\
\indent One possible approach to detect differences in location/scale/distributional structure is to consider a suitable notion of distance between two clusters or distributions and to use this distance to develop the clustering algorithm. In this paper, we develop novel nonparametric clustering algorithms whose building block is an appropriate notion of distance between two sets of observations. This is unlike most of the distance based clustering techniques which generally use distances between observations to build the clustering algorithm. 
\par
Recall that the objective of clustering is to create homogeneous groups that are heterogeneous between themselves. The distance function approach taken in this paper achieves both ends. If we start with a single observation as a group, then we can add to it those observations which maximize the distance between the group thus formed and the rest of the dataset. The choice of the distance is crucial since it must simultaneously ensure that the observations that are added to the singleton group are as ``similar'' to the existing observation as possible while also ensuring that the distance between the two groups is as large as possible. This paper will exploit the Maximum Mean Discrepancy (MMD) \cite{gretton2006kernel} as a measure of distance between two groups of observations. Such an application of MMD for functional data clustering is novel to the best of our knowledge. We will show later in the paper that the MMD ensures that both within group homogeneity and between group heterogeneity can be achieved. To provide an insight into this phenomenon, we state the following property of MMD: given two distributions $P$ and $Q$, the MMD between any two mixtures $\alpha P + (1-\alpha)Q$ and $\beta P + (1-\beta) Q$ is strictly smaller than the MMD between $P$ and $Q$ with equality holding if and only if $(\alpha,\beta) = (0,1)$ or $(1,0)$ (cf. Remark \ref{MMDremark}). Thus, if we have a two-class clustering problem with unknown empirical distributions $\widehat{P}$ and $\widehat{Q}$ corresponding to the two samples, the above property ensures that if we want to maximize the MMD between any two sub-groups of this data, this is achieved when the data is partitioned ``perfectly'' (cf. Definition \ref{def:perfect}), that is, when the clusters are pure. Note that any subgroup of the original data can be viewed as a sample from a mixture distribution involving $\widehat{P}$ and $\widehat{Q}$, and this link will be exploited heavily in this paper to develop and investigate the clustering algorithms. The above property also ensures the following. Suppose that we start with a partition with one single observation as a group (the most obvious choice of a group containing observations from a single distribution), say $C_1$, and the rest of the data as another group, say $C_2$. Then, inclusion of another observation from $C_2$ to $C_1$ yielding the maximum MMD between the new groups also ensures that the transferred observation belongs to the same population as the observation already in $C_1$. Thus, heterogeneity between groups and homogeneity within groups are preserved simultaneously. Since in a typical clustering problem, the knowledge of $\widehat{P}$ and $\widehat{Q}$ are unavailable, we may view this as an oracle version of the clustering problem. Indeed, the clustering algorithms proposed in this paper are motivated by the above crucial phenomena, and the understanding of the performance of these algorithms can be obtained by a detailed investigation in such an oracle scenario (see Section \ref{oracle}). Our clustering algorithms will exploit the above phenomena to carefully find a path to achieve the optimal partition in a finite number of steps instead of searching for all possible partitions.
\\
\indent The main contributions of the paper are as follows: \\
	1. The algorithms proposed in this paper achieves near-perfect clustering of functional data, and they achieve perfect clustering in the oracle scenario, where we assume the knowledge of the empirical distributions of the observations from the different populations. Such a standpoint of an oracle analysis is not known for other functional clustering procedures. \\
	2. Although we use a binary splitting strategy in our clustering algorithms, the use of an appropriate maximization step prior to the binary splitting step reduces the computational overload. Indeed, the computational complexity turns out to be $O(n^2)$ for a sample of size $n$. \\
	3. The proposed algorithms can detect differences in location as well as in scale (even when there is no difference in location). \\
	4. One of the proposed algorithms (Algorithm CURBS-I) do not need a prior specification of the number of clusters $K$, and in fact, provides a bonafide estimator of $K$ in a finite number of steps. \\
	5. If we have a prior specification of $K$, an improvement of Algorithm CURBS-I (Algorithm CURBS-II) is proposed which is computationally cheaper and also achieves the Perfect Order Preserving (POP) property (see Definition \ref{pop}) in the oracle scenario.
\\
\indent The remaining part of this paper is organized as follows. In Section \ref{prologue}, we give a brief introduction to the MMD measure. In Section \ref{methodology}, we propose two clustering algorithms for the scenario where the number of clusters is unknown (Algorithm CURBS-I) as well as when it is prefixed (Algorithm CURBS-II). In Section \ref{oracle}, we provide the oracle versions of these algorithms, referred to as Algorithms CURBS-I* and CURBS-II*, respectively, in order to theoretically investigate the behaviour of the algorithms proposed in Section \ref{methodology} in the oracle setting. We establish that the Algorithm CURBS-I* achieves perfect clustering while the Algorithm CURBS-II* enjoys the POP property in this setting. A detailed simulation study and some real data  analyses are provided in Section \ref{simulation} and we compare the proposed algorithms with some of the existing clustering procedures for functional data. We conclude the paper with some additional remarks and future scope of research in Section \ref{epilogue}.

\section{Prologue: A metric on the space of probability distributions} \label{prologue}
\noindent
Let $\mathcal{P}$ be the set of Borel probability measures on a separable metric space $\mathcal{X}$. A well known measure of discrepancy between two measures $P$,$Q$ $\in$ $\mathcal{P}$ is the Maximum Mean Discrepancy (MMD) \cite{gretton2006kernel} which is defined as 
\begin{center}
	$\mbox{MMD}(P,Q) = \underset{f \in {\cal F}} {\sup}|\int f\,dP - \int f\,dQ|$
\end{center}
for an appropriate class of functions ${\cal F}$. The class ${\cal F}$ is chosen so that it is rich enough to discriminate between any two distinct probability measures. For instance, ${\cal F}$ can be chosen to be the space of bounded continuous functions on ${\cal X}$. However, it does not yield any useful finite sample estimate. On the other hand, if we take  ${\cal F}$ to be a reproducing kernel Hilbert space (RKHS) associated with a symmetric positive semi-definite kernel $k : {\cal X} \times {\cal X} \rightarrow \mathbb{R}$ such that $\int \sqrt{k(x,x)}dP(x) < \infty$  for all $P \in {\cal P}$, an equivalent closed form expression of the square of the MMD can be obtained as
\begin{align*}
	d(P,Q) =& \int\int k(x,x')\,dP(x)dP(x') + \int\int k(y,y')\,dQ(y)dQ(y') - 2\int\int k(x,y)\,dP(x)dQ(y).
\end{align*}
See, for example, \cite{sriperumbudur2010hilbert}. Hereafter, the square of the MMD between two distributions $P$ and $Q$ will be denoted by $d(P,Q)$. This form easily allows an unbiased finite sample estimate. The choices of $k$ for which the metric property holds are called characteristic kernels. For finite dimensional ${\cal X}$, several characteristic kernels have been well-studied (see \cite{fukumizu2007nips}), while for infinite dimensional Hilbert spaces, Gaussian, Laplacian kernels have been shown to be characteristic (see  \cite{ziegel2024characteristic}).

\par We can estimate MMD in the finite sample case as follows. Suppose $X = (x_1, x_2, \dots,x_{n_1})$ and $Y = (y_1, y_2, \dots,y_{n_2})$ are two ensembles of i.i.d. observations drawn from the distributions $P$ and $Q$ respectively. Then a V-statistic type estimate of $d(P,Q)$ is given by
\begin{align*}
d(\widehat{P},\widehat{Q}) =  &\frac{1}{n_1^2}\sum_{i,i'=1}^{n_1} k(x_i,x_{i'}) + \frac{1}{n_2^2}\sum_{j,j'=1}^{n_2} k(y_j,y_{j'}) - \frac{2}{n_1 n_2}\sum_{i=1}^{n_1}\sum_{j=1}^{n_2}k(x_i,y_j),
\end{align*}

\noindent
where $\widehat{P} = {n_1}^{-1}\sum_{i=1}^{n_1} \delta_{x_{i}}$ and $\widehat{Q} = {n_2}^{-1}\sum_{j=1}^{n_2} \delta_{y_{j}}$ are the empirical measures. In the next section, we will develop novel clustering techniques using the MMD, which will allow us to identify clusters having different probability distributions.  
 
\section{Clustering Methodology} \label{methodology}
Consider a sample ${\cal D} = \{X_1,X_2,\ldots,X_n\}$ taking values in the ambient space ${\cal X}$. The goal is to partition ${\cal D}$ into $K$ (may or may not be specified) groups such that the each group is as homogeneous as possible, and two distinct groups are as heterogeneous as possible. Mathematically, each group should have all observations from one single population/distribution and observations from any two distinct groups should belong to two distinct populations/distributions. To achieve this end, we need to have a measure of discrepancy between two groups/clusters. Consider two groups of observations $A = \{a_1, a_2, \dots, a_{n_1}\}$ and $B = \{b_1, b_2, \dots, b_{n_2}\}$. A measure of discrepancy between the sets $A$ and $B$ is given by
\begin{align}
    d_w(A, B) &= \frac{n_1 n_2}{n_1 + n_2 }d(\widehat{P}_A, \widehat{P}_B)\nonumber \\ &= \frac{n_1 n_2}{n_1 + n_2}\left\{\frac{1}{n_1^2}\sum_{i,i'=1}^{n_1} k(a_i, a_{i'}) + \frac{1}{n_2^2}\sum_{j,j'=1}^{n_2} k(b_j, b_{j'}) - \frac{2}{n_1 n_2}\sum_{i=1}^{n_1}\sum_{j=1}^{n_2}k(a_i, b_j)\right\}, \nonumber
\end{align}
where $\widehat{P}_A$ and $\widehat{P}_B$ are the empirical distributions of the observations of $A$ and $B$ respectively. Throughout this paper, we will take $k$ to be a characteristic kernel. Since $k$ is characteristic kernel, $d(\widehat{P}_A, \widehat{P}_B) \geq 0$ with equality holding if and only if $\widehat{P}_A= \widehat{P}_B$, i.e., $A = B$. Hence, $d_w(A,B) \geq 0$ and equality holds if and only if $A = B$. Here, the extra factor involving the cluster sizes acts as a balancing term to counter the situation where the group/cluster sizes may be particularly small (see \citep{jegelka2009generalized}). If we have observations from two populations (i.e., $K = 2$), then the goal is to partition ${\cal D}$ into subsets $A$ and $B$ such that $d_w(A,B)$ is maximum. For a general $K$ class problem, we would like to partition ${\cal D}$ into $K$ subsets $A_1, A_2, \dots, A_K$ such that $d_w(A_i,A_j)$ is maximized for all $i \neq j$ and $i, j \in \{1,2,\dots, K\}$.

\par We start with the case when $K = 2$. Since the objective is to get two homogeneous clusters, we start by selecting two subsets $C_1$ and $C_2$ of ${\cal D}$ with $C_1$ being a singleton. Since the objective is to get two homogeneous clusters, we start by selecting two subsets $C_1$ and $C_2$ of ${\cal D}$ where $C_1$ contains a single observation for which the $d_w
$ distance between $C_1$ and $C_2$ is maximum. This step automatically ensures that $C_1$ is homogeneous. The next step would be to add to $C_1$ those observations from $C_2$ which belong to the same population as the observation already in $C_1$. To achieve this, we start transferring observations from $C_2$ to $C_1$ one at a time such that the $d_w$-distance between $C_1$ and $C_2$ is maximum at each step. Formally, suppose that $C_i^{(r)}$ denotes the $i$-th set after the $r$-th iteration, $r = 1, 2, \dots, n-1$ and $i = 1, 2$.  After transferring observations by maximizing the $d_w$-distance at each iteration, we consider $C_1^{(N_{max})}$ and $C_2^{(N_{max})}$ as the final clusters where
\begin{align}
	N_{max} = \underset{r = 1,2,\dots,n-1}{\arg\max} d_{w}(C_1^{(r)},C_2^{(r)}).
\end{align}
This procedure, referred to as the Binary Splitting (BS) algorithm, is detailed below for any given dataset ${\cal E} \subseteq {\cal D}$.
\\
\small{
\textbf{Algorithm BS: Binary Splitting}
\smallskip
\hrule
\smallskip
\noindent \textbf{Input}: Dataset ${\cal E}$. \\
	Set $C_1^{(0)} = \emptyset$ and $C_2^{(0)} = {\cal E}$.
	\begin{enumerate}
		\item for $r=1,2,\dots,|{\cal E}|-1$ do
		\item Find an element $c$ from $C_2^{(r-1)}$ which maximizes $d_w(C_{1}^{(r-1)} \cup \{c\}, C_{2}^{(r-1)}\backslash\{c\})$.
		\item $C_1^{(r)} = C_{1}^{(r-1)} \cup \{c\}$ and $C_2^{(r)} = C_{2}^{(r-1)}\backslash\{c\}$.
		\item endfor
	\end{enumerate}
\textbf{Output}: $N_{max} := \underset{r = 1,2,\dots,|{\cal E}|-1}{\arg\max} d_{w}(C_1^{(r)},C_2^{(r)})$,  $C_1^{(N_{max})}$ and $C_2^{(N_{max})}$
}
\smallskip
\hrule
\vspace{0.05in}

\noindent\\\noindent The following Remark \ref{dw-computation} states some interesting properties of the $d_w$-distance which facilitates computation of the Algorithm BS. 
\begin{remark}  \label{dw-computation}
Suppose that in the $(r+1)$-th iteration of the Algorithm BS, the observation $c$ is transferred from $C_2^{(r)}$ to $C_1^{(r)}$, so that $C_2^{(r+1)} = C_{2}^{(r)}\backslash\{c\}$ and $C_1^{(r+1)} = C_1^{(r)} \cup \{c\}$. Then, it can be shown that
     \begin{equation}
         d_w(C_1^{(r+1)}, C_2^{(r+1)}) - d_w(C_1^{(r)}, C_2^{(r)}) = d_w(C_2^{(r)}\backslash\{c\},\{c\}) - d_w(C_1^{(r)},\{c\}).  \label{eq1}
     \end{equation}
This identity has a two-fold implication: \\
(a) The change in the $d_w$-distance between the $r$-th and the $(r+1)$-th iteration only depends on the element $c$ and the two sets in the $r$-th iteration. If $c$ is closer to $C_1^{(r)}$ than $C_2^{(r)}\backslash\{c\}$, then the $d_w$-distance increases from the $r$-th to the $(r+1)$-th iteration and vice-versa.  \\
(b) The computation of $d_w(C_1^{(r+1)}, C_2^{(r+1)})$ can be easily done by adding $d_w(C_1^{(r)}, C_2^{(r)})$ to the difference in the right hand side of \eqref{eq1}. The latter computation is cheaper since one of the sets involved is a singleton.
\end{remark}

When the number of populations $K$ is larger than two, the Algorithm BS has must be applied repeatedly. The details will be given in the next subsection. We will propose clustering algorithms for both situation: (a) when the number of populations is unspecified, and (b) when it is specified. In the former case, our algorithm will automatically provide a bonafide estimator of the number of populations. 
\subsection{Methodology for unspecified number of populations}
\indent Since the number of populations is unspecified, we will have to first check whether the sample contains observations from more than one distribution or not. To achieve this goal, we propose an algorithm, referred to as the Algorithm SCC, to detect whether a dataset can be ascertained as a single cluster. 
\\
\small{
	\textbf{Algorithm SCC: Single Cluster Checking}
	\smallskip
	\hrule
	\smallskip
	\noindent \textbf{Input}: Dataset ${\cal E}$. \\
	Set $C_1^{(0)} = \emptyset$ and $C_2^{(0)} = {\cal E}$.
	\begin{enumerate}
		\item for $r=1,2,\ldots,|{\cal E}|-1$ do
		\item Find an element $c$ from $C_2^{(r-1)}$ which maximizes $d_w(C_{1}^{(r-1)} \cup \{c\}, C_{2}^{(r-1)}\backslash\{c\})$.
		\item Set $C_1^{(r)} = C_{1}^{(r-1)} \cup \{c\}$ and $C_2^{(r)} = C_{2}^{(r-1)}\backslash\{c\}$.
		\item endfor
	\end{enumerate}
    Set $V = \underset{r = 1,2,\dots,|{\cal E}|-1}{\max} d_{w}(C_1^{(r)},C_2^{(r)})/\underset{r = 1,2,\dots,|{\cal E}|-1}{\min} d_{w}(C_1^{(r)},C_2^{(r)})$. \\
    Set $N_{max} := \underset{r = 1,2,\dots,|{\cal E}|-1}{\arg\max} d_{w}(C_1^{(r)},C_2^{(r)})$ and $N_{min} := \underset{r = 1,2,\dots,|{\cal E}|-1, \  r \neq N_{max}}{\arg\min} d_{w}(C_1^{(r)},C_2^{(r)})$.\\
    Set $H = N_{max}(|{\cal E}| - N_{max})/[N_{min}(|{\cal E}| - N_{min})]$. \\
    If $N_{min} \leq N_{max}$ set $R = N_{max}(|{\cal E}| - N_{min})/[N_{min}(|{\cal E}| - N_{max})]$ else set $R = N_{min}(|{\cal E}| - N_{max})/[N_{max}(|{\cal E}| - N_{min})]$.  
    \textbf{Output}: If $|V-(R/H)| > |V-1|$ out = 0 (Accept) else out = 1 (Reject).
}
\smallskip
\hrule
\bigskip

\indent This algorithm will be justified through an oracle analysis provided in Section \ref{oracle}. If the Algorithm SCC determines that the entire dataset consists of observations from more than one populations, we will have to apply the Algorithm BS and the Algorithm SCC repeatedly to split any subset of the whole data into two disjoint subsets and also check whether each of these two subsets conform to the single population hypothesis. Now, we describe the algorithm, referred to as Clustering Using Recursive Binary Splitting (CURBS)-I, for clustering the entire dataset. We will discuss the justification of this algorithm through an oracle analysis in Section \ref{oracle}. \\
\small{
	\textbf{Algorithm CURBS-I}
	\smallskip
	\hrule
	\smallskip
	\noindent \textbf{Input}: Dataset ${\cal D}$. 
\begin{enumerate}
    \item Apply Algorithm SCC to ${\cal D}$.
    \item If Accepted, END, else apply Algorithm BS to ${\cal D}$.
    \item Apply Algorithm SCC to the output clusters obtained. If Algorithm SCC is Accepted for a cluster, keep that cluster unchanged, else apply Algorithm BS on that cluster and obtain two new clusters.
    \item Repeat Step 3 until Algorithm SCC is Accepted for each of the clusters obtained. 
\end{enumerate}
    \textbf{Output}: $\widehat{K}$ and the clusters $C_1,C_2,\dots,C_{\widehat{K}}$, where $\widehat{K}$ is the number of clusters finally obtained.}
    \smallskip
\hrule

\smallskip

\subsection{Methodology for prefixed number of populations}
In case the number of populations $J$ is specified in advance (may or may not be the true number of populations $K$), we will have to modify the Algorithm CURBS-I to accommodate this information. More specifically, in place of the Algorithm SCC which was applied to all the clusters obtained in a given stage (cf. steps 3 and 4 of the Algorithm CURBS-I), we deliberately split each of these clusters using the Algorithm BS. Consequently, it may so happen that some of the clusters containing observations from a single population is split into two sub-clusters which now needs to be merged. This merging is done based on $d_w$-distances between sub-clusters, and it also allows us to control the number of output clusters using the information about $J$. The details of such an algorithm, referred to as the Algorithm CURBS-II, is given below. We will discuss the justification of this algorithm through an oracle analysis in Section \ref{oracle}. \\
\small{
	\textbf{Algorithm CURBS-II}
	\smallskip
	\hrule
	\smallskip
	\noindent \textbf{Input}: Dataset ${\cal D}$ and specified number $J ~(> 1)$ of clusters.
	\begin{enumerate}
		\item Apply Algorithm BS to ${\cal D}$.
		\item If $J=2$ then END, else set $i=2$.
		\item Apply Algorithm BS to each of the $i$ clusters obtained.
		\item Compute $d_w$-distances between all the $2i$ clusters obtained in Step 3 and arrange them in the increasing order of magnitude.
		\item Merge those $(i-1)$ pair(s) of clusters obtained in Step 3 having the least $d_w$-distances. 
		\item  If $i=J-1$ END, else update $i=i+1$ and go to Step 3.
	\end{enumerate}
	\textbf{Output}: The clusters $C_1,C_2,\ldots,C_{J}$.
	}
\smallskip
\hrule

\smallskip

\section{Oracle analysis of proposed clustering algorithms} \label{oracle}
\indent This section is aimed at providing a motivation and justification for the algorithms detailed in Section 3. The analysis done here will be based on an oracle situation which will be described shortly. To this end, consider first the case when the observations come from two different distributions. Denote the observations by $X_1, X_2, \dots, X_{n_1}$ from a distribution ${P}_1$ and $Y_1, Y_2, \dots, Y_{n_2}$ from a distribution ${P}_2$. Let us denote the empirical distributions of the $X_i$'s and the $Y_j$'s by $\widehat{P}_1$ and $\widehat{P}_2$, respectively. Suppose we have two clusters $S_1 = \{X_1, \dots, X_{m_1}, Y_1, \dots, Y_{m_2}\}$ and $S_2 = \{X_1, \dots, X_{m_3}, Y_1, \dots, Y_{m_4}\}$, for $m_1 + m_3 \le n_1$ and $m_2+m_4 \le n_2$. As opposed to the usual empirical distributions of the sets $S_1$ and $S_2$, we shall define the ``representative distributions'' of the sets $S_1$ and $S_2$. Note that $S_1$ can be considered as a possible realization from $\alpha\widehat{P}_1 + (1 - \alpha)\widehat{P}_2$ for some $\alpha$ $\in$ [0,1]. The most canonical choice of $\alpha$ is its natural estimator, namely, the sample proportion of observations in $S_1$ coming from $P_1$, i.e., $m_1$/($m_1 + m_2$). The representative distribution of $S_1$ is defined as $\widetilde{P}_{S_{1}} = m_1$/($m_1 + m_2$)$\widehat{P}_1 + m_2$/($m_1 + m_2$)$\widehat{P}_2$. Note in passing that if we draw a random sample of size ($m_1 + m_2$) from $\widetilde{P}_{S_{1}}$, the event that the outcome will be $S_1$ has a positive probability. The representative distribution of $S_2$, denoted by $\widetilde{P}_{S_{2}}$, is given by  $\widetilde{P}_{S_{2}} = m_3/(m_3 + m_4)\widehat{P}_1 + m_4/(m_3 + m_4)\widehat{P}_2$. Observe that even if $S_1$ has some other configuration of $m_1$ elements from $\widehat{P}_1$ and $m_2$ elements from $\widehat{P}_2$, its representative distribution remains unchanged while the empirical distribution changes. This property of representative distributions allows us to capture the intrinsic composition of a set of observations. Note that in a clustering problem, the degree of homogeneity/heterogeneity of a cluster is determined not by which elements of each population are present in the cluster, but by the proportions of elements corresponding to the different populations (that is, the intrinsic composition of a cluster). This motivates us to use the representative distributions in the oracle analysis. In case the observations come from $K \:(> 2)$ different distributions, the definitions of the representative distributions can be modified similarly.
\\
\indent Our oracle analysis will assume the knowledge of the empirical distributions, thus converting the unsupervised problem into a supervised problem. This will imply that for any cluster, we will be able to construct its representative distribution. The theoretical analyses of this section shows that the full understanding of the near-perfect clustering achieved by the Algorithms CURBS-I and CURBS-II (see Sections \ref{oracle} and \ref{simulation}) can be effectively obtained through the perspective of representative distributions. More specifically, we provide oracle analogs of the algorithms in Section \ref{methodology} using a version of the MMD distance between representative distributions. These oracle analogs of the Algorithm CURBS-I and CURBS-II will be theoretically shown to have perfect clustering property. Indeed, the algorithms in Section \ref{methodology} should be viewed as data-driven versions of these oracle algorithms. 

\begin{remark} \label{MMDremark}
	One of the key components in the proof of the perfect clustering behaviour of the oracle algorithms is the following remarkable property of MMD. If $F$ and $G$ are two probability distributions and $\alpha,\beta \in [0,1]$, then
\begin{eqnarray}
	d(\alpha F + (1-\alpha)G, \beta F + (1-\beta) G) \ = \ (\alpha - \beta)^2d(F,G). \label{MMDproperty}
\end{eqnarray}
The above equation implies that the square of the MMD between a pair of mixture distributions with two components will always be less than or equal to the square of the MMD between the two component distributions with equality holding if and only if $(\alpha,\beta) = (0,1)$ or $(1,0)$. Indeed, we are not aware of any other probability metric which enjoys this property, and this motivated us to use MMD in our clustering algorithms. Since the representative distributions are also mixture distributions, the clustering algorithms are designed by carefully utilizing the property in \eqref{MMDproperty}. Moreover, all the theoretical results of this paper also heavily rely on the same property.
\end{remark}

\indent Now, we formally describe the oracle algorithms starting with a measure of dissimilarity between two clusters. An oracle version of the $d_w$-distance between two clusters $S_1$ and $S_2$ is given by 
\begin{eqnarray*}
d_{w}^{*}(S_1,S_2) = \frac{|S_1|~|S_2|}{|S_1| + |S_2|}d\left(\widetilde{P}_{S_1},\widetilde{P}_{S_2}\right),
\end{eqnarray*}
where for any set $S$, its cardinality is denoted by $|S|$. Note that the $d_w^*$ measure defined above uses the representative distributions of the two sets whereas the $d_w$-distance uses the empirical distributions of the two sets. The oracle analog of the Algorithm BS, referred to as the Algorithm BS*, is detailed below.\\
\noindent
\small{
\textbf{Algorithm BS*: Oracle Binary Splitting}
	\smallskip
	\hrule
	\smallskip
	\noindent \textbf{Input}: Dataset ${\cal E}$. \\
	Set $C_1^{(0)} = \emptyset$ and $C_2^{(0)} = {\cal E}$.
	\begin{enumerate}
		\item for $r=1,2,\ldots,|{\cal E}|-1$, do
		\item Find an element $c$ from $C_2^{(r-1)}$ which maximizes $d_w^*(C_{1}^{(r-1)} \cup \{c\}, C_{2}^{(r-1)}\backslash\{c\})$.
		\item Set $C_1^{(r)} = C_{1}^{(r-1)} \cup \{c\}$ and $C_2^{(r)} = C_{2}^{(r-1)}\backslash\{c\}$.
		\item endfor
	\end{enumerate}
	\textbf{Output}: $N_{max}^* := \underset{r = 1,2,\dots,|{\cal E}|-1}{\arg\max} d_{w}^*(C_1^{(r)},C_2^{(r)})$,  $C_1^{(N_{max}^*)}$ and $C_2^{(N_{max}^*)}$.
}
\smallskip
\hrule
\vspace{0.05in}
Observe that in Step 2, the choice of the observation $c$ is not unique. Indeed, if an observation from $P_1$ is a maximizer, then so is any other observation from $P_1$ in the set $C_2$. 
We will now discuss the oracle versions of the other proposed algorithms.

\subsection{Oracle version of Algorithm CURBS-I} \label{curbs1_oracle}
\indent In the Algorithm CURBS-I, the question of whether a cluster has observations from a single population was addressed using the Algorithm SCC. In the oracle scenario, single cluster checking can be done simply by investigating the values of $d_w^*$ over different iterations as in the Algorithm BS*. If a cluster contains observations from a single distribution, then the values of $d_w^*$ over all iterations will be zero. Otherwise, the maximum value of $d_w^*$ will be greater than its minimum value. Recall that in the Algorithm SCC, we accepted that the cluster has observations from more than one distribution if $V$ is closer to $R/H$ than $1$. This can now be motivated as follows. Let us define the oracle versions of $R$ and $H$ as given below when the dataset $\mathcal {D}$ consists of $n$ observations: 
\begin{eqnarray}
&& R^* = \left \{
		\begin{array}{l@{\quad} l }
			\frac{N^{*}_{max}(n-N^{*}_{min})}{N^{*}_{min}(n-N^{*}_{max})} & \text{if $N^{*}_{min} < N^{*}_{max}$}\\\\
			\frac{N^{*}_{min}(n-N^{*}_{max})}{N^{*}_{max}(n-N^{*}_{min})} & \text{if $N^{*}_{min} > N^{*}_{max}$}
		\end{array}
		\right . \ \ \mbox{and} \ \
H^* = \frac{N^{*}_{max}(n-N^{*}_{max})}{N^{*}_{min}(n-N^{*}_{min})},		\label{oracle-vrh-1}
\end{eqnarray}
where $C_1^{(r)}$'s and $C_2^{(r)}$'s are as in the Algorithm BS*, and 
\begin{eqnarray}
N^{*}_{max} := \underset{1 \leq r \leq n-1}{\arg\max} d_w^*(C_1^{(r)},C_2^{(r)}) \ \mbox{and} \  N^{*}_{min} := \underset{1 \leq r \leq n-1,\ r \neq N^*_{max}}{\arg\min}d_w^*(C_1^{(r)},C_2^{(r)}).    \label{nmax-nmin-def}
\end{eqnarray}
In case the number of populations $K > 1$, the oracle version of $V$ is defined as 
\begin{eqnarray}
V^* = \frac{\underset{1 \leq r \leq n-1}{\max}d_w^*(C_{1}^{(r)},C_{2}^{(r)})}{\underset{1 \leq r \leq n-1}{\min}d_w^*(C_{1}^{(r)},C_{2}^{(r)})}.		\label{oracle-vrh-2}
\end{eqnarray}

\begin{theorem}\label{lma1}
  Consider a set of $n$ observations of sizes $n_1, n_2, \ldots, n_K$ from $K > 1$ populations. The following properties are true: \\
  (a) $V^* = R^*$, \\
  (b) there exists $n_0 \geq 1$ such that $N^*_{min} = n-1$ for all $n \geq n_0$ provided that $\liminf_{n \rightarrow \infty} (n_i/n) > 0$, for all $i=1,2,\ldots,K$, and \\
  (c) under the condition in part (b), there exists $n^* \geq n_0$ such that $H^* > 1$ for $n \geq n^*$.
\end{theorem}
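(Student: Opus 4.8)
The plan is to pass to the reproducing kernel Hilbert space $\mathcal{H}$ of $k$, analyse the combinatorics of Algorithm~BS* there, and then read off (a)--(c). Write $\Phi_i\in\mathcal{H}$ for the mean embedding $\int k(\cdot,x)\,d\widehat{P}_i(x)$ of $\widehat{P}_i$, and for a cluster $S$ let $\Phi_S=\sum_{i=1}^{K}\lambda_i^S\Phi_i$, where $\lambda_i^S$ is the proportion of observations of $S$ coming from the $i$-th population, so that $\Phi_S$ is the embedding of $\widetilde{P}_S$; put $\bar\Phi=\sum_i (n_i/n)\Phi_i$, the embedding of $\widetilde{P}_{\mathcal D}$. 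Since $d(P,Q)=\|\Phi_P-\Phi_Q\|^2$ and the embedding of a mixture is the corresponding convex combination, the identity of Remark~\ref{MMDremark} generalises to $d(\sum_i\lambda_iF_i,\sum_i\eta_iF_i)=\|\sum_i(\lambda_i-\eta_i)\Phi_{F_i}\|^2$. Because $|C_1^{(r)}|+|C_2^{(r)}|=n$ and $r\Phi_{C_1^{(r)}}+(n-r)\Phi_{C_2^{(r)}}=n\bar\Phi$, this yields, for $1\le r\le n-1$, the ``workhorse identity''
\[
 d_w^*\!\big(C_1^{(r)},C_2^{(r)}\big)=\frac{r(n-r)}{n}\big\|\Phi_{C_1^{(r)}}-\Phi_{C_2^{(r)}}\big\|^2=\frac{nr}{n-r}\big\|\Phi_{C_1^{(r)}}-\bar\Phi\big\|^2=\frac{n(n-r)}{r}\big\|\Phi_{C_2^{(r)}}-\bar\Phi\big\|^2 ,
\]
so that $d_w^*$ at step $r$ depends on $C_1^{(r)}$ only through its composition and that adding an observation of population $j$ replaces $\Phi_{C_1^{(r-1)}}$ by $((r-1)\Phi_{C_1^{(r-1)}}+\Phi_j)/r$.

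The core of the argument, and its main obstacle, is a structural description of the BS* trajectory: $C_1$ is built up one whole population at a time, i.e.\ there is an ordering $\pi(1),\dots,\pi(K)$ of the populations with breakpoints $r_\ell=\sum_{s\le\ell}n_{\pi(s)}$ such that for $r_{\ell-1}<r\le r_\ell$ the set $C_1^{(r)}$ contains all of populations $\pi(1),\dots,\pi(\ell-1)$ plus $r-r_{\ell-1}$ observations of $\pi(\ell)$; moreover $\pi(1)=\arg\max_j\|\Phi_j-\bar\Phi\|$, and the maximiser $N^*_{max}$ of $d_w^*$ along the trajectory is the last breakpoint $r_{K-1}=n-n_{\pi(K)}$, at which $C_2^{(N^*_{max})}$ equals all of population $\pi(K)$. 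I would prove this by induction on $r$ using the workhorse identity. The base case and the step absorbing the first population are clean: when $C_1^{(r-1)}$ is $r-1$ copies of $\pi(1)$, the greedy rule maximises $2(r-1)\langle\Phi_{\pi(1)}-\bar\Phi,\Phi_j-\bar\Phi\rangle+\|\Phi_j-\bar\Phi\|^2$, which for $j=\pi(1)$ equals $(2r-1)\|\Phi_{\pi(1)}-\bar\Phi\|^2$ and dominates every competitor by Cauchy--Schwarz together with the maximality of $\|\Phi_{\pi(1)}-\bar\Phi\|$. The inductive step for later populations is the delicate part: carrying along $|C_1^{(r)}|(\Phi_{C_1^{(r)}}-\bar\Phi)=\sum_{s<\ell}n_{\pi(s)}(\Phi_{\pi(s)}-\bar\Phi)+t(\Phi_{\pi(\ell)}-\bar\Phi)$, one compares continuing with $\pi(\ell)$ against starting a new population by combining the greedy inequality that held when $\pi(\ell)$ was first selected with a further Cauchy--Schwarz estimate; the location of the maximum at $r_{K-1}$ then follows since on the last block $d_w^*(r)=\frac{n(n-r)}{r}\|\Phi_{\pi(K)}-\bar\Phi\|^2$ is strictly decreasing, while on each earlier block the value at its right endpoint can be bounded by $d_w^*(r_{K-1})$.

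Granting this structure, (b) follows quickly. On the first block $d_w^*(r)=\frac{nr}{n-r}\|\Phi_{\pi(1)}-\bar\Phi\|^2$ is strictly increasing, so its minimum there is $d_w^*(1)=\frac{n}{n-1}\|\Phi_{\pi(1)}-\bar\Phi\|^2$; on the last block $d_w^*(r)=\frac{n(n-r)}{r}\|\Phi_{\pi(K)}-\bar\Phi\|^2$ is strictly decreasing, so its minimum there is $d_w^*(n-1)=\frac{n}{n-1}\|\Phi_{\pi(K)}-\bar\Phi\|^2$, and $d_w^*(1)\ge d_w^*(n-1)$ since $\pi(1)$ is farthest from $\bar\Phi$. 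For $r$ strictly between $r_1$ and $r_{K-1}$, both $r$ and $n-r$ exceed $\min(n_{\pi(1)},n_{\pi(K)})\ge\delta n$ for some $\delta>0$ once $n$ is large (the $\liminf$ hypothesis), while $\|\Phi_{C_1^{(r)}}-\Phi_{C_2^{(r)}}\|^2$ stays bounded away from $0$ (there are finitely many configuration types, each with $\Phi_{C_1^{(r)}}\ne\bar\Phi$, and their values converge as the $n_i/n$ stabilise); hence $d_w^*(r)=\Theta(n)$ there, which dwarfs the $O(1)$ endpoint values for $n$ large. Thus for $n\ge n_0$ the minimum of $d_w^*$ over $r\ne N^*_{max}$ is attained at $r=n-1$, i.e.\ $N^*_{min}=n-1$.

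Finally, (a) and (c). By the structure, $C_2^{(N^*_{max})}$ and $C_2^{(N^*_{min})}=C_2^{(n-1)}$ consist only of observations of population $\pi(K)$, hence have the same representative distribution $\widehat{P}_{\pi(K)}$, so that $\|\Phi_{C_2^{(N^*_{max})}}-\bar\Phi\|=\|\Phi_{C_2^{(N^*_{min})}}-\bar\Phi\|$; inserting the third form of the workhorse identity into $V^*=d_w^*(N^*_{max})/d_w^*(N^*_{min})$ cancels these norms and leaves $V^*=\dfrac{(n-N^*_{max})/N^*_{max}}{(n-N^*_{min})/N^*_{min}}=\dfrac{N^*_{min}(n-N^*_{max})}{N^*_{max}(n-N^*_{min})}$, which is precisely $R^*$ in the case $N^*_{min}=n-1>N^*_{max}$, proving (a). For (c), $N^*_{min}=n-1$ gives $H^*=\dfrac{N^*_{max}(n-N^*_{max})}{n-1}$; since $\max_r d_w^*(r)$ is of order $n$ whereas $d_w^*(1),d_w^*(n-1)=O(1)$, the argmax $N^*_{max}$ lies in $\{2,\dots,n-2\}$ for $n$ large, and there $r(n-r)>n-1$, so $H^*>1$ for all $n\ge n^*$.
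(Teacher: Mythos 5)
Your RKHS reformulation is sound and genuinely streamlines the bookkeeping: the identity $d_w^*(C_1^{(r)},C_2^{(r)})=\frac{nr}{n-r}\|\Phi_{C_1^{(r)}}-\bar\Phi\|^2$ is correct, your treatment of parts (b) and (c) follows the same ``interior values are $\Theta(n)$, endpoint values are $O(1)$'' logic that the paper implements through the ratios $u_n/v_n$ and a subsequence argument, and your Cauchy--Schwarz derivation of the first-block greedy step is cleaner than the paper's. However, there is a genuine gap in the structural claim on which your part (a) rests. You assert that $N^*_{max}$ is always the \emph{last} breakpoint $n-n_{\pi(K)}$, so that $C_2^{(N^*_{max})}$ is a pure population and $\widetilde P_{C_2^{(N^*_{max})}}=\widetilde P_{C_2^{(n-1)}}=\widehat P_{\pi(K)}$. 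This is false: the paper's own analysis (equation \eqref{dwcurve-three}, condition \eqref{eq:5}, Remark \ref{remark-nmax*}) shows only that the maximum sits at \emph{some} breakpoint $\sum_{i=1}^{l}n_i$ with $l\in\{1,\dots,K-1\}$, and when \eqref{eq:5} holds --- e.g., balanced $K=3$ with $d(\widehat P_1,\widehat P_2)>4\,d(\widehat P_2,\widehat P_3)$ --- the middle segment is decreasing and the maximum is at $l=1$, where $C_2^{(N^*_{max})}$ still mixes two populations. There $\|\Phi_{C_2^{(N^*_{max})}}-\bar\Phi\|\neq\|\Phi_{\pi(K)}-\bar\Phi\|$ in general (take $n_1=n_2=n_3=m$ with squared MMDs proportional to $d_{12}=5$, $d_{13}=6$, $d_{23}=1$: conditions $(A1)$--$(A3)$ and \eqref{eq:5} all hold and $N^*_{max}=m$), so the cancellation in your computation of $V^*$ does not occur and the proof of (a) collapses. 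The paper's own proof of (a) instead re-expresses \emph{both} clusters at $N^*_{min}$ as mixtures of $P=\widetilde P_{C_1^{(N^*_{max})}}$ and $Q=\widetilde P_{C_2^{(N^*_{max})}}$ and applies the mixture identity \eqref{MMDproperty}; any repair of your argument must go through some such two-distribution reduction rather than through purity of $C_2^{(N^*_{max})}$.

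A second, lesser issue is that you assert the ``one whole population at a time'' trajectory unconditionally and only sketch the inductive step that you yourself flag as delicate. The paper does not get this for free either: it needs the explicit conditions $(A1)$--$(A5)$ (Lemmas 1--3 of the Supplementary Material) and, inside the proof of Theorem \ref{lma1}, simply \emph{assumes} the block structure ``without loss of generality''. To be self-contained you must either complete that induction (the comparison of ``continue with $\pi(\ell)$'' against ``start a new population'' does not follow from Cauchy--Schwarz alone once $C_1$ contains several populations) or impose hypotheses playing the role of $(A1)$--$(A5)$. Relatedly, your identification $\pi(1)=\arg\max_j\|\Phi_j-\bar\Phi\|$ matches the paper's first-step criterion only after observing that $\|\Phi_j-\bar\Phi\|^2=\frac1n\sum_i n_i\,d(\widehat P_j,\widehat P_i)+\mathrm{const}$, which is worth stating since $(A4)$ is written in a different form. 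Parts (b) and (c) survive essentially intact, provided you also justify that $\|\Phi_{C_1^{(r)}}-\bar\Phi\|$ stays bounded away from zero at the intermediate breakpoints (the analogue of the paper's $\liminf_{n}v_n>0$ step), which your ``finitely many configuration types'' remark gestures at but does not prove.
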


The quantities $n_0$ and $n^*$ in the previous theorem depend on the inter-population MMD distances between the $K$ populations and also the proportions of each population present in the sample (see the proof of Theorem \ref{lma1} in the Supplementary Material). Since $R^* > H^*$ (which follows from \eqref{oracle-vrh-1} and \eqref{nmax-nmin-def}), it can be deduced from Theorem \ref{lma1} that when the number of populations $K > 1$, we have
\begin{eqnarray}
|V^* - (R^*/H^*)| = |R^* - (R^*/H^*)| = R^* - (R^*/H^*) < R^* - 1 = V^* - 1 = |V^* - 1|. \label{scc*-validity}
\end{eqnarray}
If the cluster has observations from a single distribution, then the numerator and the denominator of $V^*$ are both equal to zero. In the empirical situation, it is observed that the numerator and the denominator of $V$ are close to each other, and hence, we use the convention that $V^* = 1$ when the observations come from a single distribution. Thus, since $R^* > H^*$, we have $|V^* - (R^*/H^*)| > |V^* - 1|$. These discussions inspire an oracle version of the Algorithm SCC, referred to as the Algorithm SCC*.\\
\bigskip
\noindent
\noindent
\small{
	\textbf{Algorithm SCC*: Oracle Single Cluster Checking}
	\smallskip
	\hrule
	\smallskip
	\noindent \textbf{Input}: Dataset ${\cal E}$. \\
	Set $C_1^{(0)} = \emptyset$ and $C_2^{(0)} = {\cal E}$.
	\begin{enumerate}
		\item for $r=1,2,\ldots,|{\cal E}|-1$, do
		\item Find an element $c$ from $C_2^{(r-1)}$ which maximizes $d_w^*(C_{1}^{(r-1)} \cup \{c\}, C_{2}^{(r-1)}\backslash\{c\})$.
		\item Set $C_1^{(r)} = C_{1}^{(r-1)} \cup \{c\}$ and $C_2^{(r)} = C_{2}^{(r-1)}\backslash\{c\}$.
		\item endfor
	\end{enumerate}
Set $V^* = \underset{1 \leq r \leq |{\cal E}|-1}{\max}d_w^*(C_{1}^{(r)},C_{2}^{(r)})/\underset{1 \leq r \leq |{\cal E}|-1}{\min}d_w^*(C_{1}^{(r)},C_{2}^{(r)})$. \\	
Set $N^{*}_{max} := \underset{1 \leq r \leq |{\cal E}|-1}{\arg\max} d_w^*(C_1^{(r)},C_2^{(r)})$ and $N^{*}_{min} := \underset{1 \leq r \leq |{\cal E}|-1,\ r \neq N^*_{max}}{\arg\min}d_w^*(C_1^{(r)},C_2^{(r)})$. \\
Set $H^* = N^{*}_{max}(|{\cal E}|-N^{*}_{max})/[N^{*}_{min}(|{\cal E}|-N^{*}_{min})]$. \\
If $N^{*}_{min} < N^{*}_{max}$, set $R^* = N^{*}_{max}(|{\cal E}|-N^{*}_{min})/[N^{*}_{min}(|{\cal E}|-N^{*}_{max})]$ else set $R^* = N^{*}_{min}(|{\cal E}|-N^{*}_{max})/[N^{*}_{max}(|{\cal E}|-N^{*}_{min})]$. \\
    \textbf{Output}: If $|V^* - (R^*/H^*)| > |V^* - 1|$ out = 0 (Accept) else out = 1 (Reject).
}
\smallskip
\hrule
\bigskip
\noindent Observe that in view of Theorem \ref{lma1} and the discussion after \eqref{scc*-validity}, the decision made by the Algorithm SCC* is always correct for all sufficiently large sample sizes. We now provide the oracle version of the Algorithm CURBS-I, referred to as the Algorithm CURBS-I*, by applying the Algorithms BS* and SCC* repeatedly.
\\
\bigskip
\noindent
\noindent
\small{
	\textbf{Algorithm CURBS-I*}
	\smallskip
	\hrule
	\smallskip
	\noindent \textbf{Input}: Dataset ${\cal D}$. 
	\begin{enumerate}
		\item Apply Algorithm SCC* to ${\cal D}$.
		\item If Accepted, END, else apply Algorithm BS* to ${\cal D}$.
		\item Apply Algorithm SCC* to the output clusters obtained. If Algorithm SCC* is Accepted for a cluster, keep that cluster unchanged, else apply Algorithm BS* on that cluster and obtain two new clusters.
		\item Repeat Step 3 until Algorithm SCC* is Accepted for each of the clusters obtained. 
	\end{enumerate}
	\textbf{Output}: $K^*$ and the clusters $C_1,C_2,\dots,C_{K^*}$, where $K^*$ is the number of clusters finally obtained.}
\smallskip
\hrule

\medskip
\indent Next, we discuss the behaviour of the Algorithm CURBS-I* for different values of the number of populations $K$. The backbone of this analysis is the understanding of the  values of $d_w^*$ over different iterations as in the Algorithm BS*. \\
\noindent \underline{Case 1 ($K = 2$)}: Denote the two empirical distributions by $\widehat{P}_1$ and $\widehat{P}_2$, which have been constructed based on samples of sizes $n_1$ and $n_2$, respectively. Denote $n = n_1 + n_2$. As in the Algorithm BS*, if we start with two sets $C_1$ and $C_2$ such that $C_1$ is empty and $C_2$ contains all the observations, then an observation from $\widehat{P}_1$ will be transferred to $C_1$ in the first iteration if and only if $n_1 \leq n_2$ (See part $(a)$ of Lemma 1 in the Supplementary Material). Without loss of generality, let us assume that an observation from $\widehat{P}_1$ is transferred to $C_1$ in the first iteration, and as such $n_1 \leq n_2$. Then, till the $n_1$-th iteration, observations only from $\widehat{P}_1$ will be transferred to $C_1$ at every iteration and the $d_w^*$ curve will keep on increasing (See part $(a)$ of Lemma 1 in the Supplementary Material). 
In fact, it can be shown that 
   \begin{eqnarray}
    d_w^*(C_{1}^{(r)},C_{2}^{(r)}) = \left \{
  \begin{array}{l@{\quad} l }
  \frac{rn_2^2}{n(n-r)}d(\widehat{P}_1,\widehat{P}_2) ; & \text{if $r = 1, 2, \dots, n_1$},\\\\
  \frac{n_1^2(n-r)}{rn}d(\widehat{P}_1,\widehat{P}_2) ; & \text{if $r = n_1 + 1, n_1 + 2, \dots, n$}.
  \end{array}
  \right . \label{dwcurve-two}
  \end{eqnarray}
We have plotted the $d_w^*$ curve (given by \eqref{dwcurve-two}) in plot (a) of Figure \ref{fig:curves} (dashed curve) with the value of $d(\widehat{P}_1,\widehat{P}_2)$ as given by Model (2a) in Section \ref{simulation} for sample sizes $n_1 = n_2 = 100$. We have also provided the $d_w$ curve for the above model (solid curve) in the same plot, and it is seen that the two curves are quite similar in shape. This similarity plays a crucial role in the near-perfect clustering performance of the Algorithm CURBS-I observed for the above model (see Table \ref{tab1}).

\begin{figure}[ht]
	\centering
	\includegraphics[width=0.8\linewidth,height=0.4\textheight]{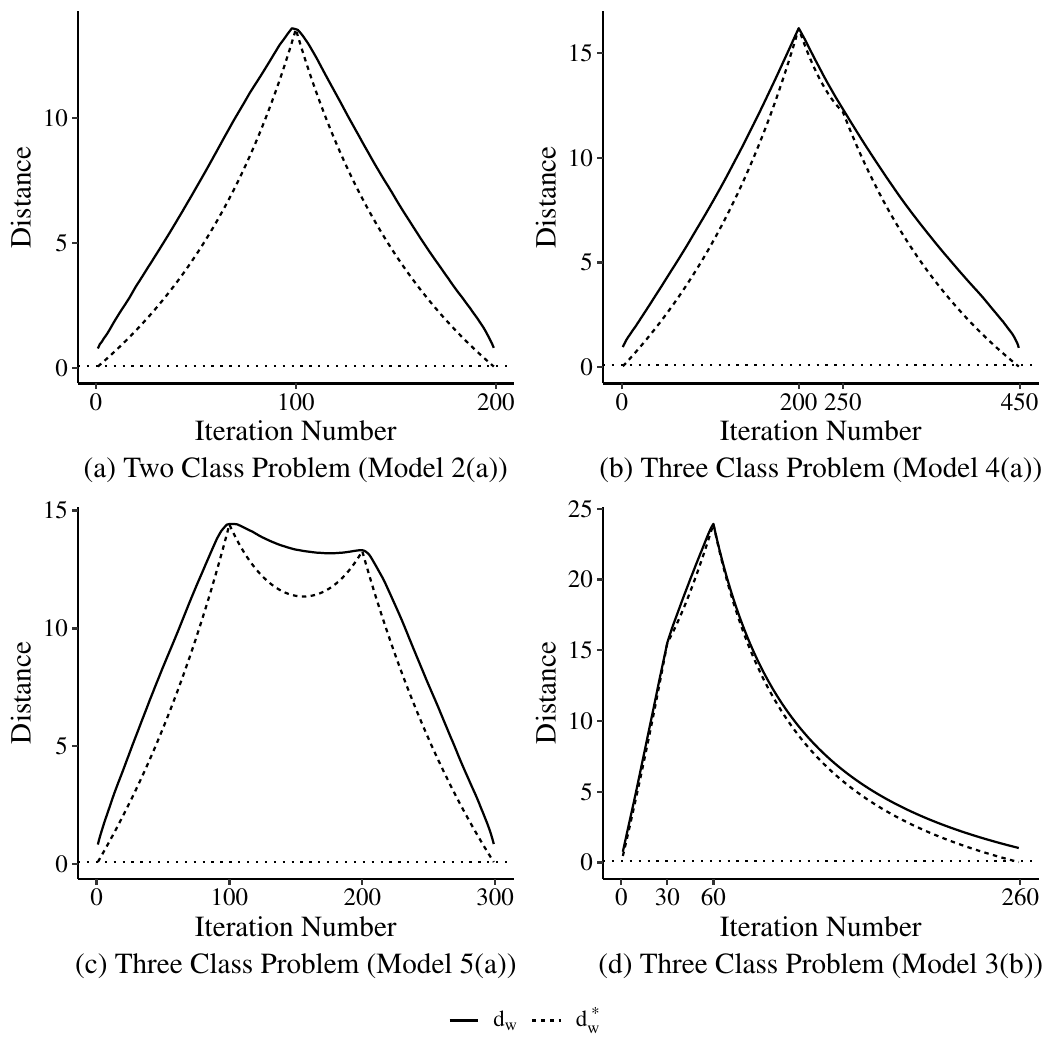}
	\caption{Curves of $d_w$ and $d_w^*$}
	\label{fig:curves}
\end{figure}
\noindent
\underline{Case 2 ($K = 3$)}: Denote the three empirical distributions by $\widehat{P}_1$, $\widehat{P}_2$ and $\widehat{P}_3$, which have been constructed based on samples of sizes $n_1$, $n_2$ and $n_3$, respectively. Denote $n = n_1 + n_2 + n_3$. We start with two sets $C_1$ and $C_2$ such that $C_1$ is empty and $C_2$ contains all observations and continue transferring the observations from $C_2$ to $C_1$. An observation from $\widehat{P}_1$ will be transferred to $C_1$ in the first iteration if and only if the following two conditions hold (see part $(a)$ of Lemma 2 in the Supplementary Material). \\
$(A1)$ ($n_2 - n_1$)$d(\widehat{P}_1,\widehat{P}_2) + n_3$\{$d(\widehat{P}_1,\widehat{P}_3) - d(\widehat{P}_2,\widehat{P}_3)$\} $\geq 0$,
\\
$(A2)$ ($n_3 - n_1$)$d(\widehat{P}_1,\widehat{P}_3) + n_2$\{$d(\widehat{P}_1,\widehat{P}_2) - d(\widehat{P}_2,\widehat{P}_3)$\} $\geq 0$.
\\
Under the same two conditions $(A1)$ and $(A2)$, it can be ensured that till the $n_1$-th iteration, observations only from $\widehat{P}_1$ will be transferred to $C_1$ at each iteration and the $d_w^*$ curve will keep on increasing (see part $(a)$ of Lemma 2 in the Supplementary Material). In the balanced case (when $n_1 = n_2 = n_3$), the conditions $(A1)$ and $(A2)$ boils down to requiring $$d(\widehat{P}_2,\widehat{P}_3) \leq \min\{d(\widehat{P}_1,\widehat{P}_2),d(\widehat{P}_1,\widehat{P}_3)\},$$ 
i.e., the distributions $\widehat{P}_2$ and $\widehat{P}_3$ are closer to each other than to  $\widehat{P}_1$. This also explains why at the first step $\widehat{P}_1$ gets separated from the cluster containing observations from $\widehat{P}_2$ and $\widehat{P}_3$. Next, assuming conditions $(A1)$ and $(A2)$, an observation from $\widehat{P}_2$ will be transferred to $C_1$ after the $n_1$-th iteration if and only if the following condition holds (see part $(b)$ of Lemma 2 in the Supplementary Material).
\\
$(A3)$ $n_1(n_2 + n_3 - 1)\{d(\widehat{P}_1,\widehat{P}_3) - d(\widehat{P}_1,\widehat{P}_2)\} + (n_3 - n_2)(n_1 + 1)d(\widehat{P}_2,\widehat{P}_3)$ $\geq 0$.    
\\
It can also be shown that under the assumptions $(A1)$, $(A2)$ and $(A3)$, all the observations from $\widehat{P}_2$ will be transferred to $C_1$ in the subsequent iterations till the $(n_1+n_2)$-th iteration (see part $(b)$ of Lemma 2 in the Supplementary Material). In the balanced case, the above condition reduces to requiring $d(\widehat{P}_1,\widehat{P}_3) \geq d(\widehat{P}_1,\widehat{P}_2)$, i.e., $\widehat{P}_2$ is closer to $\widehat{P}_1$ than $\widehat{P}_3$.  This condition is expected since $\widehat{P}_2$ being closer to $\widehat{P}_1$ than $\widehat{P}_3$ would naturally imply that it should be the group that first gets merged to $\widehat{P}_1$. Further, one can explicitly derive the form of the $d_w^*$ curve over all iterations (see Lemma 2 in the Supplementary Material), namely, 
 \begin{align}
	&d_w^*(C_{1}^{(r)},C_{2}^{(r)}) \nonumber \\
	&= \left \{    
	\begin{array}{l@{\quad} l }
		\frac{r}{n(n-r)}\{n_2(n_2+n_3)d(\widehat{P}_1,\widehat{P}_2)+n_3(n_2+n_3)d(\widehat{P}_1,\widehat{P}_3)-n_2n_3d(\widehat{P}_2,\widehat{P}_3)\} ; \quad \text{if $r = 1, 2, \dots, n_1$},\\\\
		\frac{nn_1-r(n_1+n_3)}{n}\{\frac{n_1}{r}d(\widehat{P}_1,\widehat{P}_2) - \frac{n_3}{n-r}d(\widehat{P}_2,\widehat{P}_3)\}+\frac{n_1n_3}{n}d(\widehat{P}_1,\widehat{P}_3) ; \quad  \text{if $r = n_1 + 1, \dots, n_1 + n_2$},\\\\     \label{dwcurve-three}
		\frac{n-r}{rn}\{n_2(n_1+n_2)d(\widehat{P}_2,\widehat{P}_3)+n_1(n_1+n_2)d(\widehat{P}_1,\widehat{P}_3)-n_1n_2d(\widehat{P}_1,\widehat{P}_2)\} ; \quad \text{if $r = n_1 + n_2 + 1, \dots, n$}.
	\end{array}
	\right. 
\end{align}
As mentioned earlier, the above formula reveals that the $d_w^*$ curve keeps on increasing till the $n_1$-th iteration. Moreover, the $d_w^*$ curve keeps on decreasing after the $(n_1+n_2)$-th iteration. Unlike the case $K = 2$, the $d_w^*$ curve may increase or decrease between the $n_1$-th and the $(n_1+n_2)$-th iterations depending on the sample sizes and the values of the pairwise MMD between the three empirical distributions. It can be shown that between the $n_1$-th and the $(n_1+n_2)$-th iterations, the $d_w^*$ curve will be a convex function of the iteration number $r$ (See part $(b)$ of Lemma 2 in the Supplementary Material). From \eqref{dwcurve-three}, doing some simple algebraic calculations, it can be shown that for $r=n_1+1,n_1+2,\dots,n_1+n_2$,
\begin{align} \label{eq:62}
    d_w^*(C_{1}^{(r)},C_{2}^{(r)})-d_w^*(C_{1}^{(r-1)},C_{2}^{(r-1)}) = \frac{n_3^2}{(n-r)(n-r+1)}d(\widehat{P}_2,\widehat{P}_3)-\frac{n_1^2}{r(r-1)}d(\widehat{P}_1,\widehat{P}_2), 
\end{align}
which is an increasing function of $r$. Thus, one possible scenario is when it is increasing, and this holds if and only if $d_w^*(C_{1}^{(n_1+1)},C_{2}^{(n_1+1)})>d_w^*(C_{1}^{(n_1)},C_{2}^{(n_1)})$ (as the right hand side of \eqref{eq:62} is an increasing function of iteration number). One can also write the previous inequality in the following equivalent form:
\begin{align}\label{eq:4}
    \frac{n_1}{n_1+1}d(\widehat{P}_1,\widehat{P}_2) < \frac{n_3^2}{(n_2+n_3)(n_2+n_3-1)}d(\widehat{P}_2,\widehat{P}_3).
\end{align}
Observe that in the balanced case, the above relation \eqref{eq:4} cannot hold in view of the Assumptions $(A1)$, $(A2)$ and $(A3)$. Thus, the $d_w^*$ curve cannot increase after the $n_1$-th iteration in the balanced case. Another possible scenario is when the $d_w^*$ curve is decreasing after the $n_1$-th iteration which holds if and only if $d_w^*(C_{1}^{(n_1+n_2)},C_{2}^{(n_1+n_2)})<d_w^*(C_{1}^{(n_1+n_2-1)},C_{2}^{(n_1+n_2-1)})$ (as the right hand side of \eqref{eq:62} is an increasing function of iteration number). The following condition is equivalent to the previous inequality:
\begin{align}\label{eq:5}
    \frac{n_3}{n_3+1}d(\widehat{P}_2,\widehat{P}_3) < \frac{n_1^2}{(n_1+n_2)(n_1+n_2-1)}d(\widehat{P}_1,\widehat{P}_2).
\end{align}
In the balanced case, the above relation \eqref{eq:5} holds if and only if $d(\widehat{P}_1,\widehat{P}_2) > 4d(\widehat{P}_2,\widehat{P}_3)$, using $n_3+1 \approx n_3$ and $n_1+n_2+1 \approx n_1+n_2$.
In general, if neither of the conditions \eqref{eq:4} and \eqref{eq:5} holds, the $d_w^*$ curve will be a U-shaped curve between the $n_1$-th and the $(n_1+n_2)$-th iterations. \\
\indent We have plotted the $d_w^*$ curve (given by \eqref{dwcurve-three}) with the values of the square MMDs between the three empirical distributions as given by each of the three models, namely, Model (3a), Model (5a)  and Model (3b) given in Section \ref{simulation}  (see the plots (b), (c) and (d) of Figure \ref{fig:curves}). The sample sizes for Model (3a) and Model (5a) are taken to be $n_1 = n_2 = n_3  = 100$, while for Model (3b), we have taken $n_1 = 30, n_2 = 30, n_3 = 200$. In Model (3a), neither of the conditions \eqref{eq:4} and \eqref{eq:5} holds, and therefore, $d_w^*$ will be U-shaped between the $n_1$-th and the $(n_1+n_2)$-th iterations. The populations chosen for Model (5a) satisfies condition \eqref{eq:5}, and hence the $d_w^*$ curve will decrease between the $n_1$-th and the $(n_1+n_2)$-th iterations. The unequal sample sizes chosen for Model (3b) ensures that the condition \eqref{eq:4} is satisfied, in which case, the $d_w^*$ curve will be increasing between the $n_1$-th and the $(n_1+n_2)$-th iterations. We have also plotted the $d_w$ curves corresponding to each of the above models with their respective $d_w^*$ curves. The marked similarity in shapes of plays a crucial role in the near-perfect clustering performance of the Algorithm CURBS-I (see Table \ref{tab1}).\\
\noindent \underline{Case 3 ($K > 3$)}:  Denote the $K$ empirical distributions by $\widehat{P}_1, \dots, \widehat{P}_K$, which have been constructed based on samples of sizes $n_1, \dots, n_K$, respectively. We start with two sets $C_1$ and $C_2$ such that $C_1$ is empty and $C_2$ contains all observations and continue transferring the observations from $C_2$ to $C_1$. Consider the iteration where $C_1$ contains observations from $\widehat{P}_1, \ldots, \widehat{P}_m$ for some $1 \leq m \leq K-2$ (without loss of generality). A crucial observation is that the analysis of the next iteration can be viewed from the perspective of a three-population problem considered earlier: the mixtures of the distributions in $C_1$ plays the role of the first population, the distribution of the candidate to be transferred from $C_2$ to $C_1$ plays the role of the second population, and the mixture of the remaining distributions in $C_2$ plays the role of the third population. This analogy with the three-population problem is also exploited to show that, at any step, the Algorithm BS* will produce two clusters such that for each $\widehat{P}_j$, observations from this distribution will be contained in exactly one of the two clusters (see Lemma 3 in the Supplementary Material). Consequently, any cluster containing observations from a single distribution will no longer be split, while clusters containing observations from more than one population will continue to be split so long as each of the final clusters contain observations from a single distribution.
 \par

An observation from $\widehat{P}_1$ will be transferred to $C_1$ in the first iteration if and only if the following condition holds\\
 $(A4)$ $(n^2-n-nn_1+1)\sum_{i=1}^{K}n_{i}d(\widehat{P}_1,\widehat{P}_i) > (n^2-n-nn_j+1)\sum_{i=1}^{K}n_{i}d(\widehat{P}_j,\widehat{P}_i)$ for all $j=2,3,\dots,K$.
 \\
Under the condition $(A4)$, it can be ensured that till the $n_1$-th iteration, observations only from $\widehat{P}_1$ will be transferred to $C_1$ at each iteration, and the $d_w^*$ curve will keep on increasing. In the balanced case (when $n_1=n_2=\cdots=n_K$), the condition $(A4)$ reduces to requiring $\sum_{i=1}^{K}d(\widehat{P}_1,\widehat{P}_i) > \sum_{i=1}^{K}d(\widehat{P}_l,\widehat{P}_i)$ for all $l=2,3,\dots,K$.  For $l=1,2,\dots,K-2$, after transferring all observations of $\widehat{P}_1, \widehat{P}_2, \dots, \widehat{P}_{l}$ to $C_1$ (without any loss of generality) till iteration $n_1 + n_2 + \cdots + n_l$, an observation from $\widehat{P}_{l+1}$ will be transferred to $C_1$ in the next iteration if and only if the following condition holds:
\\
$(A5)$  $n(\sum_{i=l+1}^{K}n_i-1)\sum_{i=1}^{l}n_i\{d(\widehat{P}_i,\widehat{P}_j)-d(\widehat{P}_i,\widehat{P}_{l+1})\}+(\sum_{i=1}^{l}n_i+1)^{2}\sum_{i=l+1}^{K}n_{i}\{(n_j-1)d(\widehat{P}_i,\widehat{P}_j)-(n_{l+1}-1)d(\widehat{P}_i,\widehat{P}_{l+1})\}+(\sum_{i=1}^{l}n_{i}+1)(\sum_{i=l+1}^{K}n_{i}-1)\sum_{i=l+1}^{K}n_{i}\{d(\widehat{P}_i,\widehat{P}_{l+1})-d(\widehat{P}_i,\widehat{P}_j)\} > 0$ for all $j=l+2,l+3,\dots,K$.
\\
Under the same assumptions, it can also be shown that observations only from $\widehat{P}_{l+1}$ will be transferred to $C_1$ at every iteration until the $(n_1 + n_2 + \cdots + n_{l+1})$-th iteration. In the balanced case, using $\sum_{i=l+1}^{K}n_i -1 \approx \sum_{i=l+1}^{K}n_i$ and $\sum_{i=1}^{l}n_i+1 \approx \sum_{i=1}^{l}n_i$ in $(A5)$, the condition $(A5)$ reduces to the following condition:\\ $(nl+K)(nl+K-K^{2})\sum_{i=l+1}^{K}\{d(\widehat{P}_i,\widehat{P}_{l+1})-d(\widehat{P}_i,\widehat{P}_j)\}+K^{2}\{n(K-l)-K\}\sum_{i=1}^{l}\{d(\widehat{P}_i,\widehat{P}_j)-d(\widehat{P}_i,\widehat{P}_{l+1})\}>0$ for all $j=l+2,l+3,\dots,K$.\\ 
The above condition is natural since an observation from $\widehat{P}_{l+1}$ will be transferred to $C_1$ in the $(n_1 + n_2 + \cdots + n_l + 1)$-th iteration provided that it is closer to the mixture of $\widehat{P}_1,\widehat{P}_2,\dots,\widehat{P}_l$ than the mixture of $\widehat{P}_{l+2},\dots,\widehat{P}_{K}$.\\
One can explicitly derive the form of the $d_w^*$ curve over all iterations, namely,
 \begin{align}
	&d_w^*(C_{1}^{(r)},C_{2}^{(r)}) \nonumber \\
	&= \left \{    
	\begin{array}{l@{\quad} l }
		\frac{r(n-n_1)^2}{n(n-r)}\{\frac{1}{n-n_1}\sum_{i=2}^{K}n_{i}d(\widehat{P}_1,\widehat{P}_i)-\frac{1}{2(n-n_1)^2}\sum_{i,j=2}^{K}n_{i}n_{j}d(\widehat{P}_i,\widehat{P}_j)\} ; \quad \text{if $r = 1, 2, \dots, n_1$},\\\\
		\frac{n\left(\sum_{i=1}^{l}n_i \right)-r(n-n_{l+1})}{n}\left\{\frac{\sum_{i=1}^{l}n_i}{r}d\left(\frac{\sum_{i=1}^{l}n_{i}\widehat{P}_i}{\sum_{i=1}^{l}n_i},\widehat{P}_{l+1} \right) - \frac{\sum_{i=l+2}^{K}n_i}{n-r}d\left(\widehat{P}_{l+1},\frac{\sum_{i=l+2}^{K}n_{i}\widehat{P}_i}{\sum_{i=l+2}^{K}n_i}\right)\right\}\\+\frac{\left(\sum_{i=1}^{l}n_i \right)\left(\sum_{i=l+2}^{K}n_i \right)}{n}d\left(\frac{\sum_{i=1}^{l}n_{i}\widehat{P}_i}{\sum_{i=1}^{l}n_i},\frac{\sum_{i=l+2}^{K}n_{i}\widehat{P}_i}{\sum_{i=l+2}^{K}n_i}\right) ; \quad\quad  \text{if $r = \sum\limits_{i=1}^{l}n_i + 1, \dots, \sum_{i=1}^{l+1}n_i$ for $l=1,2,\dots,K-2$},\\\\     \label{dwcurve-morethanthree}
		\frac{(n-r)}{rn}\{(n-n_K)\sum_{i=1}^{K-1}n_{i}d(\widehat{P}_{i},\widehat{P}_{K})-\frac{1}{2}\sum_{i,j=1}^{K-1}n_{i}n_{j}d(\widehat{P}_i,\widehat{P}_j)\} ; \quad \text{if $r = \sum_{i=1}^{K-1}n_i + 1, \dots, n$}.
	\end{array}
	\right. 
\end{align}
As mentioned earlier, the above formula reveals that the $d_w^*$ curve keeps on increasing till the $n_1$-th iteration and keeps on decreasing after the $(\sum_{i=1}^{K-1}n_i)$-th iteration. Further, it can be shown that for each $l=1,2,\dots,K-2$, the $d_w^*$ curve will be a convex function of iteration number $r$ between the $(\sum_{i=1}^{l-1}n_i)$-th and the $(\sum_{i=1}^{l}n_i)$-th iterations.
 
For $l=1,2,\dots,K-2$, one possible scenario is that the $d_w^*$ curve is increasing between the $(\sum_{i=1}^{l}n_{i})$-th and  the $(\sum_{i=1}^{l+1}n_{i})$-th iterations. This holds if and only if the following condition is true:
\begin{align} \label{eq:38}
    \frac{\sum_{i=1}^{l}n_i}{\sum_{i=1}^{l}n_i+1}d\left(\frac{\sum_{i=1}^{l}n_{i}\widehat{P}_{i}}{\sum_{i=1}^{l}n_i},\widehat{P}_{l+1}\right) < \frac{(\sum_{i=l+2}^{K}n_j)^2}{(\sum_{i=l+1}^{K}n_j)(\sum_{i=l+1}^{K}n_j-1)}d\left(\widehat{P}_{l+1},\frac{\sum_{j=l+2}^{K}n_{j}\widehat{P}_{j}}{\sum_{i=l+2}^{K}n_j}\right).
\end{align}
Another possible scenario is when $d_w^*$ curve is linear and decreasing between  the $(\sum_{i=1}^{l}n_{i})$-th and  the $(\sum_{i=1}^{l+1}n_{i})$-th iterations, and this holds if and only if the following condition is true:
\begin{align} \label{eq:39}
    \frac{\sum_{i=l+2}^{K}n_j}{\sum_{i=l+2}^{K}n_j+1}d\left(\widehat{P}_{l+1},\frac{\sum_{j=l+2}^{K}n_{j}\widehat{P}_{j}}{\sum_{i=l+2}^{K}n_j}\right)<\frac{(\sum_{i=1}^{l}n_i)^2}{(\sum_{i=1}^{l+1}n_i)(\sum_{i=1}^{l+1}n_i-1)}d\left(\frac{\sum_{i=1}^{l}n_{i}\widehat{P}_{i}}{\sum_{i=1}^{l}n_i},\widehat{P}_{l+1}\right).
\end{align}
If neither \eqref{eq:38} nor \eqref{eq:39} hold, the $d_w^*$ curve will be a U-shaped curve between  the $(\sum_{i=1}^{l}n_{i})$-th and the $(\sum_{i=1}^{l+1}n_{i})$-th iterations. Hence, the $d_w^*$ curve attains maximum at iteration $r=\sum_{i=1}^{l}n_{i}$ for some $l \in \{1,2,\dots,K-1\}$.

\begin{remark} \label{remark-nmax*}
Now, we discuss the possible values of $N^*_{max}$ for difference choices of $K$. \\
\underline{Case 1 ($K = 2$)}: It follows from the discussion of the behaviour of Algorithm BS* for $K = 2$ that $N^*_{max} = \min\{n_1,n_2\}$. \\
\underline{Case 2 ($K = 3$)}: If we assume that an observation from $\widehat{P}_1$ is transferred to $C_1$ in the first iteration, and an observation from $\widehat{P}_2$ is transferred to $C_1$ after the $n_1$-th iteration, then it follows from \eqref{dwcurve-three} that $N^*_{max} \geq n_1$ and $N^*_{max} \leq n_1+n_2$.  \\
\underline{Case 3 ($K > 3$)}: If we assume that an observation from $\widehat{P}_1$ is transferred to $C_1$ in the first iteration, and an observation from $\widehat{P}_2$ is transferred to $C_1$ after the $n_1$-th iteration, then it follows from \eqref{dwcurve-morethanthree} that $N^*_{max} \geq n_1$ and $N^*_{max} \leq \sum_{i=1}^{K-1}n_i$.
\end{remark}

\indent Now, let us define the perfectness of a clustering algorithm and then make the formal statement about the perfect clustering property of the Algorithm CURBS-I*.
\begin{definition}
	\label{def:perfect}
	For any fixed $J$, let $C_{1}, C_2, \ldots, C_{J}$ be $J$ clusters obtained using a clustering algorithm on ${\cal D} = \cup_{j=1}^{K}{\cal D}_{j}$, which consists of observations from $K$ populations with ${\cal D}_j$ denoting the set of observations from the $j$-th population. We call the algorithm Perfect if $J = K$ and there exists a permutation $\pi$ of $\{1, 2, \ldots, K\}$ such that $C_{j} = \mathcal{D}_{\pi(j)}$ for each $j = 1,2,\ldots,K$.
\end{definition}
\begin{theorem} \label{curbs1_perfect}
      Consider a dataset $\cal D$ of $n$ observations of sizes $n_1, n_2, \ldots, n_K$ from $K > 1$ populations. Assume that $\liminf\limits_{n\rightarrow\infty} (n_{i}/n) >0$ for all $i=1,2,\dots,K$. Then there exists $n^* \geq 1$ such that for any $n\geq n^*$, the Algorithm CURBS-I* is Perfect.
\end{theorem}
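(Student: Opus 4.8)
The plan is to run an induction along the recursion tree of the Algorithm CURBS-I*, combining two facts proved earlier: (i) a splitting step BS* never breaks a population apart and always produces a proper refinement, and (ii) once $n$ is large enough, the Algorithm SCC* returns the correct verdict on every cluster that can arise — Accept exactly when the cluster consists of a single population. Granting these, Perfectness is a bookkeeping argument: the algorithm keeps refining the partition into ever finer unions of complete population groups until every piece is a single $\mathcal D_j$.

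First I would record the structural behaviour of BS*. Call a set \emph{clean} if it equals $\bigcup_{j\in S}\mathcal D_j$ for some $S\subseteq\{1,\dots,K\}$; the input $\mathcal D$ is clean. Suppose BS* is applied to a clean cluster $\mathcal E=\bigcup_{j\in S}\mathcal D_j$ with $|S|\ge 2$. By Lemma 3 in the Supplementary Material (equivalently, the explicit $K=2,3$ analyses together with the reduction to the three-population problem for larger $K$), every $\mathcal D_j$ with $j\in S$ lies entirely in one of the two output clusters $C_1^{(N^*_{max})}$ and $C_2^{(N^*_{max})}$, so both outputs are clean with population index sets forming a partition $S_1\sqcup S_2=S$. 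Moreover the split is proper: relabelling the populations of $\mathcal E$ so that the first population transferred to $C_1$ has size $n_1$ and the last has size $n_{\mathrm{last}}$, Remark~\ref{remark-nmax*} gives $n_1\le N^*_{max}\le|\mathcal E|-n_{\mathrm{last}}$, whence $C_1^{(N^*_{max})}$ contains all of the first population and $C_2^{(N^*_{max})}$ contains all of the last; thus $1\le|S_1|,|S_2|\le|S|-1$, i.e.\ the split strictly decreases the number of populations in each child.

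Next I would fix the threshold $n^*$. For each $S\subseteq\{1,\dots,K\}$ with $|S|\ge 2$, put $\mathcal E_S=\bigcup_{j\in S}\mathcal D_j$; since $\liminf_n n_j/|\mathcal E_S|\ge\liminf_n n_j/n>0$ for every $j\in S$, Theorem~\ref{lma1} and the discussion around \eqref{scc*-validity} apply to $\mathcal E_S$ and provide a size threshold beyond which $V^*=R^*$, $H^*>1$, and consequently $|V^*-R^*/H^*|<|V^*-1|$, i.e.\ SCC* \emph{rejects} $\mathcal E_S$. There are only finitely many such $S$ and each $|\mathcal E_S|$ grows linearly in $n$, so there is a single $n^*$ such that for all $n\ge n^*$: (a) every clean cluster containing at least two populations is rejected by SCC*; and (b) each $\mathcal D_j$ has at least three observations so that SCC* is well defined on it, and since all values of $d_w^*$ vanish on a single population, the convention $V^*=1$ together with $R^*/H^*>1$ (which follows from $N^*_{min}\ne N^*_{max}$ and the defining formulas \eqref{oracle-vrh-1}) gives $|V^*-R^*/H^*|>|V^*-1|$, i.e.\ SCC* \emph{accepts} $\mathcal D_j$. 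Hence, for $n\ge n^*$, the Algorithm SCC* accepts a cluster occurring in the algorithm if and only if that cluster is a single population group.

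Finally I would assemble the argument. Fix $n\ge n^*$. By induction every cluster produced by CURBS-I* is clean: $\mathcal D$ is clean, and a clean cluster is either left unchanged by SCC* or, being then rejected and therefore multi-population by part (a) above, split by BS* into two clean sub-clusters as in the second paragraph. Since $K\ge 2$, SCC* rejects $\mathcal D$ and BS* is applied; thereafter each round keeps every single-population cluster untouched and properly splits each multi-population one, and because every split strictly reduces the population count inside the children, the recursion uses at most $K-1$ applications of BS* and terminates. At termination every cluster is accepted by SCC*, hence equals some $\mathcal D_j$; as the clusters are disjoint and cover $\mathcal D=\bigcup_{j=1}^K\mathcal D_j$, they are $\mathcal D_1,\dots,\mathcal D_K$ up to a permutation, so $K^*=K$ and CURBS-I* is Perfect in the sense of Definition~\ref{def:perfect}. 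The substantive content is carried by Lemma 3 and Theorem~\ref{lma1}; the only genuinely new step is ensuring that $n^*$ can be chosen uniformly over the at most $2^K$ possible population compositions of the intermediate clusters, which is routine because $K$ is fixed and every such composition sits inside a cluster whose size grows linearly in $n$ — this uniformity, rather than any single hard estimate, is the main point to get right.
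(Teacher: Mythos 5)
Your proposal is correct and follows essentially the same route as the paper: it rests on the same two pillars (the supplementary lemmas showing BS* transfers observations population-by-population so that $N^*_{max}$ falls at a population boundary and the split is ``clean,'' and Theorem~\ref{lma1} guaranteeing SCC* decides correctly for large $n$), with the recursion then being bookkeeping. Your explicit treatment of the uniformity of $n^*$ over all intermediate clusters $\mathcal{E}_S$ is a point the paper's proof passes over silently, but it does not change the substance of the argument.
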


\subsection{Oracle version of Algorithm CURBS-II}
The oracle version of the Algorithm CURBS-II, referred to as the Algorithm CURBS-II*, that uses the Algorithm BS* repeatedly along with some merging steps, are detailed below.  \\\\
\noindent
\small{
	\textbf{Algorithm CURBS-II*}
	\smallskip
	\hrule
	\smallskip
	\noindent \textbf{Input}: Dataset ${\cal D}$ and specified number $J ~(> 1)$ of clusters
	\begin{enumerate}
		\item Apply Algorithm BS* to ${\cal D}$.
		\item If $J=2$ then END, else set $i=2$.
		\item Apply Algorithm BS* to each of the $i$ clusters obtained.
		\item Compute $d_w^*$-distances between all the $2i$ clusters obtained in Step 3 and arrange them in the increasing order of magnitude.
		\item Merge those $(i-1)$ pair(s) of clusters obtained in Step 3 having the least $d_w^*$-distances. 
		\item  If $i=J-1$ END, else update $i=i+1$ and go to Step 3.
	\end{enumerate}
	\textbf{Output}: The clusters $C_1,C_2,\ldots,C_{J}$.
}
\smallskip
\hrule

\smallskip

\vspace{0.3cm}
Algorithm CURBS-II* has an order preserving property in the oracle scenario. The concept of Perfect and Order Preserving (POP) clustering algorithm was introduced in \citep{sarkar2019perfect}. 
  \begin{definition} \label{pop}
      For any fixed $J$, let $G_1^J$, $\dots$, $G_J^J$ be $J$ clusters estimated using a clustering algorithm on $\mathcal{D} = \cup_{i=1}^{K}\mathcal{D}_{i}$, which consists of observations from $K$ populations. We call the algorithm Perfect and Order Preserving (POP) at $K$ if the following conditions hold.
      \begin{enumerate}[label=\alph*.]
      \item For $J = K$, the clustering algorithm is perfect.
      \item For any $J < K$ and for every $i = 1, 2, \ldots, K$, there exists a unique $l \leq J$ such that $\mathcal{D}_{i}$ $\subseteq$ $G_{l}^{J}$.
      \item For any $J > K$ and for every $l = 1, 2, \ldots, J$, there exists a unique $i \leq K$ such that $G_{l}^{J}$ $\subseteq$ $\mathcal{D}_{i}$.
      \end{enumerate}
  \end{definition}
  \begin{figure}[h]
  \centering
      \includegraphics[width = 0.7\textwidth, height = 0.4\textheight]{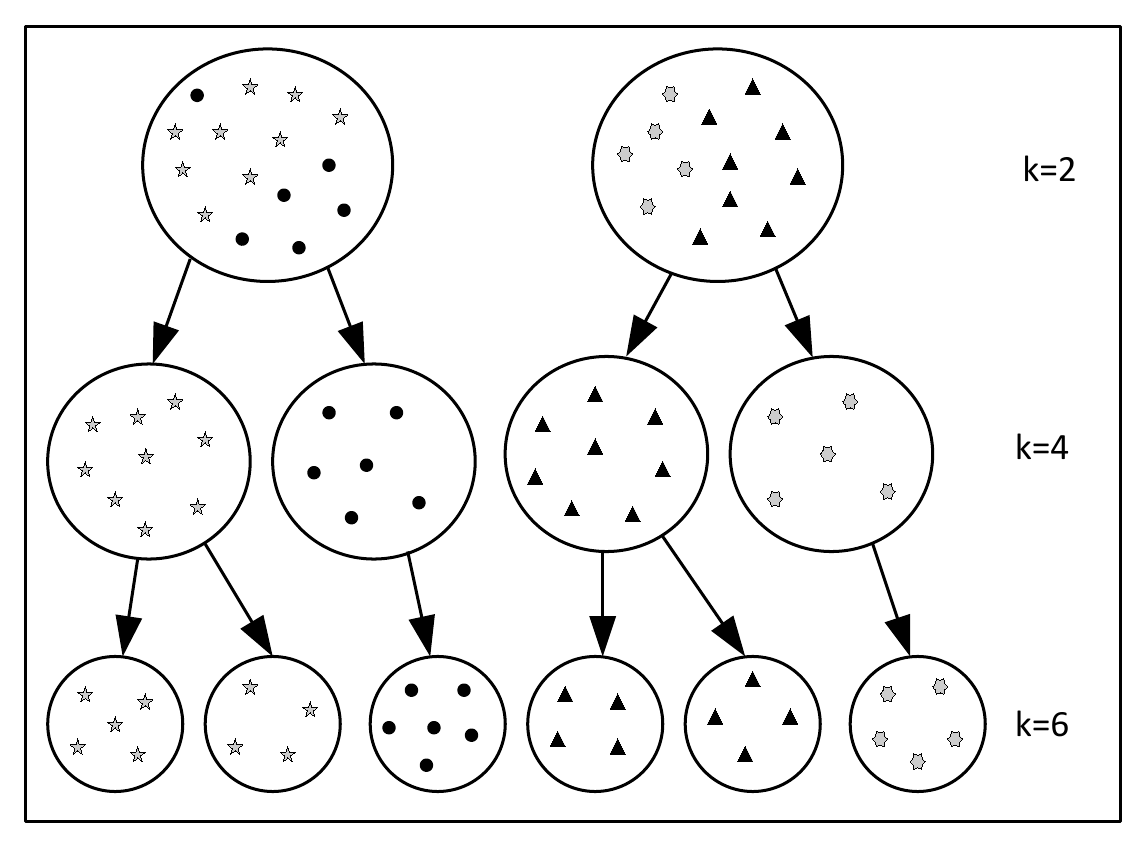}\\
  \caption[Short caption]{A clustering algorithm which is POP at 4.}
  \label{fig:POP}
  \end{figure}
  
  Figure \ref{fig:POP}, taken from \citep{sarkar2019perfect}, demonstrates a POP clustering (at $k:=J=4$) by taking only one value of $k$ smaller than 4 and one value of $k$ bigger than 4. Note that, (b) and (c) must hold for all $k < 4$ and for all $k > 4$, respectively.
  \par\noindent
\begin{theorem} \label{curbs2_pop}
    The Algorithm CURBS-II* applied on a dataset ${\cal D}$ containing observations from $K \:(\geq 2)$ populations is POP in the oracle scenario.
\end{theorem}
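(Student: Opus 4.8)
The plan is to verify the three conditions of Definition~\ref{pop} separately, all of them flowing from a single structural invariant maintained by Algorithm CURBS-II*. I will phrase the invariant as: whenever the algorithm holds $i$ clusters just before Step~3, these clusters partition $\{\mathcal{D}_1,\ldots,\mathcal{D}_K\}$ into $i$ ``blocks'' in the sense that, for $i\le K$, each block is a union of complete $\mathcal{D}_j$'s, and for $i\ge K$ each block is a nonempty subset of a single $\mathcal{D}_j$; moreover, the $i$-block partition is obtained from the $(i-1)$-block partition (with the $2$-block partition being the output of Step~1) by splitting exactly one block. I would prove this by induction on $i$. The case $K=2$ (equivalently $J=2$, or the final step when $J=K=2$) is immediate from the $K=2$ analysis of Algorithm BS* recorded in \eqref{dwcurve-two}, which yields $N^*_{max}=\min\{n_1,n_2\}$ and the split $\{\mathcal{D}_1,\mathcal{D}_2\}$; so I take $K\ge3$. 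The base case $i=2$ is then exactly Lemma~3 in the Supplementary Material: BS* applied to $\mathcal{D}$ returns two unions of complete populations. The inductive step rests on three facts: (i) Lemma~3 again, so BS* applied to any union of complete populations splits it into two such unions; (ii) by the mixture identity \eqref{MMDproperty}, if a set has representative distribution equal to a single $\widehat{P}_j$ then so does every subset of it, so BS* splits such a ``trivial'' block into two nonempty pieces at $d_w^*$-distance $0$; and (iii) an analysis of the merge step, which is the real content.

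The merge-step lemma is the crux. At iteration $i$, write the blocks as $B_1,\ldots,B_p$ (those with single-$\widehat{P}_j$ representative distribution) and $B_{p+1},\ldots,B_i$ (genuinely mixed). Step~3 replaces each $B_m$ with $m\le p$ by two pieces at $d_w^*$-distance $0$, and each mixed $B_m$ by its BS*-split $B_m=B_m'\sqcup B_m''$ at distance $g_m>0$ (positive since $k$ is characteristic and $B_m',B_m''$ are unions of complete populations with distinct representative distributions). I claim the $i-1$ smallest among the $\binom{2i}{2}$ pairwise $d_w^*$-distances are precisely the $p$ zero-distances together with the $i-1-p$ smallest of $g_{p+1},\ldots,g_i$; equivalently, no merged pair joins sub-blocks of two distinct $B_m$'s, nor a split-trivial piece to a mixed sub-block. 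Granting this, Step~5 restores the $p$ trivial blocks, restores the $i-1-p$ mixed blocks having the smallest split-gaps, and leaves exactly the one mixed block of largest split-gap split into $B_m',B_m''$ (themselves unions of complete populations) --- producing $i+1$ blocks that refine the previous $i$ by splitting one block, which closes the induction. The content of the claim is that every $d_w^*$-distance between a sub-block of one block and a sub-block of a different block strictly exceeds every within-block split-gap $g_m$. I would deduce this from the structure of BS*: by the analyses behind conditions $(A1)$--$(A5)$ and formulas \eqref{dwcurve-two}, \eqref{dwcurve-three}, \eqref{dwcurve-morethanthree}, the blocks are carved by nested BS*-splits that always peel off the more MMD-outlying populations first, so two populations lying in a common block are closer in squared-MMD than two populations lying in distinct blocks; feeding this into the identity $d\!\left(\sum_i a_i\widehat{P}_i,\sum_i b_i\widehat{P}_i\right)=-\tfrac12\sum_{i,j}(a_i-b_i)(a_j-b_j)\,d(\widehat{P}_i,\widehat{P}_j)$ (a multi-component form of \eqref{MMDproperty}) turns each relevant distance into an expression in pairwise squared-MMDs from which the strict domination can be extracted. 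Making this domination argument airtight --- carrying along the bookkeeping that a mixed block split at iteration $i$ was itself carved at iteration $i-1$, so its split-gap was already certified small relative to the distances then present, and that re-applying BS* to a block reproduces the same split by determinism --- is the step I expect to be the main obstacle.

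With the invariant established, the three conditions of Definition~\ref{pop} follow at once. For $J=K$ (condition~(a)): the algorithm halts after Step~5 of iteration $i=K-1$; by the invariant there are then $K-1$ blocks over $K$ populations, so exactly one block contains two populations and the other $K-2$ are trivial, giving $p=K-2=i-1$; hence Step~5 merges exactly the $K-2$ zero-distance pairs (all other distances being strictly positive), the two-population block is split by BS* into its constituents $\mathcal{D}_a,\mathcal{D}_b$ via the $K=2$ subcase of Lemma~3, and the output is $\{\mathcal{D}_1,\ldots,\mathcal{D}_K\}$ up to relabelling, i.e.\ Perfect. For $J<K$ (condition~(b)): termination at $i=J-1<K$ leaves $J$ blocks, each a union of complete populations, so each $\mathcal{D}_i$ lies inside one of them, and inside a unique one since the blocks are disjoint. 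For $J>K$ (condition~(c)): once $i=K$ the invariant forces the blocks to be exactly $\mathcal{D}_1,\ldots,\mathcal{D}_K$, and from then on every block is a piece of a single population, every Step~3 split of a piece of $\mathcal{D}_j$ produces two pieces of $\mathcal{D}_j$ at distance $0$, and every merge (being of a zero-distance, hence same-representative-distribution, pair) keeps pieces inside their population; so at termination each of the $J$ output clusters lies inside a unique $\mathcal{D}_i$. Ties among equal $d_w^*$-distances --- in particular among competing zero-distances --- are immaterial throughout, since any tie-breaking still merges only same-representative-distribution pairs and so leaves the invariant intact. This yields all three parts of Definition~\ref{pop}, and the theorem follows.
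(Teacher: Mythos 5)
Your overall architecture (an inductive invariant on the ``blocks'', then reading off the three clauses of Definition~\ref{pop}) matches the paper's proof via Lemma~\ref{lma5}, and your treatment of the endgame for $J=K$ and of the stages $i\ge K$ for $J>K$ is essentially the paper's. The genuine gap is exactly where you flag it: the merge-step ``domination'' claim, that every $d_w^*$-distance between sub-blocks of two \emph{distinct} blocks strictly exceeds every within-block split-gap $g_m$. You offer only a heuristic for it (``two populations lying in a common block are closer in squared-MMD than two populations lying in distinct blocks''), and that heuristic is not a consequence of the BS* analysis: conditions $(A4)$--$(A5)$ peel populations off according to sample-size-weighted sums of squared MMDs, not pairwise proximity, so a population can land in the same block as a population it is far from while being MMD-close to a population in another block. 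Without this claim your strong invariant --- that the $(i+1)$-block partition refines the $i$-block partition by splitting exactly one block --- is not established, and your induction does not close as written.

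However, the theorem does not need that strong invariant, and this is how the paper avoids the obstacle. The invariant maintained for $i<K$ is only: each cluster is a union of complete $\mathcal{D}_j$'s and no two clusters share a population. This is preserved by the split step (your point (i)) and by \emph{arbitrary} merging of the $2i$ pieces, provided every zero-distance pair is merged: merging two unions of complete populations that share no population again yields a union of complete populations sharing nothing with the remaining clusters. So the only thing to control in Step~5 is that the $p$ zero-distance pairs (the split pure blocks) are all among the $i-1$ merged pairs; this holds because zero is strictly below every cross-population distance (the kernel is characteristic) and $p\le i-1$ by pigeonhole (with $i<K$ clusters covering $K$ populations without sharing, at least one cluster is mixed). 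Which positive-distance pairs absorb the remaining $i-1-p$ merges is irrelevant. Your deductions of clauses (a), (b), (c) then go through essentially verbatim from this weaker invariant: for $J=K$ the count $\sum_m(|B_m|-1)=1$ at stage $K-1$ already forces exactly one two-population block and $K-2$ zero pairs, which are precisely the $K-2$ merges performed; and for $J>K$ there are always at least $i$ zero pairs at stage $i$, so all merges stay within populations. Replacing your refinement invariant by this weaker one closes the gap and recovers the paper's argument.
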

\noindent
This theorem states that if the specified cluster number $J$ is equal to the true number of populations $K$, then the Algorithm CURBS-II* provides perfect clustering. If the specified cluster number is less than the true number of populations, then the Algorithm CURBS-II* produces clusters with the property that no two of them have observations from any common distribution. If the specified cluster number is greater than the true number of populations, then each the produced clusters will contain observations from a single distribution.

\section{Numerical Results} \label{simulation}
\noindent
In this section, we investigate the behaviour of the Algorithms CURBS-I and CURBS-II. The latter algorithm is also compared with some of the state-of-the-art methods for functional data clustering, namely, (i) the \text{$\text{L}_{2}$PC} method mentioned in \cite{delaigle2019clustering} which is the standard $k$-means algorithm applied on the principal component scores, 
(ii) the \text{kCFC} method developed by \cite{chiou2007functional}, and (iii) the methods \text{$\text{DHP}_{\text{HA}}$}, \text{$\text{DHP}_{\text{DB}}$} and \text{$\text{DHP}_{\text{PC}}$} proposed by \cite{delaigle2019clustering} using an optimal projection of the data based on a suitable choice of orthonormal basis. Since \text{$\text{DHP}_{\text{HA}}$} gives better performance than the other two in most of the setups considered in the simulation, we only report the Rand index for this method only. The principal component scores required for the \text{$\text{L}_{2}$PC} method were obtained using the function `FPCA' in the R package \texttt{fdapace}. 

The \text{kCFC} method is available in the function `FClust' from the R package \texttt{fdapace}. The three methods \text{$\text{DHP}_{\text{HA}}$}, \text{$\text{DHP}_{\text{DB}}$} and \text{$\text{DHP}_{\text{PC}}$} mentioned above correspond to the Haar wavelet basis, the Daubechies wavelet basis and the standard principal component basis choices, respectively.

\subsection{Simulated Data}
\noindent For the simulation study, we generated functional data on a discrete grid of 128 equi-spaced points in the interval $[0,1]$ from the following two-population and three-population models. The observations from the $l$-th population are denoted by $X_{l}$ for $l =1,2,\ldots, K$ and $K \in \{2,3\}$.
  
\noindent \textbf{Model 1(a)} [Location change]:  $X_{l}(t) = \sum_{j=1}^{50}\sqrt{\theta_{jl}}Z_{jl}\phi_{j}(t) + \mu_{l}(t)$, $l = 1, 2$,
where for each $j$, $\phi_{j}(t) = \sqrt{2}\sin{(jt\pi)}$. $\theta_{j1} = \theta_{j2} = e^{-j/3}$ for $j = 1, 2, \dots, 50$, $\mu_1(t) = 2t$ and $\mu_2(t) = 6t(1-t)$. 
\\
\noindent \textbf{Model 1(b)} [Scale change]:   $X_{l}(t) = \sum_{j=1}^{50}\sqrt{\theta_{jl}}Z_{jl}\phi_{j}(t) + \mu_{l}(t)$, $l = 1, 2$,
where for each $j$, $\phi_{j}(t) = \sqrt{2}\sin{(jt\pi)}$. $\theta_{j1} = e^{-j/3}$, $\theta_{j2} = 3e^{-j/3}$ for $j = 1, 2, \dots, 50$, $\mu_1(t) = \mu_2(t) = 0$. 
\\
\noindent \textbf{Model 2(a)} [Location change]:    $X_{l}(t) = \sum_{j=1}^{40}(\sqrt{\theta_{jl}}Z_{jl} + \mu_{jl})\phi_{j}(t)$, $l = 1, 2$,
where for $j=1,2,\dots,40$, $\phi_{j}(t) = \sqrt{2}\sin{(jt\pi)}$, $\theta_{j1} = \theta_{j2} = j^{-2}$, $\mu_{j1} = 0$ and $\mu_{j2} = 0.75(-1)^{j+1} I\{1 \leq j \leq 3\}$. 
\\
\noindent \textbf{Model 2(b)} [Scale change]: $X_{l}(t) = \sum_{j=1}^{40}(\sqrt{\theta_{jl}}Z_{jl} + \mu_{jl})\phi_{j}(t)$, $l = 1, 2$,
where for $j=1,2,\dots,40$, $\phi_{j}(t) = \sqrt{2}\sin{(jt\pi)}$, $\theta_{j1} = j^{-2}$, $\theta_{j2} = 3j^{-2}$, $\mu_{j1} = \mu_{j2} = 0$. 
\\
\noindent \textbf{Model 3} [Change in eigenvalues]:   $X_{l}(t) = \sum_{j=1}^{50}\sqrt{\theta_{jl}}Z_{jl}\phi_{j}(t) + \mu_{l}(t)$, $l = 1, 2$,
where for each $j$, $\phi_{j}(t) = \sqrt{2}\sin{(jt\pi)}$. $\theta_{j1} = j^{-2}$, $\theta_{j2} = e^{-j}$ for $j = 1, 2, \dots, 50$, $\mu_1(t) = \mu_2(t) = 0$. 
\\
\noindent \textbf{Model 4} [Change in eigenfunctions]: $X_{l}(t) = \sum_{j=1}^{40}(\sqrt{\theta_{jl}}Z_{jl} + \mu_{jl})\phi_{jl}(t)$, $l = 1, 2$,
where for $j=1,2,\dots,40$, $\phi_{j1}(t) = \sqrt{2}\sin{(jt\pi)}$, $\phi_{j2}$ is the $j$-th term of Fourier Basis, $\theta_{j1} = \theta_{j2} = j^{-2}$, $\mu_{j1} = \mu_{j2} = 0$. 
\\
\noindent \textbf{Model 5(a)} [Location change]:    
$X_{l}(t) = \sum_{j=1}^{40}(\sqrt{\theta_{jl}}Z_{jl} + \mu_{jl})\phi_{j}(t)$, $l = 1, 2, 3$,
where for $j=1,2,\dots,40$, $\phi_{j}(t) = \sqrt{2}\sin{(jt\pi)}$, $\theta_{j1} = \theta_{j2} = \theta_{j3} = e^{-j/3}$.   
$\mu_{j1} = 0$ for $j = 1, 2, \dots, 40,  (\mu_{12}, \mu_{22}, \mu_{32}, \mu_{42}, \mu_{52}, \mu_{62}) = (0, -0.45, 0.45, -0.09, 0.84, 0.60)$, ($\mu_{13}, \mu_{23}, \mu_{33}, \mu_{43}$, $\mu_{53}, \mu_{63})
= (0, -0.30, 0.60, -0.30, 0.60, -0.30)$ and $\mu_{jl} = 0$ for $j > 6$, $l=2,3$.
\\
\noindent \textbf{Model 5(b)} [Scale change]: 
$X_{l}(t) = \sum_{j=1}^{40}(\sqrt{\theta_{jl}}Z_{jl} + \mu_{jl})\phi_{j}(t), l = 1, 2, 3$,
where for each $j=1,2,\dots,40$, $\phi_{j}(t) = \sqrt{2}\sin{(jt\pi)}$, $\theta_{j1} = e^{-j/3}, \theta_{j2} = 3e^{-j/3}, \theta_{j3} = 15e^{-j/3}$ for $l = 1, 2, 3$.
\\
\noindent  \textbf{Model 6(a)} [Location change]:    $X_{l}(t) = \sum_{j=1}^{50}\sqrt{\theta_{jl}}Z_{jl}\phi_{j}(t) + \mu_{l}(t)$, where $\phi_{j}(t) = \sqrt{2}\sin{(jt\pi)}$, $\theta_{j1} = \theta_{j2} = \theta_{j3} = j^{-1.01}$ for $j = 1, 2, \dots, 50$, $\mu_1(t) = 20t^{1.5}(1-t)$, $\mu_2(t) = 20t(1-t)^{1.5}$ and $\mu_3(t) = 0$, respectively.
\\
\noindent \textbf{Model 6(b)} [Scale change]:    $X_{l}(t) = \sum_{j=1}^{50}\sqrt{\theta_{jl}}Z_{jl}\phi_{j}(t) + \mu(t)$, where $\phi_{j}(t) = \sqrt{2}\sin{(jt\pi)}$, $\theta_{j1} = j^{-1.01}$, $\theta_{j2} = 3j^{-1.01}$, $\theta_{j3} = 9j^{-1.01}$ for $j = 1, 2, \dots, 50$ and $\mu(t) = 0$.
\\
\noindent \textbf{Model 7(a)} [Location change]: 
$X_{l}(t) = \sum_{j=1}^{40}(\sqrt{\theta_{jl}}Z_{jl} + \mu_{jl})\phi_{j}(t), l = 1, 2, 3$,
where for $j=1,2,\dots,40$, $\phi_{j}(t) = \sqrt{2}\sin{(jt\pi)}$, $\theta_{j1} = \theta_{j2} = \theta_{j3} =  j^{-2}$, $\mu_{j1} = 0$, $\mu_{j2} = 0.5(-1)^{j+1} I\{1 \leq j \leq 3\}$, $\mu_{j3} = (-1)^{j+1} I\{1 \leq j \leq 3\}$.
\\
\noindent \textbf{Model 7(b)} [Scale change]: $X_{l}(t) = \sum_{j=1}^{40}\sqrt{\theta_{jl}}Z_{jl} + \mu_{jl}\phi_{j}(t), l = 1, 2, 3$,
where for $j=1,2,\dots,40$, $\phi_{j}(t) = \sqrt{2}\sin{(jt\pi)}$, $\theta_{j1} = j^{-2}, \theta_{j2} = 3j^{-2}, \theta_{j3} =  9j^{-2}$.
\\
\noindent \textbf{Model 8} [Change in eigenvalues]: 
$X_{l}(t) = \sum_{j=1}^{40}(\sqrt{\theta_{jl}}Z_{jl} + \mu_{jl})\phi_{j}(t), l = 1, 2, 3$,
where for each $j=1,2,\dots,40$, $\phi_{j}(t) = \sqrt{2}\sin{(jt\pi)}$, $\theta_{j1} = j^{-2}, \theta_{j2} = j^{-1.05}, \theta_{j3} = e^{-j}$.
\\
In all of the above models except for models 7(a) and 7(b), $Z_{jk}$ $\overset{\mathrm{i.i.d}}{\sim}$ $N(0,1)$, and in models 7(a) and 7(b), $Z_{jk} \overset{\mathrm{i.i.d}}{\sim} t_{3}/\sqrt{3}$. 
We use the Laplacian kernel $k(x,y) = \exp\{-||x-y||/h\}$ in the $d_w$ measure, where the tuning parameter $h$ is chosen heuristically as $h = \mbox{Median}\{||a-b|| : a,b \in \mathcal{D}, a \neq b\}$ (see Section 5 of \cite{NIPS2009_685ac8ca}). Here, $\mathcal{D}$ denotes the set of all observations and $||\cdot||$ denotes the usual norm on the space $L^2[0,1]$. We have also experimented with $h = 0.1,0.2,1,5,10$. However, these choices did not change the results significantly, and hence, we do not report them. 
     \begin{table}
	\begin{center}
    \caption{Performance of the Algorithm CURBS-I}
	\label{tab1}
       \resizebox{0.7\textwidth}{0.5\textheight}{
	\begin{tabular}
    {|P{0.2cm}|P{3cm}|P{1cm}|P{1.5cm}|P{1.5cm}|P{1.5cm}|P{1cm}|}
		\hline
{$\mathbf{K}$} &{\textbf{Cluster Sizes}} &{\textbf{Model}} &$\widehat{K}<K$ &  $\widehat{K}=K$ &$\widehat{K}>K$&{\textbf{RI}}\\
		\hline
		\multirow{24}{*}{2}& \multirow{6}{*}{$(30, 30)$}& 1(a)& 0\%& 100\%& 0\%& 0.98\\
		\cline{3-7}
		& & 1(b)& 0\%& 100\%& 0\%& 0.997\\
		\cline{3-7}
		& & 2(a)& 0\%& 100\%& 0\%& 1\\
		\cline{3-7}
		& & 2(b)& 0\%& 100\%& 0\%& 0.996\\
        \cline{3-7}
                & & 3& 71\%& 29\%& 0\%& 0.639\\
                \cline{3-7}
        & & 4& 100\%& 0\%& 0\%& 0.492\\
		\cline{2-7}
		& \multirow{6}{*}{$(20,40)$}& 1(a)& 1\%& 98\%& 1\%& 0.992\\
		\cline{3-7}
		& & 1(b)& 0\%& 96\%& 4\%& 0.979\\
		\cline{3-7}
		& & 2(a)& 0\%& 100\%& 0\%& 0.999\\
		\cline{3-7}
		& & 2(b)& 0\%& 99\%& 1\%& 0.999\\
        \cline{3-7}
                & & 3& 1\%& 96\%& 3\%& 0.976\\
                \cline{3-7}
        & & 4& 84\%& 16\%& 0\%& 0.614\\
		\cline{2-7}
		& \multirow{6}{*}{$(100, 100)$}& 1(a)& 0\%& 100\%& 0\%& 0.995\\
		\cline{3-7}
		& & 1(b)& 0\%& 100\%& 0\%& 1\\
		\cline{3-7}
		& & 2(a)& 0\%& 100\%& 0\%& 1\\
		\cline{3-7}
		& & 2(b)& 0\%& 100\%& 0\%& 0.999\\
        \cline{3-7}
                & & 3& 0\%& 100\%& 0\%& 0.999\\
                \cline{3-7}
        & & 4& 29\%& 71\%& 0\%& 0.847\\
		\cline{2-7}
		& \multirow{6}{*}{$(50, 150)$}& 1(a)& 2\%& 97\%& 1\%& 0.962\\
		\cline{3-7}
		& & 1(b)& 0\%& 100\%& 0\%& 1\\
		\cline{3-7}
		& & 2(a)& 0\%& 100\%& 0\%& 0.999\\
		\cline{3-7}
		& & 2(b)& 0\%& 99\%& 1\%& 0.999\\
        \cline{3-7}
                & & 3& 0\%& 100\%& 0\%& 0.997\\
                \cline{3-7}
        & & 4& 0\%& 100\%& 0\%& 0.952\\
		\hline
		\multirow{28}{*}{3}& \multirow{7}{*}{$(30,30,30)$}& 5(a)& 0\%& 100\%& 0\%& 0.998\\
		\cline{3-7}
		& & 5(b)& 0\%& 100\%& 0\%& 0.985\\
		\cline{3-7}
		& & 6(a)& 0\%& 100\%& 0\%& 0.991\\
		\cline{3-7}
		& & 6(b)& 0\%& 100\%& 0\%& 0.998\\
		\cline{3-7}
		& & 7(a)& 0\%& 98\%& 2\%& 0.972\\
		\cline{3-7}
		& & 7(b)& 1\%& 97\%& 2\%& 0.969\\
        \cline{3-7}
		& & 8& 73\%& 26\%& 1\%& 0.827\\
		\cline{2-7}
		& \multirow{7}{*}{$(20, 30, 40)$}& 5(a)& 0\%& 100\%& 0\%& 0.998\\
		\cline{3-7}
		& & 5(b)& 0\%& 100\%& 0\%& 0.978\\
		\cline{3-7}
		& & 6(a)& 0\%& 98\%& 2\%& 0.993\\
		\cline{3-7}
		& & 6(b)& 0\%& 100\%& 0\%& 0.978\\
		\cline{3-7}
		& & 7(a)& 4\%& 90\%& 6\%& 0.945\\
		\cline{3-7}
		& & 7(b)& 2\%& 95\%& 5\%& 0.963\\
        \cline{3-7}
		& & 8& 100\%& 0\%& 0\%& 0.844\\
		\cline{2-7}
		& \multirow{7}{*}{$(100, 100, 100)$}& 5(a)& 0\%& 100\%& 0\%& 0.998\\
		\cline{3-7}
		& & 5(b)& 0\%& 100\%& 0\%& 0.991\\
		\cline{3-7}
		& & 6(a)& 0\%& 100\%& 0\%& 0.997\\
		\cline{3-7}
		& & 6(b)& 0\%& 100\%& 0\%& 0.999\\
		\cline{3-7}
		& & 7(a)& 0\%& 100\%& 0\%& 0.974\\
		\cline{3-7}
		& & 7(b)& 0\%& 100\%& 0\%& 0.98\\
        \cline{3-7}
		& & 8& 0\%& 99\%& 1\%& 0.981\\
		\cline{2-7}
		& \multirow{7}{*}{$(150, 25, 25)$}& 5(a)& 2\%& 98\%& 0\%& 0.967\\
		\cline{3-7}
		& & 5(b)& 0\%& 100\%& 0\%& 0.996\\
		\cline{3-7}
		& & 6(a)& 0\%& 98\%& 2\%& 0.993\\
		\cline{3-7}
		& & 6(b)& 0\%& 100\%& 0\%& 0.978\\
		\cline{3-7}
		& & 7(a)& 0\%& 96\%& 4\%& 0.949\\
		\cline{3-7}
		& & 7(b)& 1\%& 98\%& 1\%& 0.989\\
        \cline{3-7}
		& & 8& 100\%& 0\%& 0\%& 0.592\\
		\hline
	\end{tabular}
    }
	\vspace{0.5cm}
    \end{center}
\end{table}
To assess the effectiveness of clustering methods, we compute the average Rand index \cite{rand1971} by applying the clustering procedure to 100 independent datasets for each of the above models. The Rand index measures the agreement between two partitions of the data and takes values between $0$ and $1$ with the latter value representing perfect association between the two partitions. To compute Rand index, we use `RRand' function in R package \texttt{Phyclust}. 
\\
\noindent Table \ref{tab1} summarizes the performance of the Algorithm CURBS-I for the above models. The results show that our method has estimated the true number of populations almost accurately in all the cases except Models 3,4 and 8 with small sample size. These models are very difficult to investigate and are not mentioned in any clustering paper to the best of our knowledge. It is not surprising at all as we have already seen that the Algorithm SCC* gives accurate decision provided there are sufficient number of observations from each of the populations (See Theorem \ref{lma1}). This problem gets solved when there are sufficient number of observations from each of the populations producing the Rand index very close to $1$. This indicates that our proposed method has identified the true clusters nearly perfectly. We also compared this method with the algorithm developed by \cite{funfem} (implementation available in the R package \texttt{FunFEM}). However, the performance of the latter procedure was quite poor in all the models considered here, and thus, was not worth reporting. 

\par
We have also investigated the scenario when observations are coming from a single population. Samples of size $n$ are drawn from the following models.\\
\noindent \textbf{Model N1}:   $X(t) = \sum_{j=1}^{50}\sqrt{\theta_{j}}Z_{j}\phi_{j}(t)$,
where $\phi_{j}(t) = \sqrt{2}\sin{(jt\pi)}$, $\theta_{j} = e^{-j/3}$, for $j = 1, 2, \dots, 50$.
\\
\noindent \textbf{Model N2}: $X(t) = \sum_{j=1}^{40}\sqrt{\theta_{j}}Z_{j} \phi_{j}(t)$, where $\phi_{j}(t) = \sqrt{2}\sin{(jt\pi)}$, $\theta_{j} = j^{-2}$ for $j=1,2,\dots,40$. 
\\
\noindent \textbf{Model N3}:    $X(t) = \sum_{j=1}^{50}\sqrt{\theta_{jk}}Z_{jk}\phi_{j}(t)$, where $\phi_{j}(t) = \sqrt{2}\sin{(jt\pi)}$, $\theta_{j} = j^{-1.01}$ for $j = 1, 2, \dots, 50$.
\\
\noindent \textbf{Model N4}: $X(t) = \sum_{j=1}^{40}\sqrt{\theta_{j}}Z_{j}\phi_{j}(t)$,
where $\phi_{j}(t) = \sqrt{2}\sin{(jt\pi)}$, $\theta_{j} = j^{-2}$ for each $j=1,2,\dots,40$.
\noindent
\newline
\noindent
In the models N1, N2 and N3, $Z_{j}$ $\overset{\mathrm{i.i.d}}{\sim}$ $N(0,1)$, and in model N4, $Z_{j} \overset{\mathrm{i.i.d}}{\sim} t_{3}/\sqrt{3}$. In each of the above models, we generated functional data on a discrete grid of 128 equi-spaced points in the interval $[0,1]$. When observations come from a single population, evaluating the efficacy of the Algorithm CURBS-I is equivalent to assessing the effectiveness of the Algorithm SCC. Table \ref{size} reports the proportion of times the estimated cluster number ($\widehat{K}$) is equal to 1, i.e, the Algorithm SCC correctly identifies the observations as coming from a single population.   
     \begin{table}[!h]
	\centering    	\caption{Performance of the Algorithm CURBS-I when $K=1$}
	\label{size}
	\begin{tabular} {|c|c|c|c|c|}
		\hline
\multirow{2}{*}{\textbf{Sample Size}}  &\multicolumn{4}{c|}{\textbf{Proportion of times $\widehat{K}=1$}} \\
\cline{2-5}
		 &\textbf{Model N1} &\textbf{Model N2} &\textbf{Model N3} &\textbf{Model N4}\\
		\hline
		30& 94\%& 95\%& 96\%& 100\%\\
		\hline
		100& 95\%& 98\%& 100\%& 100\%\\
        \hline
	\end{tabular}
	\vspace{0.5cm}
\end{table}
The results in Table \ref{size} show that the Algorithm SCC makes the correct decision most of the time. 
\par
Next, we study the performance of the Algorithm CURBS-II along with some state-of-the-art clustering techniques in case the number of clusters is specified. The performance of these methods are summarized in Table \ref{tab2}. For each row in this table, the maximum Rand index obtained is boldfaced.  
    \begin{table}
    \begin{center}
	\caption{Rand indices of different clustering algorithms for specified cluster number}
	\label{tab2}	
       \resizebox{0.7\textwidth}{0.5\textheight}{
    \begin{tabular}{|P{0.2cm}|P{3cm}|P{1cm}|P{2cm}|P{1.5cm}|P{1.5cm}|P{1.5cm}|}
		\hline
		$\mathbf{K}$ &\textbf{Cluster Sizes} &\textbf{Model} &\textbf{CURBS-II} &\textbf{$\text{L}_2$PC} &\textbf{kCFC} &\textbf{$\text{DHP}_{\text{HA}}$}  \\
		\hline
		\multirow{24}{*}{2}& \multirow{6}{*}{$(30,30)$}& 1(a)& 0.972& $\mathbf{0.983}$& 0.79& 0.973\\
		\cline{3-7}
		& & 1(b)& $\mathbf{0.984}$& 0.53& 0.492& 0.492\\
		\cline{3-7}
		& & 2(a)& 0.988& 0.741& $\mathbf{1}$& $\mathbf{1}$\\
		\cline{3-7}
		& & 2(b)& $\mathbf{0.981}$& 0.512& 0.501& 0.492\\
        \cline{3-7}
        & & 3& $\mathbf{0.992}$& 0.53& 0.496& 0.52\\
        \cline{3-7}
        & & 4& $\mathbf{0.907}$& 0.522& 0.493& 0.5\\
		\cline{2-7}
		& \multirow{6}{*}{$(20,40)$}& 1(a)& 0.949& $\mathbf{0.953}$& 0.655& 0.873\\
		\cline{3-7}
		& & 1(b)& $\mathbf{0.984}$& 0.501& 0.497& 0.497\\
		\cline{3-7}
		& & 2(a)& $\mathbf{1}$& $\mathbf{1}$& $\mathbf{1}$& $\mathbf{1}$\\
		\cline{3-7}
		& & 2(b)& $\mathbf{0.98}$& 0.498& 0.512& 0.492\\
        \cline{3-7}
        & & 3& $\mathbf{0.977}$& 0.498& 0.494& 0.456\\
        \cline{3-7}
        & & 4& $\mathbf{0.777}$& 0.497& 0.492& 0.542\\
		\cline{2-7}
		& \multirow{6}{*}{$(100, 100)$}& 1(a)& 0.939& $\mathbf{0.945}$& 0.765& 0.961\\
		\cline{3-7}
		& & 1(b)& $\mathbf{0.995}$& 0.521& 0.503& 0.5\\
		\cline{3-7}
		& & 2(a)& $\mathbf{1}$& $\mathbf{1}$& $\mathbf{1}$& $\mathbf{1}$\\
		\cline{3-7}
		& & 2(b)& $\mathbf{0.994}$& 0.52& 0.509& 0.501\\
        \cline{3-7}
        & & 3& $\mathbf{0.964}$& 0.503& 0.496& 0.464\\
        \cline{3-7}
        & & 4& $\mathbf{0.985}$& 0.508& 0.492& 0.467\\
		\cline{2-7}
		& \multirow{6}{*}{$(50,150)$}& 1(a)& 0.903& $\mathbf{0.992}$& 0.563& 0.99\\
		\cline{3-7}
		& & 1(b)& $\mathbf{0.988}$& 0.499& 0.497& 0.499\\
		\cline{3-7}
		& & 2(a)& $\mathbf{1}$& $\mathbf{1}$& $\mathbf{1}$& $\mathbf{1}$\\
		\cline{3-7}
		& & 2(b)& $\mathbf{0.939}$& 0.5& 0.464& 0.501\\
        \cline{3-7}
        & & 3& $\mathbf{0.997}$& 0.5& 0.498& 0.521\\
        \cline{3-7}
        & & 4& $\mathbf{0.963}$& 0.5& 0.498& 0.497\\
		\hline
		\multirow{28}{*}{3}& \multirow{7}{*}{$(30, 30, 30)$}& 5(a)& 0.979& 0.973& $\mathbf{1}$& 0.675\\
		\cline{3-7}
		& & 5(b)& $\mathbf{0.982}$& 0.464& 0.557& 0.555\\
		\cline{3-7}
		& & 6(a)& $\mathbf{0.98}$& 0.896& 0.771& 0.779\\
		\cline{3-7}
		& & 6(b)& $\mathbf{0.978}$& 0.479& 0.548& 0.589\\
		\cline{3-7}
		& & 7(a)& $\mathbf{0.98}$& 0.947& $\mathbf{1}$& $\mathbf{1}$\\
		\cline{3-7}
		& & 7(b)& $\mathbf{0.908}$& 0.41& 0.641& 0.555\\
        \cline{3-7}
        & & 8& $\mathbf{0.974}$& 0.474& 0.503& 0.481\\
		\cline{2-7}
		& \multirow{7}{*}{$(20, 30, 40)$}& 5(a)& $\mathbf{0.974}$& 0.944& 0.754& 0.751\\
		\cline{3-7}
		& & 5(b)& $\mathbf{0.978}$& 0.45& 0.508& 0.547\\
		\cline{3-7}
		& & 6(a)& $\mathbf{0.995}$& 0.862& 0.695& 0.751\\
		\cline{3-7}
		& & 6(b)& $\mathbf{0.974}$& 0.484& 0.528& 0.559\\
		\cline{3-7}
		& & 7(a)& 0.935& 0.942& $\mathbf{1}$& 0.983\\
		\cline{3-7}
		& & 7(b)& $\mathbf{0.946}$& 0.411& 0.527& 0.541\\
        \cline{3-7}
        & & 8& $\mathbf{0.806}$& 0.468& 0.495& 0.492\\
		\cline{2-7}
		& \multirow{7}{*}{$(100,100,100)$}& 5(a)& 0.991& 0.98& $\mathbf{1}$& 0.781\\
		\cline{3-7}
		& & 5(b)& $\mathbf{0.986}$& 0.473& 0.564& 0.546\\
		\cline{3-7}
		& & 6(a)& 0.964& 0.922& 0.726& 0.78\\
		\cline{3-7}
		& & 6(b)& $\mathbf{0.994}$& 0.49& 0.557& 0.586\\
		\cline{3-7}
		& & 7(a)& 0.987& 0.948& 0.991& $\mathbf{0.996}$\\
		\cline{3-7}
		& & 7(b)& $\mathbf{0.955}$& 0.436& 0.554& 0.584\\
        \cline{3-7}
        & & 8& $\mathbf{0.98}$& 0.503& 0.498& 0.503\\
		\cline{2-7}
		& \multirow{7}{*}{$(150, 25, 25)$}& 5(a)& 0.99& 0.786& $\mathbf{1}$& 0.616\\
		\cline{3-7}
		& & 5(b)& $\mathbf{0.994}$& 0.697& 0.507& 0.713\\
		\cline{3-7}
		& & 6(a)& 0.963& $\mathbf{0.981}$& 0.969& 0.626\\
		\cline{3-7}
		& & 6(b)& $\mathbf{0.986}$& 0.579& 0.494& 0.547\\
		\cline{3-7}
		&  & 7(a)& 0.9& 0.922& $\mathbf{1}$& $\mathbf{1}$\\
		\cline{3-7}
		& & 7(b)& $\mathbf{0.913}$& 0.454& 0.469& 0.545\\
        \cline{3-7}
        & & 8& $\mathbf{0.95}$& 0.53& 0.497& 0.491\\
		\hline
	\end{tabular}
    }
    \end{center}
\end{table}
From Table \ref{tab2}, it is clear that in location difference problems, the Algorithm CURBS-II does exceedingly well and provides comparable performance compared to its competitors. The Algorithm CURBS-II is in fact the best performer in several of these models. On the other hand, in the scale difference problems, all the competing clustering algorithms has poor performance while the Algorithm CURBS-II performs extremely well. Since the standard $k$-means method is applied on the principal component scores in the $\text{L}_2{\text{PC}}$ method, it does not perform well in scale difference problems. The $\text{kCFC}$ method has a mixed performance even in the models having location difference, whereas it fails for the models having scale difference. The performance of the \text{DHP} methods (\text{$\text{DHP}_{\text{HA}}$}, \text{$\text{DHP}_{\text{DB}}$} and \text{$\text{DHP}_{\text{PC}}$}) are quite well in the location problems, but we cannot pick any single method which performs well across all models having location difference. The near-perfect clustering performance of the \text{DHP} methods claimed by \cite{delaigle2019clustering} for scale difference models require a sufficient condition that there is additional difference in the location as well. The scale difference models considered in this paper do not have any difference in the locations. Consequently, the poor performance of the \text{DHP} methods can be viewed as an empirical evidence that the additional location difference required for the near-perfect clustering of these methods are perhaps both necessary and sufficient. 

\subsection{Analysis of real datasets}
\indent We applied the proposed clustering algorithms and their competitors to two real datasets. The first dataset we considered was the Phoneme dataset which is available in the \texttt{fds} package in R. The dataset is from the domain of speech recognition and contains log-periodograms constructed from recordings available at different equispaced frequencies for each of the five phonemes: \enquote{sh} as in \enquote{she}, \enquote{dcl} as in \enquote{dark}, \enquote{iy} as in \enquote{she}, \enquote{aa} as in \enquote{dark} and \enquote{ao} as in \enquote{water}. For each phoneme, we have 400 log-periodograms at a 16-kHz rate observed over $150$ different frequencies. This dataset has been considered in a phoneme discrimination problem in \cite{hastie2009elements} and \cite{ferraty2003curves} where the aim was to predict the phoneme class for a new log-periodogram. Here, we reformulate the problem into a clustering problem with the aim to partition the phoneme data into different clusters. It is known that there are five classes in this dataset, but we assume that the class labels are missing. We first applied the Algorithm CURBS-II and its competitors with the specified number of clusters being $5$. We could not perform the $\text{DHP}_{\text{DB}}$ method since the code required the grid size to be equal to some power of $2$. Table \ref{tab3} summarizes the performance of the different clustering methods. 
\begin{table}[h!]
        \centering
        \begin{tabular}{|P{3cm}|P{2cm}|P{1.5cm}|P{1.5cm}|P{1.5cm}|P{1.5cm}|}
        \hline
        \textbf{Methods} &\textbf{CURBS-II}& \textbf{$\text{L}_2$PC} &\textbf{kCFC}  &\textbf{$\text{DHP}_{\text{HA}}$} &\textbf{$\text{DHP}_{\text{PC}}$}\\
        \hline
             \textbf{Rand Index}& $\mathbf{0.922}$& 0.918& 0.801& 0.715& 0.771\\
             \hline
        \end{tabular}
        
        \caption{Performance of different clustering algorithms in Phoneme data clustering}
        \label{tab3}
    \end{table}
The performance of the Algorithm CURBS-II is the best followed closely by $\text{L}_2$PC. The Algorithm CURBS-I applied to this dataset produced four clusters with the observed Rand index being $0.896$. The two groups corresponding to the phonemes \enquote{aa} and \enquote{ao} were difficult to distinguish which can be attributed to the fact that the square MMD between these two groups is significantly less compared to the other pairwise distances (see Table \ref{pairwise phoneme}).


    \begin{table}[h!]
        \centering
        \begin{tabular}
        {|c|c|c|c|c|c|}
        \hline
   & sh& dcl& iy& aa& ao\\
   \hline
        sh &    &  0.716&  0.216&  0.422& 0.426\\
        \hline
          dcl &   &  &  0.561&  0.658& 0.576\\
          \hline
          iy&   &  &  &  0.218& 0.211\\
          \hline
            aa& &  &  &  & 0.056\\
            \hline
           ao&  &  &  &  & \\
           \hline
        \end{tabular}
        \caption{Pairwise squared MMD distances for Phoneme data}
        \label{pairwise phoneme}
    \end{table}
    
\indent The second real dataset considered was the `Meatspectrum' data which is available in the  \texttt{R} package \texttt{faraway}. A Tecator Infratec Food and Feed Analyzer was used to collect this data which contains the fat content of $215$ finely chopped samples of meat. A Near Infrared Transmission (NIT) principle was used to measure the absorbance at 100 wavelengths within the infrared spectrum ranging from $850 - 1050$ nanometres (see Figure \ref{fig:tecator}). The classification problem in this dataset consists in separating meat samples with a high fat content (more than $20\%$) from samples with a low fat content (less than $20\%$). Among the $215$ samples, $77$ have high fat content and $138$ have low fat content. 
\begin{figure}[h!]
   \centering
   \vspace{-0.4cm}
\includegraphics[width=0.6\linewidth,height=0.35\textheight]{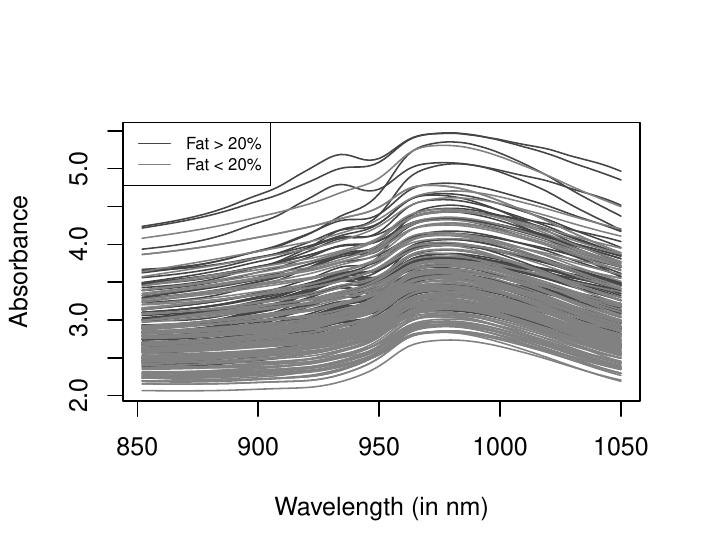}
\caption{Raw Spectra of Meatspectrum dataset}
  \label{fig:tecator}
  \end{figure}
Among others, \cite{ferraty2003curves}, \cite{rossi2006support} and \cite{li2008classification} have considered the original spectrum along with its derivatives for classification purposes. It is observed that the spectra of a meat sample with high fat content often have two local maxima while a single maxima is mostly observed in case of low fat content. Therefore, these authors decided to focus on the curvature of the spectra, and thus, used the second derivative for their analysis. We also considered the second order derivative curve for clustering purposes (see plot (a) of Figure \ref{fig:meat}). To find the second derivative, smoothing of the data was done using B-spline smoothing technique with 11 basis functions and the curves were evaluated at $256$ equidistant wavelengths ranging from 850-1050 nanometres. 
\begin{figure}[h]
   \centering
  \begin{subfigure}{.45\textwidth}
\includegraphics[width=1\linewidth,height=0.25\textheight]{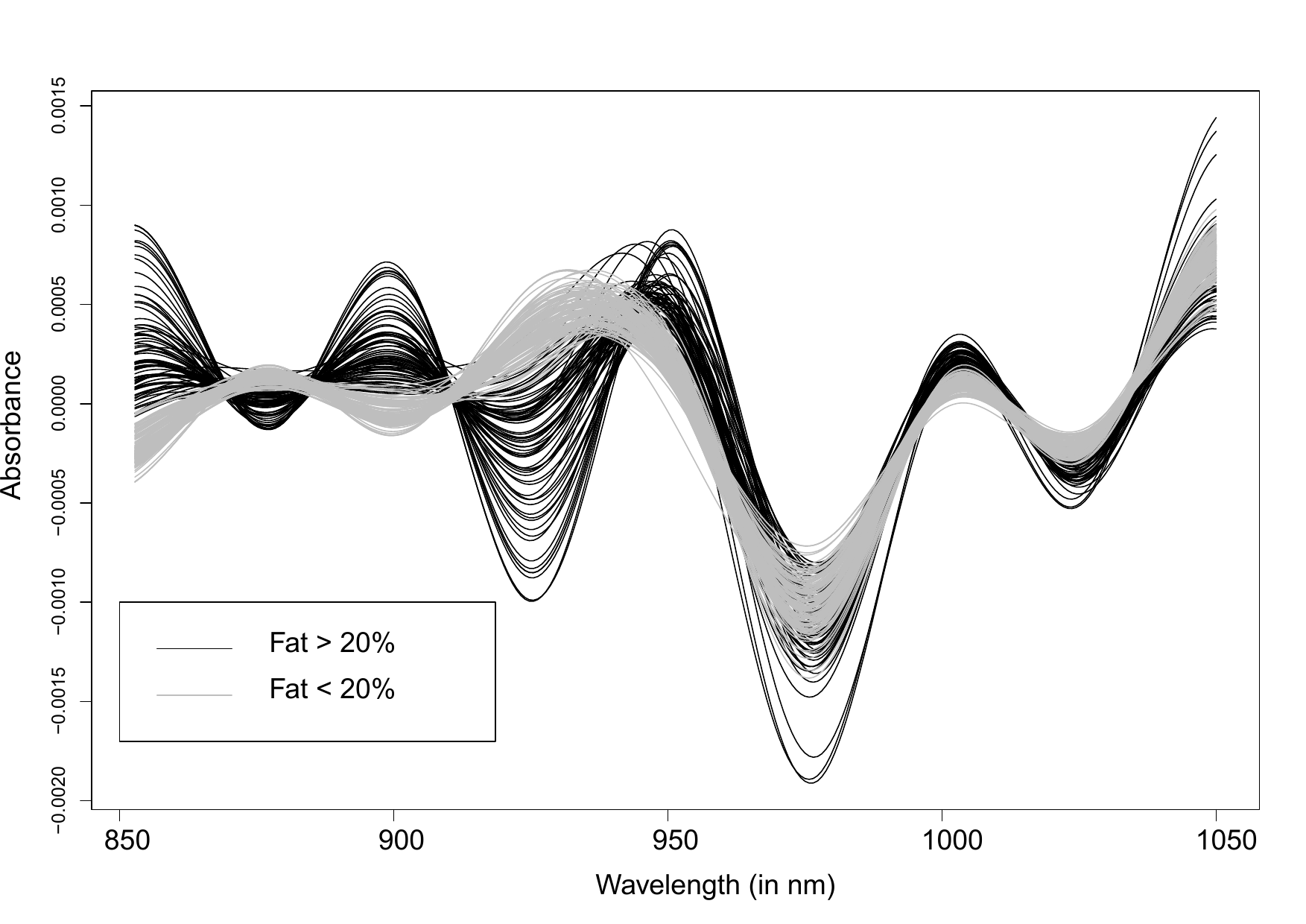}
\caption{Actual partition}
  \end{subfigure}
  \begin{subfigure}{.45\textwidth}
\includegraphics[width=1\linewidth,height=0.25\textheight]{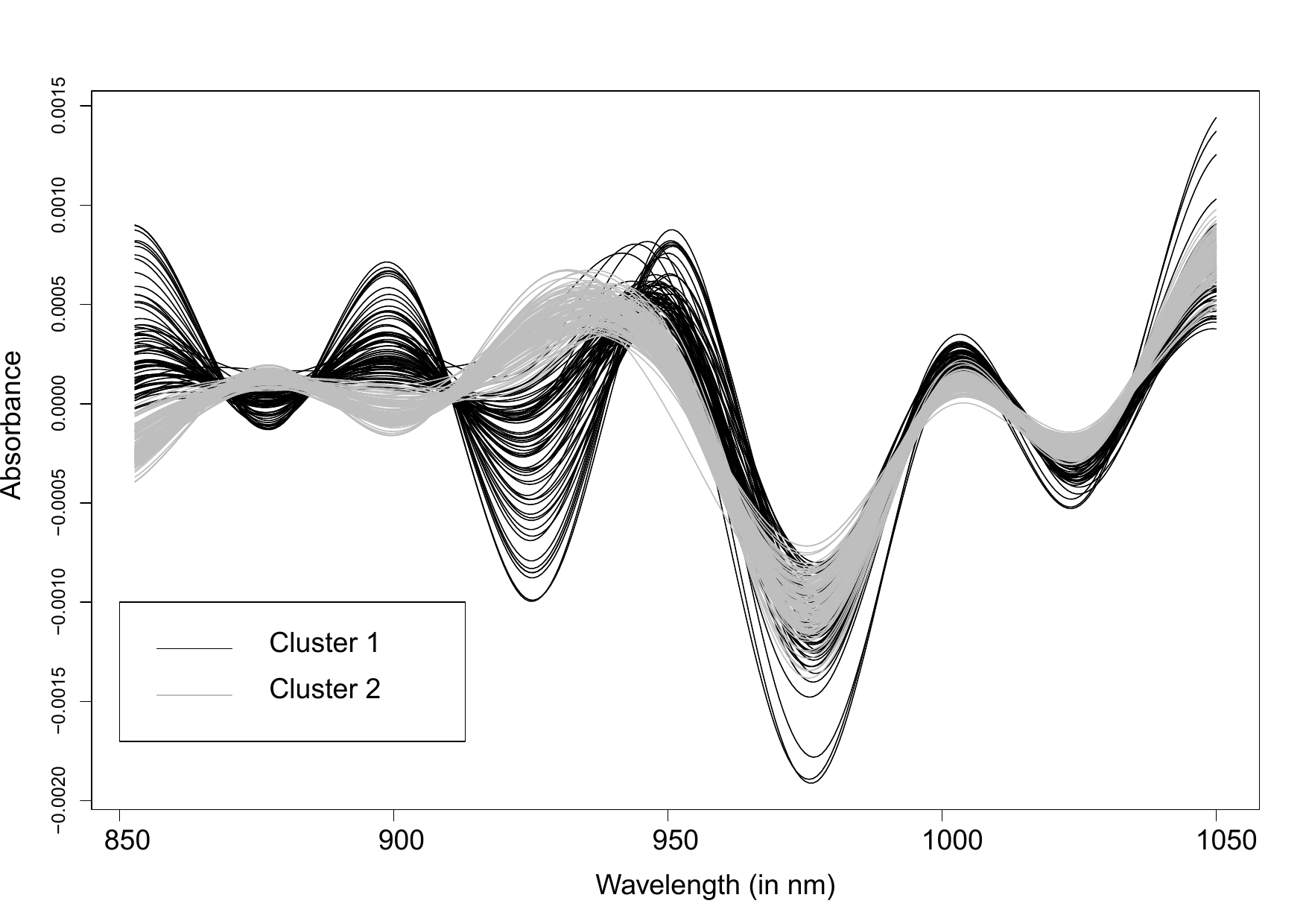}
  \caption{Partition obtained by the Algorithm CURBS-II}
  \end{subfigure}
  \caption{Second order derivative of Spectra for Meatspectrum dataset}
  \label{fig:meat}
  \end{figure}
The partition produced by the Algorithm CURBS-II is displayed in plot (b) of Figure \ref{fig:meat} which is quite close to the original partition.  Table \ref{tab:meat} shows the Rand indices of the Algorithm CURBS-II and its competitors for this dataset. 
\begin{table}[h!]
        \centering
        \begin{tabular}{|P{3cm}|P{2cm}|P{1.5cm}|P{1.5cm}|P{1.5cm}|P{1.5cm}|P{1.5cm}|}
        \hline
        \textbf{Methods} &\textbf{CURBS-II}& \textbf{$\text{L}_2$PC} &\textbf{kCFC}  &\textbf{$\text{DHP}_{\text{HA}}$} &\textbf{$\text{DHP}_{\text{DB}}$} &\textbf{$\text{DHP}_{\text{PC}}$}\\
        \hline
             \textbf{Rand Index}& $\mathbf{0.963}$& 0.862& 0.646& 0.759& 0.886 &0.886\\
             \hline
        \end{tabular}        
        \caption{Performance of different clustering algorithms in Meatspectrum data clustering}
        \label{tab:meat}
    \end{table}
The Algorithm CURBS-II turns out to be the clear winner among its competitors. The $\text{L}_2$PC, $\text{DHP}_{\text{DB}}$ and $\text{DHP}_{\text{PC}}$ methods performing quite similarly. The Algorithm CURBS-I applied on this dataset estimated the cluster number to be $2$, and the partition obtained by the Algorithm CURBS-I is same as the partition obtained by the Algorithm CURBS-II. Thus, the Algorithm CURBS-I produces the same observed Rand index $0.963$.

\section{Epilogue} \label{epilogue}

The Algorithm CURBS-I proposed in the paper provides a bonafide estimator of the number of clusters in addition to producing the individual clusters. Both Algorithms CURBS-I and CURBS-II are able to capture distributional differences (which follows from the oracle analysis). In particular, we do not know of any method which provides near-perfect clustering for scale problems, which is one of the main advantages of our methods. Although the numerical study in this paper is conducted for univariate functional data, both the proposed algorithms are directly applicable to multivariate functional data as well. Moreover, both the proposed clustering algorithms can be used for data in any general metric space. The theoretical analysis showing perfect clustering only requires the kernel to be characteristic in that metric space. If the functional data are observed at irregular time points, or are partially observed over the domain, then the proposed algorithms need to be modified. One possible approach would be to project the data onto an appropriate finite-dimensional Euclidean space using some suitable techniques such as random projections and then apply the proposed algorithms. We leave this as a future research project.
\section{Proofs and technical details}
All the proofs and technical details are given in the Supplementary Material.
\bibliographystyle{plainnat}
\bibliography{Ref}

\begin{thebibliography}{32}
\providecommand{\natexlab}[1]{#1}
\providecommand{\url}[1]{\texttt{#1}}
\expandafter\ifx\csname urlstyle\endcsname\relax
  \providecommand{\doi}[1]{doi: #1}\else
  \providecommand{\doi}{doi: \begingroup \urlstyle{rm}\Url}\fi

\bibitem[Abraham et~al.(2003)Abraham, Cornillon, Matzner-L{\o}ber, and Molinari]{abraham2003unsupervised}
Christophe Abraham, Pierre-Andr{\'e} Cornillon, ERIC Matzner-L{\o}ber, and Nicolas Molinari.
\newblock Unsupervised curve clustering using b-splines.
\newblock \emph{Scandinavian journal of statistics}, 30\penalty0 (3):\penalty0 581--595, 2003.

\bibitem[Bouveyron et~al.(2015)Bouveyron, C{\^o}me, and Jacques]{funfem}
Charles Bouveyron, Etienne C{\^o}me, and Julien Jacques.
\newblock {The discriminative functional mixture model for a comparative analysis of bike sharing systems}.
\newblock \emph{The Annals of Applied Statistics}, 9\penalty0 (4):\penalty0 1726 -- 1760, 2015.
\newblock \doi{10.1214/15-AOAS861}.

\bibitem[Chamroukhi(2016)]{chamroukhi2016piecewise}
Faicel Chamroukhi.
\newblock Piecewise regression mixture for simultaneous functional data clustering and optimal segmentation.
\newblock \emph{Journal of Classification}, 33\penalty0 (3):\penalty0 374--411, 2016.

\bibitem[Chen et~al.(2014)Chen, Reiss, and Tarpey]{chen2014optimally}
Huaihou Chen, Philip~T Reiss, and Thaddeus Tarpey.
\newblock Optimally weighted l2 distance for functional data.
\newblock \emph{Biometrics}, 70\penalty0 (3):\penalty0 516--525, 2014.

\bibitem[Chen et~al.(2021)Chen, Sun, Euan, and Ombao]{chen2021clustering}
Tianbo Chen, Ying Sun, Carolina Euan, and Hernando Ombao.
\newblock Clustering brain signals: A robust approach using functional data ranking.
\newblock \emph{Journal of Classification}, 38\penalty0 (3):\penalty0 425--442, 2021.

\bibitem[Chiou and Li(2007)]{chiou2007functional}
Jeng-Min Chiou and Pai-Ling Li.
\newblock Functional clustering and identifying substructures of longitudinal data.
\newblock \emph{Journal of the Royal Statistical Society Series B: Statistical Methodology}, 69\penalty0 (4):\penalty0 679--699, 2007.

\bibitem[Cuesta-Albertos and Fraiman(2007)]{CUESTAALBERTOS20074864}
Juan~Antonio Cuesta-Albertos and Ricardo Fraiman.
\newblock Impartial trimmed k-means for functional data.
\newblock \emph{Computational Statistics \& Data Analysis}, 51\penalty0 (10):\penalty0 4864--4877, 2007.
\newblock ISSN 0167-9473.

\bibitem[Delaigle et~al.(2019)Delaigle, Hall, and Pham]{delaigle2019clustering}
Aurore Delaigle, Peter Hall, and Tung Pham.
\newblock Clustering functional data into groups by using projections.
\newblock \emph{Journal of the Royal Statistical Society Series B: Statistical Methodology}, 81\penalty0 (2):\penalty0 271--304, 2019.

\bibitem[Ferraty and Vieu(2003)]{ferraty2003curves}
Fr{\'e}d{\'e}ric Ferraty and Philippe Vieu.
\newblock Curves discrimination: a nonparametric functional approach.
\newblock \emph{Computational Statistics \& Data Analysis}, 44\penalty0 (1-2):\penalty0 161--173, 2003.

\bibitem[Ferraty and Vieu(2006)]{ferratynonparametric}
Fr{\'e}d{\'e}ric Ferraty and Philippe Vieu.
\newblock Nonparametric functional data analysis theory and practice.
\newblock 2006.

\bibitem[Fukumizu et~al.(2007)Fukumizu, Gretton, Sun, and Sch{\"o}lkopf]{fukumizu2007nips}
Kenji Fukumizu, Arthur Gretton, Xiaohai Sun, and Bernhard Sch{\"o}lkopf.
\newblock Kernel measures of conditional dependence.
\newblock \emph{Advances in neural information processing systems}, 20, 2007.

\bibitem[Fukumizu et~al.(2009)Fukumizu, Gretton, Lanckriet, Sch\"{o}lkopf, and Sriperumbudur]{NIPS2009_685ac8ca}
Kenji Fukumizu, Arthur Gretton, Gert Lanckriet, Bernhard Sch\"{o}lkopf, and Bharath~K. Sriperumbudur.
\newblock Kernel choice and classifiability for rkhs embeddings of probability distributions.
\newblock In Y.~Bengio, D.~Schuurmans, J.~Lafferty, C.~Williams, and A.~Culotta, editors, \emph{Advances in Neural Information Processing Systems}, volume~22. Curran Associates, Inc., 2009.

\bibitem[Giacofci et~al.(2013)Giacofci, Lambert-Lacroix, Marot, and Picard]{10.1111/j.1541-0420.2012.01828.x}
M.~Giacofci, S.~Lambert-Lacroix, G.~Marot, and F.~Picard.
\newblock {Wavelet-Based Clustering for Mixed-Effects Functional Models in High Dimension}.
\newblock \emph{Biometrics}, 69\penalty0 (1):\penalty0 31--40, 02 2013.

\bibitem[Gretton et~al.(2006)Gretton, Borgwardt, Rasch, Sch{\"o}lkopf, and Smola]{gretton2006kernel}
Arthur Gretton, Karsten Borgwardt, Malte Rasch, Bernhard Sch{\"o}lkopf, and Alex Smola.
\newblock A kernel method for the two-sample-problem.
\newblock \emph{Advances in neural information processing systems}, 19, 2006.

\bibitem[Hastie et~al.(2009)Hastie, Tibshirani, Friedman, and Friedman]{hastie2009elements}
Trevor Hastie, Robert Tibshirani, Jerome~H Friedman, and Jerome~H Friedman.
\newblock \emph{The elements of statistical learning: data mining, inference, and prediction}, volume~2.
\newblock Springer, 2009.

\bibitem[Jacques and Preda(2013)]{JACQUES2013164}
Julien Jacques and Cristian Preda.
\newblock Funclust: A curves clustering method using functional random variables density approximation.
\newblock \emph{Neurocomputing}, 112:\penalty0 164--171, 2013.
\newblock ISSN 0925-2312.
\newblock Advances in artificial neural networks, machine learning, and computational intelligence.

\bibitem[Jegelka et~al.(2009)Jegelka, Gretton, Sch{\"o}lkopf, Sriperumbudur, and Von~Luxburg]{jegelka2009generalized}
Stefanie Jegelka, Arthur Gretton, Bernhard Sch{\"o}lkopf, Bharath~K Sriperumbudur, and Ulrike Von~Luxburg.
\newblock Generalized clustering via kernel embeddings.
\newblock In \emph{KI 2009: Advances in Artificial Intelligence: 32nd Annual German Conference on AI, Paderborn, Germany, September 15-18, 2009. Proceedings 32}, pages 144--152. Springer, 2009.

\bibitem[Kayano et~al.(2010)Kayano, Dozono, and Konishi]{kayano2010functional}
Mitsunori Kayano, Koji Dozono, and Sadanori Konishi.
\newblock Functional cluster analysis via orthonormalized gaussian basis expansions and its application.
\newblock \emph{Journal of classification}, 27:\penalty0 211--230, 2010.

\bibitem[Li and Yu(2008)]{li2008classification}
Bin Li and Qingzhao Yu.
\newblock Classification of functional data: A segmentation approach.
\newblock \emph{Computational Statistics \& Data Analysis}, 52\penalty0 (10):\penalty0 4790--4800, 2008.

\bibitem[{Luz López García} et~al.(2015){Luz López García}, García-Ródenas, and {González Gómez}]{LUZLOPEZGARCIA2015231}
María {Luz López García}, Ricardo García-Ródenas, and Antonia {González Gómez}.
\newblock K-means algorithms for functional data.
\newblock \emph{Neurocomputing}, 151:\penalty0 231--245, 2015.
\newblock ISSN 0925-2312.

\bibitem[Margaritella et~al.(2021)Margaritella, In{\'a}cio, and King]{margaritella2021parameter}
Nicol{\`o} Margaritella, Vanda In{\'a}cio, and Ruth King.
\newblock Parameter clustering in bayesian functional principal component analysis of neuroscientific data.
\newblock \emph{Statistics in Medicine}, 40\penalty0 (1):\penalty0 167--184, 2021.

\bibitem[Martino et~al.(2019)Martino, Ghiglietti, Ieva, and Paganoni]{martino2019k}
Andrea Martino, Andrea Ghiglietti, Francesca Ieva, and Anna~Maria Paganoni.
\newblock A k-means procedure based on a mahalanobis type distance for clustering multivariate functional data.
\newblock \emph{Statistical Methods \& Applications}, 28:\penalty0 301--322, 2019.

\bibitem[Meng et~al.(2018)Meng, Liang, Cao, and He]{meng2018new}
Yinfeng Meng, Jiye Liang, Fuyuan Cao, and Yijun He.
\newblock A new distance with derivative information for functional k-means clustering algorithm.
\newblock \emph{Information Sciences}, 463:\penalty0 166--185, 2018.

\bibitem[Rand(1971)]{rand1971}
William~M Rand.
\newblock Objective criteria for the evaluation of clustering methods.
\newblock \emph{Journal of the American Statistical association}, 66\penalty0 (336):\penalty0 846--850, 1971.

\bibitem[Rossi and Villa(2006)]{rossi2006support}
Fabrice Rossi and Nathalie Villa.
\newblock Support vector machine for functional data classification.
\newblock \emph{Neurocomputing}, 69\penalty0 (7-9):\penalty0 730--742, 2006.

\bibitem[Sarkar and Ghosh(2019)]{sarkar2019perfect}
Soham Sarkar and Anil~K Ghosh.
\newblock On perfect clustering of high dimension, low sample size data.
\newblock \emph{IEEE transactions on pattern analysis and machine intelligence}, 42\penalty0 (9):\penalty0 2257--2272, 2019.

\bibitem[Sriperumbudur et~al.(2010)Sriperumbudur, Gretton, Fukumizu, Sch{\"o}lkopf, and Lanckriet]{sriperumbudur2010hilbert}
Bharath~K Sriperumbudur, Arthur Gretton, Kenji Fukumizu, Bernhard Sch{\"o}lkopf, and Gert~RG Lanckriet.
\newblock Hilbert space embeddings and metrics on probability measures.
\newblock \emph{The Journal of Machine Learning Research}, 11:\penalty0 1517--1561, 2010.

\bibitem[Suarez and Ghosal(2016)]{suarez2016bayesian}
Adam~Justin Suarez and Subhashis Ghosal.
\newblock Bayesian clustering of functional data using local features.
\newblock \emph{Bayesian Analysis}, 11\penalty0 (1):\penalty0 71--98, 2016.

\bibitem[Tarpey(2007)]{doi:10.1198/000313007X171016}
Thaddeus Tarpey.
\newblock Linear transformations and the k-means clustering algorithm.
\newblock \emph{The American Statistician}, 61\penalty0 (1):\penalty0 34--40, 2007.

\bibitem[Tzeng et~al.(2018)Tzeng, Hennig, Li, and Lin]{tzeng2018dissimilarity}
ShengLi Tzeng, Christian Hennig, Yu-Fen Li, and Chien-Ju Lin.
\newblock Dissimilarity for functional data clustering based on smoothing parameter commutation.
\newblock \emph{Statistical methods in medical research}, 27\penalty0 (11):\penalty0 3492--3504, 2018.

\bibitem[Yamamoto(2012)]{yamamoto2012clustering}
Michio Yamamoto.
\newblock Clustering of functional data in a low-dimensional subspace.
\newblock \emph{Advances in Data Analysis and Classification}, 6:\penalty0 219--247, 2012.

\bibitem[Ziegel et~al.(2024)Ziegel, Ginsbourger, and D{\"u}mbgen]{ziegel2024characteristic}
Johanna Ziegel, David Ginsbourger, and Lutz D{\"u}mbgen.
\newblock Characteristic kernels on hilbert spaces, banach spaces, and on sets of measures.
\newblock \emph{Bernoulli}, 30\penalty0 (2):\penalty0 1441--1457, 2024.

\end{thebibliography}
\newpage


\section*{Supplementary Material} \label{supplementary}
The supplementary material contains the proofs and the technical details of the main paper.
\subsection*{Proof of Theorem 1}
 \textit{(a)} Since the observations come from more than one distributions, if we split the $d_w^*$ curve at its maximum value, we get two clusters which can be considered as a possible realization of size $N_{max}^*$ from distribution $P$ and of size ($n - N_{max}^*$) from distribution $Q$ for some distributions $P$ and $Q$ ($P$ and $Q$ may be single or mixture distributions). Therefore,
\begin{align*}
    \underset{r = 1,2,\dots,n-1}{\max}d_w^*(C_{1}^{(r)},C_{2}^{(r)}) = d_w^*(C_{1}^{(N_{max}^*)},C_{2}^{(N_{max}^*)}) = \frac{N_{max}^*(n-N_{max}^*)}{n}d(P,Q).
\end{align*}
\noindent
If $N_{min}^* < N_{max}^*$, $C_1^{(N_{min}^*)}$ can be considered as a possible realization of size $N_{min}^*$ from $P$ while $C_2^{(N_{min}^*)}$ can be considered as a possible realization of size ($n - N_{min}^*$) from ($N_{max}^* - N_{min}^*$)/($n - N_{min}^*$)$P$ + ($n - N_{max}^*$)/($n - N_{min}^*$)$Q$. Thus,
\begin{align*}
    \underset{r = 1,2,\dots,n-1}{\min}d_w^*(C_{1}^{(r)},C_{2}^{(r)}) &= d_w^*(\widetilde{P}_{C_{1}^{(N_{min}^*)}},\widetilde{P}_{C_{2}^{(N_{min}^*)}}) \\ &= \frac{N_{min}^*(n-N_{min}^*)}{n}d(P,\frac{N_{max}^* - N_{min}^*}{n - N_{min}^*}P+\frac{n - N_{max}^*}{n - N_{min}^*}Q) \\
    &= \frac{N_{min}^*(n-N_{min}^*)}{n}\frac{(n - N_{max}^*)^2}{(n - N_{min}^*)^2}d(P,Q)\\
    &= \frac{N_{min}^*(n - N_{max}^*)^2}{n(n - N_{min}^*)}d(P,Q).
\end{align*}
Therefore, in this case,
\begin{align*}
    V^* = \frac{\underset{r = 1,2,\dots,n-1}{\max}d_w^*(C_{1}^{(r)},C_{2}^{(r)})}{\underset{r = 1,2,\dots,n-1}{\min}d_w^*(C_{1}^{(r)},C_{2}^{(r)})} = \frac{N_{max}^*(n - N_{max}^*)}{n}\frac{n(n - N_{min}^*)}{N_{min}^*(n - N_{max}^*)^2}    = \frac{N^{*}_{max}(n-N^{*}_{min})}{N^{*}_{min}(n-N^{*}_{max})}.
\end{align*}
When $N_{min}^* > N_{max}^*$, $C_1^{(N_{min}^*)}$ can be considered as a possible realization of size $N_{min}^*$ from \{$N_{max}^*$/$N_{min}^*$\}$P$ + \{($N_{min}^* - N_{max}^*$)/$N_{min}^*$\}$Q$ while $C_2^{(N_{min}^*)}$ can be considered as a possible realization of size ($n - N_{min}^*$) from $Q$. 
﻿
Thus,
\begin{align*}
    \underset{r = 1,2,\dots,n-1}{\min}d_w^*(C_{1}^{(r)},C_{2}^{(r)}) &= d_w^*(C_{1}^{(N_{min}^*)},C_{2}^{(N_{min}^*)}) \\ &= \frac{N_{min}^*(n-N_{min}^*)}{n}d(\frac{N_{max}^*}{N_{min}^*}P + \frac{N_{min}^* - N_{max}^*}{N_{min}^*}Q,Q) \\
    &= \frac{N_{min}^*(n-N_{min}^*)}{n}\frac{{N_{max}^*}^2}{ {N_{min}^*}^2}d(P,Q)\\
    &= \frac{{N_{max}^*}^2(n - N_{min}^*)}{n N_{min}^*}d(P,Q).
\end{align*}
Therefore, in this case,
\begin{align*}
    V^* = \frac{\underset{r = 1,2,\dots,n-1}{\max}d_w^*(C_{1}^{(r)},C_{2}^{(r)})}{\underset{r = 1,2,\dots,n-1}{\min}d_w^*(C_{1}^{(r)},C_{2}^{(r)})} = \frac{N_{max}^*(n - N_{max}^*)}{n}\frac{n N_{min}^*}{(n - N_{min}^*){N_{max}^*}^2} = \frac{N^{*}_{min}(n-N^{*}_{max})}{N^{*}_{max}(n-N^{*}_{min})}.
\end{align*}
\noindent  Hence, $V^* = R^*$.\\ 
\textit{(b)} First observe that if we compare the $d_w^*$ curve at $r = 1$ and $r = n-1$, it follows that the value at $r = 1$ is larger than the value at $r = n-1$. This follows from the fact that, at both $r = 1$ and $r = n-1$, one of the subsets between $C_1$ and $C_2$ have $1$ observation while the other contains the rest. Furthermore, we maximize $d_w^*(C_1,C_2)$ at $r = 1$. Hence, the case $r = 1$ will not be considered when finding $N^*_{min}$. The remainder of the proof will be done separately for the cases $K = 2$, $K = 3$ and $K > 3$. We will assume without loss of any generality that till the $n_1$-th iteration, observations only from $\widehat{P}_1$ will be transferred to $C_1$. Further, between the $\left(\sum_{l=1}^{i-1} n_l+1\right)$-th and the $\left(\sum_{l=1}^i n_l\right)$-th iterations, observations only from $\widehat{P}_i$ will be transferred to $C_1$ for each $i=2,3,\ldots,K$ (cf. the discussion about the behaviour of $d_w^*$ curve for different values of $K$ provided after the description of the Algorithm CURBS-I*). \\
\underline{Case 1 ($K = 2$)}: It follows from equation (8) of the main paper that $N^*_{min} = n-1$. \\
\underline{Case 2 ($K = 3$)}: The $d_w^*$ curve is increasing between $r = 1$ to $r = n_1$, and it is decreasing between $r = n_1+n_2$ and $r = n-1$. Thus, we only need to compare the value at $r = n-1$ with the minimum value between $n_1+1$ and $n_1+n_2$. Note that the $d_w^*$ curve is convex between these iterations as can be seen from equation (9) of the main paper. \\
In case the $d_w^*$ curve is linear and increasing between the ($n_1+1$)-th and the ($n_1+n_2$)-th iterations (which would happen under condition (11) of the main paper), the minimum value between these iterations will be at $r = n_1+1$. This minimum value is also larger than the value at $r = n_1$ under the same condition (11) of the main paper. Thus, this minimum value is larger than the value at $r = n-1$. Hence, $N^*_{min} = n-1$. \\
In case the $d_w^*$ curve is linear and decreasing between the ($n_1+1$)-th and the ($n_1+n_2$)-th iterations (which would happen under condition (12) of the main paper), the minimum value between these iterations will be at $r = n_1+n_2$. Since the $d^*_w$ curve keeps on decreasing between the ($n_1+n_2$)-th and the ($n-1$)-th iterations, it follows that $N^*_{min} = n-1$. \\
In case the $d_w^*$ curve is U-shaped between the ($n_1+1$)-th and the ($n_1+n_2$)-th iterations (which would happen if neither condition (11) nor condition (12) hold), it can be shown that the minimum value of the $d_w^*$ curve between these iterations is larger than the value at $r = n-1$ if and only if
\begin{eqnarray*}
	n \geq \frac{u_n}{v_n} := \frac{(\alpha_{1n} + \alpha_{2n})\{\alpha_{1n}d(\widehat{P}_1,\widehat{P}_3) + \alpha_{2n}d(\widehat{P}_2,\widehat{P}_3)\} - \alpha_{1n}\alpha_{2n}d(\widehat{P}_1,\widehat{P}_2)}{(\alpha_{1n}+\gamma_{n})(1-\alpha_{1n}-\gamma_{n})d\left(\frac{\alpha_{1n}}{\alpha_{1n}+\gamma_{n}}\widehat{P}_{1}+\frac{\gamma_{n}}{\alpha_{1n}+\gamma_{n}}\widehat{P}_{2},\frac{\alpha_{2n}-\gamma_{n}}{1-\alpha_{1n}-\gamma_{n}}\widehat{P}_{2}+\frac{1-\alpha_{1n}-\alpha_{2n}}{1-\alpha_{1n}-\gamma_{n}}\widehat{P}_{3}\right)},
\end{eqnarray*}
where $\alpha_{in} = n_i/n$ for $i=1,2$, and 
\begin{eqnarray*}
	\gamma_n = \frac{\alpha_{1n}\left\{(1-\alpha_{1n})\sqrt{d(\widehat{P}_{1},\widehat{P}_{2})}-(1-\alpha_{1n}-\alpha_{2n})\sqrt{d(\widehat{P}_{2},\widehat{P}_{3})}\right\}}{\alpha_{1n}\sqrt{d(\widehat{P}_{1},\widehat{P}_{2})}+(1-\alpha_{1n}-\alpha_{2n})\sqrt{d(\widehat{P}_{2},\widehat{P}_{3})}}.
\end{eqnarray*}
It remains to show that $\limsup_{n \rightarrow \infty} u_n < \infty$ and $\liminf_{n \rightarrow \infty} v_n > 0$ which would then imply that\\ $\limsup_{n \rightarrow \infty} u_n/v_n < \infty$. Consequently, there exists $n_0 \geq 1$ such that whenever $n \geq n_0$, we will have $n \geq u_n/v_n$, and so $N^*_{min} = n-1$ for all $n \geq n_0$. Note that since the kernel $k(\cdot,\cdot)$ used in the MMD measure is bounded by $1$, and all of the $\alpha_{in}$'s are less or equal to $1$, it follows that $u_n \leq 4$ for all $n$. Thus, $\limsup_{n \rightarrow \infty} u_n < \infty$. To prove that $\liminf_{n \rightarrow \infty} v_n > 0$, note that since \{$v_{n}$\} is a bounded sequence, there exists a subsequence of $\{v_{n}\}$ which converges to $\liminf\limits_{n\rightarrow\infty}v_{n}$. Also, along that subsequence, there exist subsubsequences of $\{\alpha_{1n}\}$, $\{\alpha_{2n}\}$ and $\{\gamma_{n}\}$ which converge to $\alpha_1$, $\alpha_2$ and $\gamma$, respectively. Denote $\{m_n\}$ to be such a subsubsequence such that $v_{m_n} \rightarrow \liminf\limits_{n\rightarrow\infty}v_{n}$, $\alpha_{1m_{n}} \rightarrow \alpha_{1}$, $\alpha_{2m_{n}} \rightarrow \alpha_{2}$ and $\gamma_{m_n} \to \gamma$. This type of subsubsequence always exist as the sequences $\{\alpha_{1{n}}\}$, $\{\alpha_{2{n}}\}$ and $\{\gamma_{n}\}$ are bounded. In the following calculations, $\widehat{P}_1$, $\widehat{P}_2$ and $\widehat{P}_3$ represent three empirical distributions along $\{m_n\}$ sequence. To simplify the notation, we are using $\widehat{P}_{i}$ instead of the more cumbersome notation $\widehat{P}_{im_{n}}$ to represent these distributions for $i=1,2,3$. Then, 
\begin{align*}
	\left|v_{m_n}-(\alpha_{1}+\gamma)(1-\alpha_{1}-\gamma)d\left(\frac{\alpha_{1}}{\alpha_{1}+\gamma}P_{1}+\frac{\gamma}{\alpha_{1}+\gamma}P_{2},\frac{\alpha_{2}-\gamma}{1-\alpha_{1}-\gamma}P_{2}+\frac{1-\alpha_{1}-\alpha_{2}}{1-\alpha_{1}-\gamma}P_{3}\right) \right| \le \sum_{i=1}^6 J_i, \text{ where}
\end{align*}	
\begin{align*}	
	J_1 & =  d(\widehat{P}_{1},\widehat{P}_{2})\left\{|\alpha_{1m_{n}}(\alpha_{2m_{n}}-\gamma_{m_n})-\alpha_{1}(\alpha_{2}-\gamma)|+\left|\frac{\alpha_{1m_{n}}\gamma_{m_n}(1-\alpha_{1m_{n}}-\gamma_{m_n})}{\alpha_{1m_{n}}+\gamma_{m_n}}-\frac{\alpha_{1}\gamma(1-\alpha_{1}-\gamma)}{\alpha_{1}+\gamma}\right|\right\}, 
	\\
	J_2 & = \left|\alpha_{1}(\alpha_{2}-\gamma)-\frac{\alpha_{1}\gamma(1-\alpha_{1}-\gamma)}{\alpha_{1}+\gamma}\right|~|d(\widehat{P}_{1},\widehat{P}_{2})-d(P_1,P_2)|, 
	\\
	J_3 & = d(\widehat{P}_{1},\widehat{P}_{3})|\alpha_{1m_{n}}(1-\alpha_{1m_{n}}-\alpha_{2m_{n}})
	-\alpha_{1}(1-\alpha_{1}-\alpha_{2})|,
	\\
	J_4 & = \alpha_{1}(1-\alpha_{1}-\alpha_{2})|d(\widehat{P}_{1},\widehat{P}_{3})-d(P_1,P_3)|,
	\\
	J_5 & =  d(\widehat{P}_{2},\widehat{P}_{3})\left\{|\gamma_{m_n}(1-\alpha_{1m_{n}}-\alpha_{2m_{n}}) 
	-\gamma(1-\alpha_{1}-\alpha_{2})| \textcolor{white}{+\left|\frac{(\alpha_{1m_{n}}+\gamma_{m_n})(\alpha_{2m_{n}}-\gamma_{m_n})(1-\alpha_{1m_{n}}-\alpha_{2m_{n}})}{1-\alpha_{1m_{n}}-\gamma_{m_n}}-\frac{(\alpha_{1}+\gamma)(\alpha_{2}-\gamma)(1-\alpha_{1}-\alpha_{2})}{1-\alpha_{1}-\gamma}\right|}\right. \\ 
	& + \left. \left|\frac{(\alpha_{1m_{n}}+\gamma_{m_n})(\alpha_{2m_{n}}-\gamma_{m_n})(1-\alpha_{1m_{n}}-\alpha_{2m_{n}})}{1-\alpha_{1m_{n}}-\gamma_{m_n}}-\frac{(\alpha_{1}+\gamma)(\alpha_{2}-\gamma)(1-\alpha_{1}-\alpha_{2})}{1-\alpha_{1}-\gamma}\right|\right\}, \text{ and}
	\\
	J_6 & = \left|\gamma(1-\alpha_{1}-\alpha_{2})-\frac{(\alpha_{1}+\gamma)(\alpha_{2}-\gamma)(1-\alpha_{1}-\alpha_{2})}{1-\alpha_{1}-\gamma}\right|~|d(\widehat{P}_{2},\widehat{P}_{3})-d(P_{2},P_{3})|.
\end{align*}
Since each term of the RHS in the above inequality converges to zero as $n\to\infty$, it follows that $v_{m_n} \to (\alpha_{1}+\gamma)(1-\alpha_{1}-\gamma)d(\frac{\alpha_{1}}{\alpha_{1}+\gamma}P_{1}+\frac{\gamma}{\alpha_{1}+\gamma}P_{2},\frac{\alpha_{2}-\gamma}{1-\alpha_{1}-\gamma}P_{2}+\frac{1-\alpha_{1}-\alpha_{2}}{1-\alpha_{1}-\gamma}P_{3})$. Due to the assumptions of theorem, it follows that $(\alpha_{1}+\gamma)(1-\alpha_{1}-\gamma)>0$ which now ensures that $\liminf\limits_{n\rightarrow\infty}v_{n}>0$.\\
\underline{Case 3 ($K > 3$)}: The $d_w^{*}$ curve is increasing between $r=1$ and $r=n_1$, and it is decreasing between $r=\sum_{i=1}^{K-1}n_{i}$ and $r=n-1$. Thus, we need to compare the value at $r=n-1$ with the minimum value between the $(n_{1}+1)$-th and the $(\sum_{i=1}^{K-1}n_{i})$-th iterations. It is to be noted that, for $l=2,3,\dots,K-1$, the $d_w^*$ curve is convex between the $(\sum_{i=1}^{l-1}n_{i})$-th and the $(\sum_{i=1}^{l}n_{i})$-th iterations as can be seen from equation (13) of the main paper. 
	\\
In case for $l=2,3,\dots,K-1$, the $d_w^*$ curve is increasing between the ($\sum_{i=1}^{l-1}n_{i}+1$)-th and the $(\sum_{i=1}^{l}n_{i})$-th iterations (which would happen under condition (14) of the main paper), the minimum value between these iterations will be at $r=\sum_{i=1}^{l-1}n_{i}+1$, and the minimum value is larger than the value at $r=\sum_{i=1}^{l-1}n_{i}$. If the $d_w^*$ curve is decreasing between the ($\sum_{i=1}^{l-1}n_{i}+1$)-th and the $(\sum_{i=1}^{l}n_{i})$-th iterations (which would happen under condition (15) of the main paper), the minimum value between these iterations will be at $r=\sum_{i=1}^{l}n_{i}$. Thus, it is required to check whether the value at $r=n-1$ is smaller than the value at $r=\sum_{i=1}^{l}n_{i}$ for all $l=2,3,\dots,K-1$. It can be shown that for $l=2,3,\dots,K-1$, the value of $d_w^*$ at $r=n-1$ is smaller than the value at $r=\sum_{i=1}^{l}n_{i}$ if and only if
\begin{eqnarray*}
    n \geq \frac{u_n^{'}}{v_n^{'}} := \frac{d\left(\frac{\sum_{i=1}^{K-1}n_{i}\widehat{P}_{i}}{n-1}+\frac{n_{K}-1}{n-1}\widehat{P}_{K},\widehat{P}_{K}\right)}{(\sum_{i=1}^{l}\alpha_{in})(\sum_{j=l+1}^{K}\alpha_{jn})d\left(\frac{\sum_{i=1}^{l}n_{i}\widehat{P}_{i}}{\sum_{i=1}^{l}n_{i}},\frac{\sum_{j=l+1}^{K}n_{j}\widehat{P}_{j}}{\sum_{j=l+1}^{K}n_{j}}\right)},
\end{eqnarray*}
where $\alpha_{in}=n_{i}/n$ for $i=1,2,\dots,K$. Since the kernel $k(\cdot,\cdot)$ used in the MMD measure is bounded by $1$, and all of the $\alpha_{in}$'s are less or equal to $1$, it follows that $u_n^{'} \leq 4$ for all $n$. Thus, $\limsup\limits_{n\rightarrow\infty}u_n^{'} < \infty$. To prove that $\liminf\limits_{n\rightarrow\infty}v_n^{'} > 0$, it is to be noted that since $\{v_n^{'}\}$ is a bounded sequence, there exists a subsequence of $\{v_n^{'}\}$ which converges to $\liminf\limits_{n\rightarrow\infty}v_n^{'}$. Also, along that subsequence, there exist subsubsequences of $\{\alpha_{in}\}$ which converges to $\alpha_{i}$, for $i=1,2,\dots,K$. Denote $\{m_{n}\}$ to be a subsubsequence such that $v_{m_n}^{'}\rightarrow\liminf\limits_{n\rightarrow\infty}v_n^{'}$ and $\alpha_{im_{n}}\rightarrow\alpha_{i}$ for $i=1,2,\dots,K$. This type of subsubsequences always exist as the sequences $\{\alpha_{in}\}$ is bounded for $i=1,2,\dots,K$.\\ Then,
\begin{align*}
    &\left|v_{m_n}^{'}-(\sum_{i=1}^{l}\alpha_{i})(\sum_{j=l+1}^{K}\alpha_{j})d\left(\frac{\sum_{i=1}^{l}\alpha_{i}{P}_{i}}{\sum_{i=1}^{l}\alpha_{i}},\frac{\sum_{j=l+1}^{K}\alpha_{j}{P}_{j}}{\sum_{j=l+1}^{K}\alpha_{j}}\right)\right|\\
    &\leq d\left(\frac{\sum_{i=1}^{l}n_{i}\widehat{P}_{i}}{\sum_{i=1}^{l}n_{i}},\frac{\sum_{j=l+1}^{K}n_{j}\widehat{P}_{j}}{\sum_{j=l+1}^{K}n_{j}}\right)|(\sum_{i=1}^{l}\alpha_{in})(\sum_{j=l+1}^{K}\alpha_{jn})-(\sum_{i=1}^{l}\alpha_{i})(\sum_{j=l+1}^{K}\alpha_{j})|\\&+(\sum_{i=1}^{l}\alpha_{i})(\sum_{j=l+1}^{K}\alpha_{j})\left|d\left(\frac{\sum_{i=1}^{l}n_{i}\widehat{P}_{i}}{\sum_{i=1}^{l}n_{i}},\frac{\sum_{j=l+1}^{K}n_{j}\widehat{P}_{j}}{\sum_{j=l+1}^{K}n_{j}}\right)-d\left(\frac{\sum_{i=1}^{l}\alpha_{i}{P}_{i}}{\sum_{i=1}^{l}\alpha_{i}},\frac{\sum_{j=l+1}^{K}\alpha_{j}{P}_{j}}{\sum_{j=l+1}^{K}\alpha_{j}}\right)\right|.
\end{align*}
Since each term of the RHS in the above inequality converges to zero as $n\rightarrow\infty$, it follows that \\$v_{m_n}^{'}\rightarrow(\sum_{i=1}^{l}\alpha_{i})(\sum_{j=l+1}^{K}\alpha_{j})d\left(\frac{\sum_{i=1}^{l}\alpha_{i}{P}_{i}}{\sum_{i=1}^{l}\alpha_{i}},\frac{\sum_{j=l+1}^{K}\alpha_{j}{P}_{j}}{\sum_{j=l+1}^{K}\alpha_{j}}\right)$. Due to the assumptions of theorem, it follows that $(\sum_{i=1}^{l}\alpha_{i})(\sum_{j=l+1}^{K}\alpha_{j})>0$ which ensures that $\liminf\limits_{n\rightarrow\infty}v_n^{'}>0$. Now, $\limsup\limits_{n\rightarrow\infty}u_{n}^{'}<\infty$ and $\liminf\limits_{n\rightarrow\infty}v_{n}^{'}>0$ together imply that $\limsup\limits_{n\rightarrow\infty}u_{n}^{'}/{v_{n}^{'}}<\infty$. Consequently, there exists $n_0^{'}\geq 1$ such that $n\geq n_0^{'}$, we will have $n \geq u_{n}^{'}/{v_{n}^{'}}$, and the value of $d_w^*$ will be smaller than the value at $r=\sum_{i=1}^{l}n_{l}$ for $n \geq n_0^{'}$.
\\
If the $d_w^*$ curve is U-shaped between the ($\sum_{i=1}^{l-1}n_{i}+1$)-th and the $(\sum_{i=1}^{l}n_{i})$-th iterations (which would happen under certain condition), it can be shown that the minimum value of the $d_w^*$ curve between these iterations is larger than the value at $r=n-1$ if and only if  $n \geq \widetilde{u}_n/\widetilde{v}_n$, where
\begin{eqnarray*}
    \frac{\widetilde{u}_n}{\widetilde{v}_n} := \frac{d\left(\frac{\sum_{i=1}^{K-1}n_{i}\widehat{P}_{i}}{n-1}+\frac{n_{K}-1}{n-1}\widehat{P}_{K},\widehat{P}_{K}\right)}{(\sum_{i=1}^{l-1}\alpha_{in}+\gamma_{n})(1-\sum_{i=1}^{l-1}\alpha_{in}-\gamma_{n})d\left(\frac{\sum_{i=1}^{l-1}\alpha_{in}\widehat{P}_{i}}{\sum_{i=1}^{l-1}\alpha_{in}+\gamma_{n}}+\frac{\gamma_{n}}{\sum_{i=1}^{l-1}\alpha_{in}+\gamma_{n}}\widehat{P}_{l},\frac{\alpha_{ln}-\gamma_{n}}{1-\sum_{i=1}^{l-1}\alpha_{in}-\gamma_{n}}\widehat{P}_{l}+\frac{\sum_{j=l+1}^{K}\alpha_{jn}\widehat{P}_{j}}{1-\sum_{i=1}^{l-1}\alpha_{in}-\gamma_{n}}\right)},
\end{eqnarray*}
and
\begin{eqnarray*}
    \gamma_{n} = \frac{(\sum_{i=1}^{l-1}\alpha_{in})\left\{(1-\sum_{i=1}^{l-1}\alpha_{in})\sqrt{d\left(\frac{\sum_{i=1}^{l-1}n_{i}\widehat{P}_{i}}{\sum_{i=1}^{l-1}n_{i}},\widehat{P}_{l}\right)}-(1-\sum_{i=1}^{l}\alpha_{in})\sqrt{d\left(\widehat{P}_{l},\frac{\sum_{j=l+1}^{K}n_{j}\widehat{P}_{j}}{\sum_{j=l+1}^{K}n_{j}}\right)}\right\}}{(\sum_{i=1}^{l-1}\alpha_{in})\sqrt{d\left(\frac{\sum_{i=1}^{l-1}n_{i}\widehat{P}_{i}}{\sum_{i=1}^{l-1}n_{i}},\widehat{P}_{l}\right)}+(1-\sum_{i=1}^{l}\alpha_{in})\sqrt{d\left(\widehat{P}_{l},\frac{\sum_{j=l+1}^{K}n_{j}\widehat{P}_{j}}{\sum_{j=l+1}^{K}n_{j}}\right)}}.
\end{eqnarray*}
If $\limsup\limits_{n\rightarrow\infty}\widetilde{u}_{n}<\infty$ and $\liminf\limits_{n\rightarrow\infty}\widetilde{v}_{n}>0$, then it would imply that $\limsup\limits_{n\rightarrow\infty}\widetilde{u}_{n}/{\widetilde{v}_{n}}<\infty$. Consequently, there exists $\widetilde{n}_0 \geq 1$ such that whenever $n\geq \widetilde{n}_0$, we will have $n\geq \widetilde{u}_{n}/\widetilde{v}_{n}$, and therefore, the value of $d_w^*$ at $r=n-1$ will be smaller than the value at any iteration between $\sum_{i=1}^{l-1}n_{i}+1$ and $\sum_{i=1}^{l}n_{i}$. Since the kernel $k(\cdot,\cdot)$ used in the MMD measure is bounded by $1$ and all of the $\alpha_{in}$'s are less or equal to $1$, it follows that $\widetilde{u}_n \leq 4$ for all $n$. Thus, $\limsup\limits_{n\rightarrow\infty}\widetilde{u}_n < \infty$. To prove that $\liminf\limits_{n\rightarrow\infty}\widetilde{v}_n > 0$, note that since $\{\widetilde{v}_n\}$ is a bounded sequence, there exists a subsequence of $\{\widetilde{v}_n\}$ which converges to $\liminf\limits_{n\rightarrow\infty}\widetilde{v}_n$. Also, along that subsequence, there exist subsubsequences of $\{\alpha_{in}\}$ and $\{\gamma_{n}\}$ which converge to $\alpha_{i}$ and $\gamma$, respectively for $i=1,2,\dots,K$. Denote $\{l_{n}\}$ to be a subsubsequence such that $\widetilde{v}_{l_n}\rightarrow\liminf\limits_{n\rightarrow\infty}\widetilde{v}_n$, $\alpha_{il_{n}}\rightarrow \alpha_{i}$ and $\gamma_{l_n}\rightarrow\gamma$ for $i=1,2,\dots,K$. This type of subsubsequences always exist as the sequences $\{\alpha_{in}\}$ and $\{\gamma_{n}\}$ are bounded, $i=1,2,\dots,K$. Then, by the triangle inequality,
\begin{align*}
    &\left|\widetilde{v}_{l_n}-(\sum_{i=1}^{l-1}\alpha_{i}+\gamma)(1-\sum_{i=1}^{l-1}\alpha_{i}-\gamma)d\left(\frac{\sum_{i=1}^{l-1}\alpha_{i}P_{i}}{\sum_{i=1}^{l-1}\alpha_{i}+\gamma}+\frac{\gamma}{\sum_{i=1}^{l-1}\alpha_{i}+\gamma}P_{l},\frac{\alpha_{l}-\gamma}{1-\sum_{i=1}^{l-1}\alpha_{i}-\gamma}P_{l}+\frac{\sum_{j=l+1}^{K}\alpha_{j}P_{j}}{1-\sum_{i=1}^{l-1}\alpha_{i}-\gamma}\right)\right|\\
    &\leq I_1+I_2+I_3+I_4+I_5+I_6+I_7+I_8, \text{ where}
    \end{align*}
    \begin{align*}
    I_1=d\left(\frac{\sum_{i=1}^{l-1}\alpha_{il_{n}}\widehat{P}_{i}}{\sum_{i=1}^{l-1}\alpha_{il_{n}}},\widehat{P}_{l}\right)\left|(\sum_{i=1}^{l-1}\alpha_{il_{n}})(\alpha_{ll_{n}}-\gamma_{l_n})-(\sum_{i=1}^{l-1}\alpha_{i})(\alpha_{l}-\gamma)\right|,
    \end{align*}
    \begin{align*}
    I_2=d\left(\frac{\sum_{i=1}^{l-1}\alpha_{il_{n}}\widehat{P}_{i}}{\sum_{i=1}^{l-1}\alpha_{il_{n}}},\widehat{P}_{l}\right)\left|\frac{\gamma_{l_n}(\sum_{i=1}^{l-1}\alpha_{il_{n}})(1-\sum_{i=1}^{l-1}\alpha_{il_{n}}-\gamma_{l_n})}{\sum_{i=1}^{l-1}\alpha_{il_{n}}+\gamma_{l_n}}-\frac{\gamma(\sum_{i=1}^{l-1}\alpha_{i})(1-\sum_{i=1}^{l-1}\alpha_{i}-\gamma)}{\sum_{i=1}^{l-1}\alpha_{i}+\gamma}\right|,
    \end{align*}
    \begin{align*}
    I_3=\left|(\sum_{i=1}^{l-1}\alpha_{i})(\alpha_{l}-\gamma)-\frac{\gamma(\sum_{i=1}^{l-1}\alpha_{i})(1-\sum_{i=1}^{l-1}\alpha_{i}-\gamma)}{\sum_{i=1}^{l-1}\alpha_{i}+\gamma}\right|\left|d\left(\frac{\sum_{i=1}^{l-1}\alpha_{il_{n}}\widehat{P}_{i}}{\sum_{i=1}^{l-1}\alpha_{il_{n}}},\widehat{P}_{l}\right) -d\left(\frac{\sum_{i=1}^{l-1}\alpha_{i}P_{i}}{\sum_{i=1}^{l-1}\alpha_{i}},P_{l}\right)\right|, 
    \end{align*}
    \begin{align*}
    I_4=d\left(\frac{\sum_{i=1}^{l-1}\alpha_{il_{n}}\widehat{P}_{i}}{\sum_{i=1}^{l-1}\alpha_{il_{n}}},\frac{\sum_{j=l+1}^{K}\alpha_{jl_{n}}\widehat{P}_{j}}{\sum_{j=l+1}^{K}\alpha_{jl_{n}}}\right)\left|(\sum_{i=1}^{l-1}\alpha_{il_{n}})(1-\sum_{i=1}^{l}\alpha_{il_{n}})-(\sum_{i=1}^{l-1}\alpha_{i})(1-\sum_{i=1}^{l}\alpha_{i})\right|,
    \end{align*}
    \begin{align*}
    I_5=(\sum_{i=1}^{l-1}\alpha_{i})(1-\sum_{i=1}^{l}\alpha_{i})\left|d\left(\frac{\sum_{i=1}^{l-1}\alpha_{il_{n}}\widehat{P}_{i}}{\sum_{i=1}^{l-1}\alpha_{il_{n}}},\frac{\sum_{j=l+1}^{K}\alpha_{jl_{n}}\widehat{P}_{j}}{\sum_{j=l+1}^{K}\alpha_{jl_{n}}}\right)-d\left(\frac{\sum_{i=1}^{l-1}\alpha_{i}\widehat{P}_{i}}{\sum_{i=1}^{l-1}\alpha_{i}},\frac{\sum_{j=l+1}^{K}\alpha_{j}\widehat{P}_{j}}{\sum_{j=l+1}^{K}\alpha_{j}}\right)\right|,
    \end{align*}
    \begin{align*}
I_6=d\left(\widehat{P}_{l},\frac{\sum_{j=l+1}^{K}\alpha_{jl_{n}}\widehat{P}_{j}}{\sum_{j=l+1}^{K}\alpha_{jl_{n}}}\right)|\gamma_{l_n}(1-\sum_{i=1}^{l}\alpha_{il_{n}})-\gamma(1-\sum_{i=1}^{l}\alpha_{i})|,
\end{align*}
    \begin{align*}
    I_7=d\left(\widehat{P}_{l},\frac{\sum_{j=l+1}^{K}\alpha_{jl_{n}}\widehat{P}_{j}}{\sum_{j=l+1}^{K}\alpha_{jl_{n}}}\right)\left|\frac{(\sum_{i=1}^{l-1}\alpha_{il_{n}}+\gamma_{l_n})(\alpha_{ll_{n}}-\gamma_{l_n})(1-\sum_{i=1}^{l}\alpha_{il_{n}})}{1-\sum_{i=1}^{l-1}\alpha_{il_{n}}-\gamma_{l_n}} -\frac{(\sum_{i=1}^{l-1}\alpha_{i}+\gamma)(\alpha_{l}-\gamma)(1-\sum_{i=1}^{l}\alpha_{i})}{1-\sum_{i=1}^{l-1}\alpha_{i}-\gamma}\right|,
    \end{align*}
    \begin{align*}
    I_8=\left|\gamma(1-\sum_{i=1}^{l}\alpha_{i})-\frac{(\sum_{i=1}^{l-1}\alpha_{i}+\gamma)(\alpha_{l}-\gamma)(1-\sum_{i=1}^{l}\alpha_{i})}{1-\sum_{i=1}^{l-1}\alpha_{i}-\gamma}\right|\left|d\left(\widehat{P}_{l},\frac{\sum_{j=l+1}^{K}\alpha_{jl_{n}}\widehat{P}_{j}}{\sum_{j=l+1}^{K}\alpha_{jl_{n}}}\right)-d\left(P_{l},\frac{\sum_{j=l+1}^{K}\alpha_{j}P_{j}}{\sum_{j=l+1}^{K}\alpha_{j}}\right)\right|.
\end{align*}
Since each of $I_i$'s ($i=1,2,\dots,8$) converges to zero as $n\rightarrow\infty$, it follows that $\widetilde{v}_{l_n}\rightarrow (\sum_{i=1}^{l-1}\alpha_{i}+\gamma)(1-\sum_{i=1}^{l-1}\alpha_{i}-\gamma)d\left(\frac{\sum_{i=1}^{l-1}\alpha_{i}P_{i}}{\sum_{i=1}^{l-1}\alpha_{i}+\gamma}+\frac{\gamma}{\sum_{i=1}^{l-1}\alpha_{i}+\gamma}P_{l},\frac{\alpha_{l}-\gamma}{1-\sum_{i=1}^{l-1}\alpha_{i}-\gamma}P_{l}+\frac{\sum_{j=l+1}^{K}\alpha_{j}P_{j}}{1-\sum_{i=1}^{l-1}\alpha_{i}-\gamma}\right)$. Due to the assumptions of theorem, it follows that $(\sum_{i=1}^{l-1}\alpha_{i}+\gamma)(1-\sum_{i=1}^{l-1}\alpha_{i}-\gamma)>0$ which ensures that $\liminf\limits_{n\rightarrow\infty}\widetilde{v}_{n}>0$.\\
Hence, for $n\geq \max\{(n_0^{'},\widetilde{n}_0)\}:=n_0^*$, the value of $d_w^*$ at iteration $r=n-1$ will be smaller than the value at any iteration in between $r=\sum_{i=1}^{l-1}n_{i}$ and $r=\sum_{i=1}^{l}n_{i}$. We can find $n_0^{*}$ for $l=2,3,\dots,K-1$ and denote their maximum by $n_0$. (However, the values of $d_w^*$ at the $n_1$-th iteration and the $(n-n_K)$-th iteration are larger than the value at the $(n-1)$-th iteration as the $d_w^{*}$ curve is increasing between $r=1$ and $r=n_1$, and it is decreasing between $r=\sum_{i=1}^{K-1}n_{i}$ and $r=n-1$). Hence, $N_{\min}^{*}=n-1$ for $n\geq n_0$.
\\
\textit{(c)} Under the condition in part \textit{(b)}, there exists $n_0 \geq 1$ such that $N_{\min}^{*}=n-1$ for $n \geq n_0$. Hence, for $n \geq n_0$, $N_{\min}^{*}(n-N_{\min}^{*})=n-1$. Since the maximum of $d_w^*$ occurs in between iterations $r=n_{1}$ and $r=n-n_{K}$, $N_{\max}^{*} \geq n_{1}$ and $n-N_{\max}^{*} \geq n_{K}$. Therefore,
\begin{eqnarray*}
    H^* = \frac{N_{\max}^{*}(n-N_{\max}^{*})}{N_{\min}^{*}(n-N_{\min}^{*})} \geq \frac{n_{1}n_{K}}{n-1} > \frac{n_{1}n_{K}}{n} = n\alpha_{1n}\alpha_{Kn}.
\end{eqnarray*}
Under the conditions of theorem, $\liminf\limits_{n\rightarrow\infty}\alpha_{1n}\alpha_{Kn}>0$. Thus, $H^*$ can be made arbitrarily large for large values of $n$. Hence, there exists $n^*\geq n_0$ such that for $n \geq n^*$, $H^*>1$.
﻿
\subsection*{Proof of Theorem 2}
 We will prove this theorem for two-class problem ($K=2$), three-class problem ($K=3$) and general $K \:(>3)$-class problem.\\ 
\underline{Case 1 ($K = 2$)}:
First, we consider the case when the observations are coming from two different distributions. Let us denote the two empirical distributions by $\widehat{P}_1$ and $\widehat{P}_2$, which have been constructed based on samples of sizes $n_1$ and $n_2$, respectively. Denote $n = n_1 + n_2$.\\\noindent
\noindent
The following lemma describes the behaviour of $d_w^*$ curve in two-class problem. 
\begin{lemma}\label{lma2}
In two-class problem, the following statements are true.
\begin{enumerate} [label=(\alph*)]
    \item An observation from $\widehat{P}_1$ will be transferred to $C_1$ in the first iteration if and only if $n_1 \leq n_2$. Furthermore, if $n_1 \leq n_2$, an observation of $\widehat{P}_1$ will be transferred to $C_1$ in each of first $n_1$ iterations and $d_w^*$ curve keeps on increasing till the $n_1$-th iteration.
    \item Under conditions of $(a)$, the $d_{w}^{*}$ curve goes on decreasing after the $n_1$-th iteration.
\end{enumerate}
\end{lemma}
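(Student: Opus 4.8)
The plan is to reduce everything to the mixture identity \eqref{MMDproperty} together with the elementary observation that the representative distribution of a set depends only on how many observations it contains from each of the two populations, not on which ones. Under this reduction every quantity $d_w^*(S_1,S_2)$ evaluated by Algorithm BS* becomes a nonnegative scalar multiple of the fixed number $d(\widehat{P}_1,\widehat{P}_2)$, so each maximization in Step~2 turns into a comparison of rational functions of $n_1,n_2,r$. The proof is an induction on the iteration index $r$ with running hypothesis: \emph{after $r$ iterations, for $1\le r\le n_1$, the set $C_1^{(r)}$ consists of exactly $r$ observations, all from $\widehat{P}_1$}. Once this is in place, substituting the representative distributions of $C_1^{(r)}$ and $C_2^{(r)}$ into $d_w^*$ recovers the first branch of \eqref{dwcurve-two}, from which monotonicity is read off directly.

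First I would handle the base case $r=1$. Since $C_1^{(0)}=\emptyset$, transferring a candidate $c$ produces the pair $(\{c\},\,C_2^{(0)}\setminus\{c\})$. If $c$ is from $\widehat{P}_1$ then $\widetilde P_{\{c\}}=\widehat{P}_1$ and $\widetilde P_{C_2^{(0)}\setminus\{c\}}=\frac{n_1-1}{n-1}\widehat{P}_1+\frac{n_2}{n-1}\widehat{P}_2$, so \eqref{MMDproperty} gives $d_w^*=\frac{n_2^2}{n(n-1)}\,d(\widehat{P}_1,\widehat{P}_2)$; if $c$ is from $\widehat{P}_2$ the symmetric computation gives $\frac{n_1^2}{n(n-1)}\,d(\widehat{P}_1,\widehat{P}_2)$. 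Hence a $\widehat{P}_1$-observation is a maximizer in Step~2 if and only if $n_2^2\ge n_1^2$, i.e. $n_1\le n_2$; this is the ``if and only if'' assertion of part~(a), with the case $n_1=n_2$ a genuine tie that we break, without loss of generality, in favour of $\widehat{P}_1$ (as permitted by the non-uniqueness of $c$ in Algorithm BS*).

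Next I would carry out the inductive step for $1\le r\le n_1-1$. Under the hypothesis, $\widetilde P_{C_1^{(r)}}=\widehat{P}_1$ and $\widetilde P_{C_2^{(r)}}=\frac{n_1-r}{n-r}\widehat{P}_1+\frac{n_2}{n-r}\widehat{P}_2$. Transferring another $\widehat{P}_1$-observation gives, via \eqref{MMDproperty}, the value $\frac{(r+1)n_2^2}{n(n-r-1)}\,d(\widehat{P}_1,\widehat{P}_2)$, whereas transferring a $\widehat{P}_2$-observation gives $\frac{(rn_2-n_1)^2}{n(r+1)(n-r-1)}\,d(\widehat{P}_1,\widehat{P}_2)$ after simplifying the mixing coefficient $\frac{r}{r+1}-\frac{n_1-r}{n-r-1}=\frac{rn_2-n_1}{(r+1)(n-r-1)}$. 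The $\widehat{P}_1$-move therefore wins if and only if $(r+1)^2n_2^2\ge(rn_2-n_1)^2$, i.e. $(r+1)n_2\ge|rn_2-n_1|$; when $rn_2\ge n_1$ this is trivial, and when $rn_2<n_1$ it reduces to $(2r+1)n_2\ge n_1$, which follows at once from $r\ge1$ and $n_1\le n_2$. This closes the induction: only $\widehat{P}_1$-observations are transferred through the $n_1$-th iteration. Substituting the representative distributions back yields $d_w^*(C_1^{(r)},C_2^{(r)})=\frac{rn_2^2}{n(n-r)}\,d(\widehat{P}_1,\widehat{P}_2)$ for $1\le r\le n_1$, and since $r\mapsto r/(n-r)$ is strictly increasing on $[0,n)$ the curve is strictly increasing up to $r=n_1$, proving part~(a).

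For part~(b) there is no optimization left: after the $n_1$-th iteration $C_2$ contains only $\widehat{P}_2$-observations, so every subsequent transfer moves a $\widehat{P}_2$-observation. Writing $r=n_1+s$ with $0\le s\le n_2-1$, we have $\widetilde P_{C_1^{(r)}}=\frac{n_1}{r}\widehat{P}_1+\frac{s}{r}\widehat{P}_2$ and $\widetilde P_{C_2^{(r)}}=\widehat{P}_2$, so \eqref{MMDproperty} gives $d_w^*(C_1^{(r)},C_2^{(r)})=\frac{n_1^2(n-r)}{nr}\,d(\widehat{P}_1,\widehat{P}_2)$ for $n_1\le r\le n$ --- the second branch of \eqref{dwcurve-two} --- which is strictly decreasing because $r\mapsto(n-r)/r$ is. The only genuinely delicate point in the whole argument is the bookkeeping in the inductive step: correctly identifying the representative distributions (which, crucially, depend only on the population counts) and checking that the comparison inequality $(r+1)n_2\ge|rn_2-n_1|$ holds uniformly over $1\le r\le n_1-1$; everything else is routine algebra powered by \eqref{MMDproperty}.
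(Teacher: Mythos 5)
Your proposal is correct and follows essentially the same route as the paper's proof: the same base-case comparison $n_2^2$ vs.\ $n_1^2$, the same induction tracking representative distributions through the mixture identity \eqref{MMDproperty}, the same two candidate values $\frac{n_2^2}{(n-r-1)^2}d(\widehat{P}_1,\widehat{P}_2)$ and $\frac{(rn_2-n_1)^2}{(r+1)^2(n-r-1)^2}d(\widehat{P}_1,\widehat{P}_2)$ in the inductive step, and the same closed forms \eqref{dwcurve-two} for monotonicity. Your comparison inequality $(r+1)n_2\ge|rn_2-n_1|$ is just a rearrangement of the paper's bound $(rn_2-n_1)^2\le r^2n_2^2\le(r+1)^2n_2^2$, so there is no substantive difference.
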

\begin{proof} 
\textit{(a)}
If the first observation in $C_1$ is from $\widehat{P}_1$, then
\begin{center}
    $\widetilde{P}_{C_1^{(1)}} = \widehat{P}_1$ and $\widetilde{P}_{C_2^{(2)}} = \frac{n_1-1}{n_1+n_2-1}\widehat{P}_1+\frac{n_2}{n_1+n_2-1}\widehat{P}_2$,
\end{center}
and if the first observation in $C_1$ is from $\widehat{P}_2$, then
\begin{center}
    $\widetilde{P}_{C_1^{(1)}} = \widehat{P}_2$ and $\widetilde{P}_{C_2^{(1)}} = \frac{n_1}{n_1+n_2-1}\widehat{P}_1+\frac{n_2-1}{n_1+n_2-1}\widehat{P}_2$.
\end{center}
Now,
        \begin{align*}
            &d\left(\widehat{P}_1,\frac{n_1-1}{n_1+n_2-1}\widehat{P}_1+\frac{n_2}{n_1+n_2-1}\widehat{P}_2\right) - d\left(\widehat{P}_2,\frac{n_1}{n_1+n_2-1}\widehat{P}_1+\frac{n_2-1}{n_1+n_2-1}\widehat{P}_2\right) \\
            &= \left(1 - \frac{n_1-1}{n_1+n_2-1}\right)^{2}d(\widehat{P}_1,\widehat{P}_2) - \left(\frac{n_1}{n_1+n_2-1}\right)^{2}d(\widehat{P}_1,\widehat{P}_2) \\
            & =\frac{n_2^{2}-n_1^{2}}{(n_1+n_2-1)^{2}}d(\widehat{P}_1,\widehat{P}_2),
        \end{align*}
        which is non-negative if and only if $n_1 \leq n_2$. Since maximizing $d_w^*$ in a particular iteration is equivalent to maximizing $d$ between the representative distributions of the sets of that iteration, $d_w^*(C_1^{(1)},C_1^{(2)})$ is maximized when an observation from $\widehat{P}_1$ is transferred to $C_1$ in the first iteration if and only if $n_1 \leq n_2$.
\\
Now, assume that $r$ ($1 \leq r \leq n_1-1$) many observations of $\widehat{P}_1$ have already been transferred to $C_1$ in first $r$ iterations. After the $r$-th iteration, we have two sets $C_1$ which contains observations from $\widehat{P}_1$ only and $C_2$ which contains observations from both $\widehat{P}_1$ and $\widehat{P}_2$. Then,
\begin{center}
    $\widetilde{P}_{C_{1}^{(r)}} = \widehat{P}_1$ and $\widetilde{P}_{C_{2}^{(r)}} = \frac{n_1-r}{n_1+n_2-r}\widehat{P}_1 + \frac{n_2}{n_1+n_2-r}\widehat{P}_2$.
\end{center}
\noindent
  \noindent
  Then,
  \begin{align} \label{eq:46}
      d_w^*(C_1^{(r)},C_2^{(r)}) &= \frac{r(n-r)}{n}d(\widehat{P}_1,\frac{n_1-r}{n_1+n_2-r}\widehat{P}_1 + \frac{n_2}{n_1+n_2-r}\widehat{P}_2) \nonumber\\
      &=\frac{r(n-r)}{n}\left(\frac{n_2}{n-r}\right)^2d(\widehat{P}_1,\widehat{P}_2)\nonumber\\
      &=\frac{rn_2^2}{n(n-r)}d(\widehat{P}_1,\widehat{P}_2).
  \end{align}
  In the next iteration, we have two options: either transfer an observation of $\widehat{P}_1$ or transfer an observation of $\widehat{P}_2$ to $C_1$. First, we consider the situation where an observation of distribution $\widehat{P}_1$ is transferred from $C_2$ to $C_1$ in the ($r+1$)-th iteration. In this case, $C_{1}^{(r+1)}$ can be considered as a possible realization of $\widehat{P}_1$ and $C_{2}^{(r+1)}$ can be considered as a possible realization of $\frac{n_1-r-1}{n_1+n_2-r-1}$$\widehat{P}_1 + \frac{n_2}{n_1+n_2-r-1}\widehat{P}_2$, respectively.
\noindent
\newline
\noindent
 Thus,
 \begin{center}
$\widetilde{P}_{C_{1}^{(r+1)}} = \widehat{P}_1$ and $\widetilde{P}_{C_{1}^{(r+1)}} = \frac{n_1-r-1}{n_1+n_2-r-1}\widehat{P}_1 + \frac{n_2}{n_1+n_2-r-1}\widehat{P}_2$
 \end{center}
 and 
\begin{align} \label{eq:60}
d_{w}^{*}(C_{1}^{(r+1)},C_{2}^{(r+1)}) 
=& \frac{(r+1)(n_1+n_2-r-1)}{n_1+n_2}d\left(\widehat{P}_1,\frac{n_1-r-1}{n_1+n_2-r-1}\widehat{P}_1 + \frac{n_2}{n_1+n_2-r-1}\widehat{P}_2\right).
\end{align}
\noindent
 If an observation of distribution $\widehat{P}_2$ is transferred from $C_2$ to $C_1$ in the ($r+1$)-th iteration, $C_{1}^{(r+1)}$ can be considered as a possible realization of $\frac{r}{r+1}$$\widehat{P}_1 + \frac{1}{r+1}$$\widehat{P}_2$ and $C_{2}^{(r+1)}$ can be considered as a possible realization of $\frac{n_1-r}{n-r-1}\widehat{P}_1 + \frac{n_2-1}{n-r-1}\widehat{P}_2$.
\noindent
\newline
 Thus,
 \begin{center}
     $\widetilde{P}_{C_{1}^{(r+1)}} = \frac{r}{r+1}\widehat{P}_1 + \frac{1}{r+1}\widehat{P}_2$ and $\widetilde{P}_{C_{2}^{(r+1)}} = \frac{n_1-r}{n-r-1}\widehat{P}_1 + \frac{n_2-1}{n-r-1}\widehat{P}_2$. 
 \end{center}
 Therefore,
\begin{align} \label{eq:61}
d_{w}^{*}(C_{1}^{(r+1)},C_{2}^{(r+1)})
 = \frac{(r+1)(n-r-1)}{n}d\left(\frac{r}{r+1}\widehat{P}_1 + \frac{1}{r+1}\widehat{P}_2,\frac{n_1-r}{n-r-1}\widehat{P}_1 + \frac{n_2-1}{n-r-1}\widehat{P}_2\right).  
\end{align}
\noindent
 It is to be noted that
\begin{align*}
    d\left(\widehat{P}_1,\frac{n_1-r-1}{n-r-1}\widehat{P}_1 + \frac{n_2}{n-r-1}\widehat{P}_2\right) =& \left(1 - \frac{n_1-r-1}{n-r-1}\right)^{2}d(\widehat{P}_1,\widehat{P}_2) \nonumber\\ =& \frac{n_2^2}{(n-r-1)^2}d(\widehat{P}_1,\widehat{P}_2), 
\end{align*}
and
\begin{align*}
d\left(\frac{r}{r+1}\widehat{P}_1 + \frac{1}{r+1}\widehat{P}_2,\frac{n_1-r}{n-r-1}\widehat{P}_1 + \frac{n_2-1}{n-r-1}\widehat{P}_2\right) =& \left(\frac{r}{r+1} - \frac{n_1-r}{n-r-1}\right)^{2}d(\widehat{P}_1,\widehat{P}_2) \nonumber\\
=& \frac{(rn_2-n_1)^2}{(r+1)^{2}(n_1+n_2-r-1)^{2}}d(\widehat{P}_1,\widehat{P}_2). 
\end{align*}
Since $n_2 \geq n_1$, $(rn_2-n_1)^2 \leq r^{2}n_{2}^{2}$, we have
\begin{align*}
    \frac{(rn_2-n_1)^2}{(r+1)^{2}(n-r-1)^{2}}) \leq \frac{r^{2}n_{2}^{2}}{(r+1)^{2}(n-r-1)^{2}} 
    \leq \frac{n_{2}^{2}}{(n-r-1)^{2}}. 
\end{align*}
Hence, the RHS of \eqref{eq:60} will be higher than the RHS of \eqref{eq:61} if $n_1 \leq n_2$. Thus, $d_w^{*}(C_{1}^{(r+1)},C_{2}^{(r+1)})$ will be maximized if an observation of $\widehat{P}_1$ is transferred to $C_1$ in the $(r+1)$-th iteration. Therefore, in each of the first $n_1$ iterations, observations from $\widehat{P}_1$ will be transferred to $C_1$. Also, from \eqref{eq:46}, it is clear that in the first $n_1$ iterations, $d_w^*$ curve is an increasing function of iteration number. Hence, $d_w^*$ curve keeps on increasing till the $n_1$-th iteration.
\\
\textit{(b)} Under the assumption $n_1 \leq n_2$, after the $n_1$-th iteration, observations from $\widehat{P}_2$ will be transferred to $C_1$ in each iteration. Then, for $s = 0,1,2,\dots,n_2-1$, in the $(n_1+s)$-th iteration, $C_1$ and $C_2$ are going to be a possible realization of their size from $\frac{n_1}{n_1+s}\widehat{P}_1 + \frac{s}{n_1+s}\widehat{P}_2$ and $\widehat{P}_2$, respectively. 
\\
\noindent
  Thus,
\begin{center}
    $\widetilde{P}_{C_{1}^{(n_1+s)}} = \frac{n_1}{n_1+s}\widehat{P}_1 + \frac{s}{n_1+s}\widehat{P}_2$, and  $\widetilde{P}_{C_{2}^{(n_1+s)}} = \widehat{P}_2$. 
    \end{center}
Therefore,
\begin{align} \label{eq:47}
    d_w^{*}(C_{1}^{(n_1+s)},C_{2}^{(n_1+s)}) &= \frac{(n_1+s)(n_2-s)}{n}d\left(\frac{n_1}{n_1+s}\widehat{P}_1+\frac{s}{n_1+s}\widehat{P}_2, \widehat{P}_2\right) \nonumber\\
      & = \frac{n_1^{2}(n_2-s)}{n(n_1+s)}d(\widehat{P}_1,\widehat{P}_2),
  \end{align}
  which is a decreasing function of $s$. Therefore, $d_w^*$ curve goes on decreasing after the $n_1$-th iteration.
        \end{proof}
        \noindent
        Combining \eqref{eq:46} and \eqref{eq:47}, we get equation (8) of the main paper, the explicit form of $d_w^*$ in the two-class problem. Lemma \ref{lma2} indicates that the $d_w^*$ curve keeps on increasing till the $n_1$-th iteration and keeps on decreasing after the $n_1$-th iteration. Hence, $d_w^*(C_1^{(r)},C_2^{(r)})$ is maximum at $r=n_1$. As the decision made by the Algorithm SCC* is always correct in the two-class problem under the conditions of Theorem 2 (which follows from Theorem 1), the Algorithm SCC* detects that the set contains observations coming from more than one distribution. Thus, the Algorithm BS* splits the set into two clusters of which one contains all observations of $\widehat{P}_1$ and the other contains all observations of $\widehat{P}_2$. For each of these two sets, the Algorithm SCC* detects that the observations are coming from a single distribution. Hence, the Algorithm CURBS-I* is perfect for $K=2$.  
\noindent
\\
\underline{Case 2 ($K = 3$)}: Denote the three empirical distributions by $\widehat{P}_1$, $\widehat{P}_2$ and $\widehat{P}_3$, which have been constructed based on samples of sizes $n_1$, $n_2$ and $n_3$, respectively. Denote $n = n_1 + n_2 + n_3$.  \\\noindent
\noindent
The following lemma describes the behaviour of $d_w^*$ curve in the three-class problem. Let us restate the conditions $(A1)$, $(A2)$ and $(A3)$ which will be used in proving the lemma.\\
$(A1)$ ($n_2 - n_1$)$d(\widehat{P}_1,\widehat{P}_2) + n_3$\{$d(\widehat{P}_1,\widehat{P}_3) - d(\widehat{P}_2,\widehat{P}_3)$\} $\geq 0$
\\
$(A2)$ ($n_3 - n_1$)$d(\widehat{P}_1,\widehat{P}_3) + n_2$\{$d(\widehat{P}_1,\widehat{P}_2) - d(\widehat{P}_2,\widehat{P}_3)$\} $\geq 0$.
\\
$(A3)$ $n_1(n_2 + n_3 - 1)\{d(\widehat{P}_1,\widehat{P}_3) - d(\widehat{P}_1,\widehat{P}_2)\} + (n_3 - n_2)(n_1 + 1)d(\widehat{P}_2,\widehat{P}_3)$ $\geq 0$.    
﻿
      \begin{lemma}\label{lma3}
      In the three-class problem, the following statements are true. 
        \begin{enumerate} [label=(\alph*)]
            \item In the first iteration, an observation from $\widehat{P}_1$ will be transferred to $C_1$ if and only if both of the conditions $(A1)$ and $(A2)$ hold. Furthermore, if $(A1)$ and $(A2)$ hold, in each of first $n_1$ iterations, an observation from $\widehat{P}_1$ will be transferred to $C_1$, and the $d_w^*$ curve goes on increasing till the $n_1$-th iteration.
            \item Under the conditions $(A1)$ and $(A2)$, an observation from $\widehat{P}_2$ will be be transferred to $C_1$ in the $(n_1+1)$-th iteration if and only if the condition $(A3)$ holds. Furthermore, if $(A1)$, $(A2)$ and $(A3)$ hold, observations from $\widehat{P}_2$ will be transferred to $C_1$ from the $(n_1+1)$-th to the $(n_1+n_2)$-th iterations, and the $d_w^*$ curve is convex between the $n_1$-th and the $(n_1+n_2)$-th iterations.
            \item Under conditions $(A1)$, $(A2)$ and $(A3)$, $d_w^*$ curve goes on decreasing after the ($n_1+n_2$)-th iteration.
        \end{enumerate}
 \end{lemma}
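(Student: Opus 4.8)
The plan is to reduce the whole lemma to bookkeeping on top of one algebraic identity. First I would record that, by the mean-embedding description behind Remark~\ref{MMDremark}, for any two probability vectors $\alpha=(\alpha_1,\alpha_2,\alpha_3)$ and $\beta=(\beta_1,\beta_2,\beta_3)$ one has $d\big(\sum_i\alpha_i\widehat P_i,\sum_i\beta_i\widehat P_i\big)=-\tfrac12\sum_{i,j}(\alpha_i-\beta_i)(\alpha_j-\beta_j)\,d(\widehat P_i,\widehat P_j)$, the squared-norm terms cancelling because $\sum_i(\alpha_i-\beta_i)=0$. At every iteration the representative distributions of $C_1^{(r)}$ and $C_2^{(r)}$ are such three-component mixtures whose weights are determined solely by how many observations of each $\widehat P_i$ currently lie in each set, so every value $d_w^*(C_1^{(r)},C_2^{(r)})$ — and every candidate value obtained by transferring one more observation — becomes an \emph{explicit} linear combination of $d_{12}:=d(\widehat P_1,\widehat P_2)$, $d_{13}:=d(\widehat P_1,\widehat P_3)$, $d_{23}:=d(\widehat P_2,\widehat P_3)$ with coefficients that are rational in $n_1,n_2,n_3,r$. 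Hence ``which observation maximizes $d_w^*$ at a given step'' reduces to checking the sign of a linear form in $d_{12},d_{13},d_{23}$.

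For part (a): at $r=1$ the singleton $C_1^{(1)}$ comes from exactly one of the three populations, and I would write out $d_w^*(C_1^{(1)},C_2^{(1)})$ in the three cases. After the cancellations above, the difference between the $\widehat P_1$-case and the $\widehat P_2$-case is $\tfrac{1}{n-1}\{(n_2-n_1)d_{12}+n_3(d_{13}-d_{23})\}$, which is nonnegative precisely when $(A1)$ holds, and the difference between the $\widehat P_1$-case and the $\widehat P_3$-case is nonnegative precisely when $(A2)$ holds; so the first transferred observation comes from $\widehat P_1$ iff $(A1)$ and $(A2)$. For the induction, assume $(A1),(A2)$ and that after $r<n_1$ steps $C_1^{(r)}$ consists of $r$ observations of $\widehat P_1$, so its representative distribution is still $\widehat P_1$ while $C_2^{(r)}$ has composition $(n_1-r,n_2,n_3)$; the three candidate values at step $r+1$ are again linear forms, and the $\widehat P_1$-versus-$\widehat P_3$ comparison is a monotone perturbation of $(A2)$ (with $n_1$ entering through the smaller quantity $n_1-r$) and the $\widehat P_1$-versus-$\widehat P_2$ comparison a monotone perturbation of $(A1)$, so $(A1),(A2)$ force a further $\widehat P_1$-transfer. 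Substituting this composition into the identity yields the first branch of \eqref{dwcurve-three}, $d_w^*(C_1^{(r)},C_2^{(r)})=\frac{r}{n(n-r)}\big\{n_2(n_2+n_3)d_{12}+n_3(n_2+n_3)d_{13}-n_2n_3d_{23}\big\}$, whose bracket equals $(n_2+n_3)^2\,d\big(\widehat P_1,\tfrac{n_2\widehat P_2+n_3\widehat P_3}{n_2+n_3}\big)>0$ for distinct populations; since $r\mapsto r/(n-r)$ is increasing on $\{1,\dots,n_1\}$, the curve increases up to $r=n_1$.

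Part (b) repeats the template one level deeper: after the $n_1$-th iteration $C_1$ holds all of $\widehat P_1$ and $C_2$ has composition $(0,n_2,n_3)$, so only $\widehat P_2$- or $\widehat P_3$-transfers are possible, and comparing the two resulting linear forms shows the next transfer comes from $\widehat P_2$ iff $(A3)$. That $\widehat P_2$-observations keep being transferred through step $n_1+n_2$ follows, as in the main text, from the three-population reduction — the current mixture in $C_1$ acts as the first population, the $\widehat P_2$-candidate as the second, $\widehat P_3$ as the third — with the deciding inequality a monotone perturbation of $(A3)$. With the composition of $C_1^{(r)},C_2^{(r)}$ known on $n_1\le r\le n_1+n_2$, the identity gives the second branch of \eqref{dwcurve-three}, and subtracting consecutive values reproduces \eqref{eq:62}, $d_w^*(C_1^{(r)},C_2^{(r)})-d_w^*(C_1^{(r-1)},C_2^{(r-1)})=\frac{n_3^2}{(n-r)(n-r+1)}d_{23}-\frac{n_1^2}{r(r-1)}d_{12}$, which is manifestly increasing in $r$; hence $d_w^*$ is (discretely) convex on $[n_1,n_1+n_2]$. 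For part (c), after step $n_1+n_2$ only $\widehat P_3$-observations remain in $C_2$, so all of them are transferred, and the identity gives $d_w^*(C_1^{(r)},C_2^{(r)})=\frac{n-r}{rn}\big\{n_1(n_1+n_2)d_{13}+n_2(n_1+n_2)d_{23}-n_1n_2d_{12}\big\}$ (third branch of \eqref{dwcurve-three}); the bracket equals $(n_1+n_2)^2\,d\big(\tfrac{n_1\widehat P_1+n_2\widehat P_2}{n_1+n_2},\widehat P_3\big)>0$ and $r\mapsto(n-r)/r$ is strictly decreasing, so $d_w^*$ decreases after the $(n_1+n_2)$-th iteration.

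The routine parts are the cancellations producing the linear forms and the three closed-form branches; the genuine obstacle is the two induction steps — showing that once the algorithm begins peeling off observations of $\widehat P_1$ (resp.\ $\widehat P_2$) it keeps doing so at \emph{every} intermediate iteration using only the sign conditions $(A1),(A2)$ (resp.\ $(A3)$) imposed at the first step of that phase. The clean way is to observe that each intermediate comparison has the same algebraic shape as the corresponding first-step comparison, with the ``remaining'' population counts in place of the original ones, so that the required sign is inherited by monotonicity rather than re-derived afresh; carrying out this inheritance carefully, and checking along the way that every relevant bracket stays strictly positive so that no degenerate ties arise, is where the real work of the proof lies.
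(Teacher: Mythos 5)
Your proposal follows essentially the same route as the paper's proof: represent $C_1^{(r)}$ and $C_2^{(r)}$ by three-component mixture distributions, expand every $d_w^*$ value as a linear form in the pairwise squared MMDs $d_{ij}:=d(\widehat{P}_i,\widehat{P}_j)$, and decide each transfer by the sign of a difference of two such forms. Your bilinear identity $d\bigl(\sum_i\alpha_i\widehat{P}_i,\sum_i\beta_i\widehat{P}_i\bigr)=-\tfrac12\sum_{i,j}(\alpha_i-\beta_i)(\alpha_j-\beta_j)d_{ij}$ is a cleaner packaging of the expansions the paper carries out term by term, and your discrete convexity via the increasing increment \eqref{eq:62} replaces the paper's second derivative of the continuous interpolation; both are fine, as are your first-step comparisons and part (c).

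The one place you stop short is exactly the place you flag: the two persistence (induction) steps. For what it is worth, your ``monotone perturbation'' claim is correct and can be made exact. For $1\le r<n_1$, the difference in $d_w^*$ at step $r+1$ between transferring a $\widehat{P}_1$-observation and a $\widehat{P}_2$-observation equals
\begin{equation*}
\frac{1}{(r+1)(n-r-1)}\Bigl\{(r+1)\bigl[(n_2-n_1)d_{12}+n_3(d_{13}-d_{23})\bigr]+rn\,d_{12}\Bigr\},
\end{equation*}
i.e.\ $(r+1)$ times the left-hand side of $(A1)$ plus the nonnegative term $rn\,d_{12}$ (symmetrically, the $\widehat{P}_3$-comparison is $(r+1)$ times the left-hand side of $(A2)$ plus $rn\,d_{13}$); for the second phase the paper's own computation exhibits the perturbation as $+\frac{rn(2n_3-1)}{(n_1+r+1)(n_2+n_3-r-1)(n_2+n_3-1)}d_{23}\ge 0$ added to a positive multiple of the left-hand side of $(A3)$. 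Note that the paper disposes of the first persistence step differently, by invoking the two-class Lemma~\ref{lma2} with $\widehat{P}_2$ replaced by the mixture of $\widehat{P}_2$ and $\widehat{P}_3$ (itself only a sketch, since transferring one $\widehat{P}_2$- or $\widehat{P}_3$-observation is not the same as transferring an observation ``of the mixture''), so your direct comparison, once written out as above, is arguably the more honest version. As submitted, however, the crux of parts (a) and (b) is asserted by analogy rather than derived, so the proposal is a correct plan rather than a complete proof.
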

\noindent
\begin{proof}
\textit{(a)} If the first observation transferred is from $\widehat{P}_1$, $C_1$ can be considered as a possible realization of size 1 from $\widehat{P}_1$ while $C_2$ can be considered as a possible realization of size ($n-1$) from mixture of $\widehat{P}_1$ and mixture of $\widehat{P}_2$ and $\widehat{P}_3$. Thus, $\widetilde{P}_{C_1^{(1)}} = \widehat{P}_1$ and 
\begin{align*}
\widetilde{P}_{C_2^{(1)}} =& \frac{n_1-1}{n_1+n_2+n_3-1}\widehat{P}_1 + \frac{n_2+n_3}{n_1+n_2+n_3-1}\left\{\frac{n_2}{n_2+n_3}\widehat{P}_2 + \frac{n_3}{n_2+n_3}\widehat{P}_3\right\} \\
=& \frac{n_1-1}{n_1+n_2+n_3-1}\widehat{P}_1 + \frac{n_2}{n_1+n_2+n_3-1}\widehat{P}_2 + \frac{n_3}{n_1+n_2+n_3-1}\widehat{P}_3.
\end{align*}
Therefore,
\begin{align} \label{eq:40}
    d_w^{*}(C_1^{(1)},C_2^{(1)}) = \frac{n_1+n_2+n_3-1}{n_1+n_2+n_3}d\left(\widehat{P}_1,\frac{n_1-1}{n_1+n_2+n_3-1}\widehat{P}_1+\frac{n_2}{n_1+n_2+n_3-1}\widehat{P}_2 +\frac{n_3}{n_1+n_2+n_3-1}\widehat{P}_3\right).
\end{align}
Similarly, if an observation of $\widehat{P}_2$ is transferred to $C_1$ in the first iteration then
\begin{align} \label{eq : 41}
    d_w^{*}(C_1^{(1)},C_2^{(1)}) = \frac{n_1+n_2+n_3-1}{n_1+n_2+n_3}d\left(\widehat{P}_2,\frac{n_1}{n_1+n_2+n_3-1}\widehat{P}_1+\frac{n_2-1}{n_1+n_2+n_3-1}\widehat{P}_2 +\frac{n_3}{n_1+n_2+n_3-1}\widehat{P}_3\right),
\end{align}
and if an observation of $\widehat{P}_3$ is transferred to $C_1$ in the first iteration then
\begin{align} \label{eq : 42}
    d_w^{*}(C_1^{(1)},C_2^{(1)}) = \frac{n_1+n_2+n_3-1}{n_1+n_2+n_3}d\left(\widehat{P}_3,\frac{n_1}{n_1+n_2+n_3-1}\widehat{P}_1+\frac{n_2}{n_1+n_2+n_3-1}\widehat{P}_2 +\frac{n_3-1}{n_1+n_2+n_3-1}\widehat{P}_3\right).
\end{align}
    Simple algebraic calculation shows that the RHS of \eqref{eq:40} will be higher than the RHS of \eqref{eq : 41} if and only if $(A1)$ holds and the RHS of \eqref{eq:40} will be higher than the RHS of \eqref{eq : 42} if and only if $(A2)$ holds. Also, if both $(A1)$ and $(A2)$ hold, in each of the first $n_1$ iterations, an observation from $\widehat{P}_1$ will be transferred to $C_1$ which can be easily proved using similar argument of part $(a)$ of Lemma \ref{lma2} with $\widehat{P}_2$ replaced by the mixture of $\widehat{P}_2$ and $\widehat{P}_3$.\\
    Hence, for $r=1,2,\dots,n_1$,
    \begin{center}
        $\widetilde{P}_{C_1^{(r)}} = \widehat{P}_1$ and $\widetilde{P}_{C_2^{(r)}} = \frac{n_1-r}{n-r}\widehat{P}_1 + \frac{n_2}{n-r}\widehat{P}_2 + \frac{n_3}{n-r}\widehat{P}_3$.
    \end{center}
 Therefore, for $r=1,2,\dots,n_1$,
 \begin{align} \label{eq:50}
     d_w^*(C_1^{(r)},C_2^{(r)}) &= \frac{r(n-r)}{n}d\left(\widehat{P}_1,\frac{n_1-r}{n-r}\widehat{P}_1 + \frac{n_2}{n-r}\widehat{P}_2 + \frac{n_3}{n-r}\widehat{P}_3\right) \nonumber\\
     &=\frac{r}{n(n-r)}\{n_2(n_2+n_3)d(\widehat{P}_1,\widehat{P}_2)+n_3(n_2+n_3)d(\widehat{P}_1,\widehat{P}_3)-n_2n_3d(\widehat{P}_2,\widehat{P}_3)\},
 \end{align}
which is an increasing function of $r$. Therefore, $d_w^*$ curve keeps on increasing till the $n_1$-th iteration.\\
\textit{(b)}  If an observation of $\widehat{P}_2$ is transferred to $C_1$ in the ($n_1+1$)-th iteration then 
\begin{center}
$\widetilde{P}_{C_1^{(n_1+1)}} = \frac{n_1}{n_1+1}\widehat{P}_1 + \frac{1}{n_1+1}\widehat{P}_2$ and $\widetilde{P}_{C_2^{(n_1+1)}} = \frac{n_2-1}{n_2+n_3-1}\widehat{P}_2 + \frac{n_3}{n_2+n_3-1}\widehat{P}_3$.
\end{center}
 Therefore,    
\begin{align} \label{eq:43}
d_w^{*}(C_1^{(n_1+1)},C_2^{(n_1+1)}) 
= \frac{(n_1+1)(n_2+n_3-1)}{n}d\left(\frac{n_1}{n_1+1}\widehat{P}_1 + \frac{1}{n_1+1}\widehat{P}_2,\frac{n_2-1}{n_2+n_3-1}\widehat{P}_2 + \frac{n_3}{n_2+n_3-1}\widehat{P}_3\right). 
\end{align}
\noindent
\newline
\noindent
   On the other hand if an observation of $\widehat{P}_3$ is transferred to $C_1$ in the ($n_1+1$)-th iteration then 
\begin{center}
$\widetilde{P}_{C_1^{(n_1+1)}} = \frac{n_1}{n_1+1}\widehat{P}_1 + \frac{1}{n_1+1}\widehat{P}_3$ and $\widetilde{P}_{C_2^{(n_1+1)}} = \frac{n_2}{n_2+n_3-1}\widehat{P}_2 + \frac{n_3-1}{n_2+n_3-1}\widehat{P}_3$.
\end{center}
 Therefore,    
\begin{align} \label{eq:44}
d_w^{*}(C_1^{(n_1+1)},C_2^{(n_1+1)}) 
= \frac{(n_1+1)(n_2+n_3-1)}{n}d\left(\frac{n_1}{n_1+1}\widehat{P}_1 + \frac{1}{n_1+1}\widehat{P}_3,\frac{n_2}{n_2+n_3-1}\widehat{P}_2 + \frac{n_3-1}{n_2+n_3-1}\widehat{P}_3\right). 
\end{align}
Now,
\begin{align*}
     &d\left(\frac{n_1}{n_1+1}\widehat{P}_1 + \frac{1}{n_1+1}\widehat{P}_2,\frac{n_2-1}{n_2+n_3-1}\widehat{P}_2 + \frac{n_3}{n_2+n_3-1}\widehat{P}_3\right) - d\left(\frac{n_1}{n_1+1}\widehat{P}_1 + \frac{1}{n_1+1}\widehat{P}_3,\frac{n_2}{n_2+n_3-1}\widehat{P}_2 + \frac{n_3-1}{n_2+n_3-1}\widehat{P}_3\right)\\
     &= -\frac{n_1}{(n_1+1)^2}d(\widehat{P}_1,\widehat{P}_2) - \frac{(n_2-1)n_3}{(n_2+n_3-1)^2}d(\widehat{P}_2,\widehat{P}_3) + \frac{n_1(n_2-1)}{(n_1+1)(n_2+n_3-1)}d(\widehat{P}_1,\widehat{P}_2) + \frac{n_1n_3}{(n_1+1)(n_2+n_3-1)}d(\widehat{P}_1,\widehat{P}_3)\\&+ \frac{n_3}{(n_1+1)(n_2+n_3-1)}d(\widehat{P}_2,\widehat{P}_3) - \left\{-\frac{n_1}{(n_1+1)^2}d(\widehat{P}_1,\widehat{P}_3) - \frac{n_2(n_3-1)}{(n_2+n_3-1)^2}d(\widehat{P}_2,\widehat{P}_3) + \frac{n_1n_2}{(n_1+1)(n_2+n_3-1)}d(\widehat{P}_1,\widehat{P}_2)\right.\\&\left.+ \frac{n_1(n_3-1)}{(n_1+1)(n_2+n_3-1)}d(\widehat{P}_1,\widehat{P}_3) + \frac{n_2}{(n_1+1)(n_2+n_3-1)}d(\widehat{P}_2,\widehat{P}_3)\right\}\\
     &= \frac{n_1+n_2+n_3}{(n_1+1)(n_2+n_3-1)}\left[\frac{n_1}{n_1+1}\{d(\widehat{P}_1,\widehat{P}_3) - d(\widehat{P}_1,\widehat{P}_2)\} + \frac{n_3-n_2}{n_2+n_3-1}d(\widehat{P}_2,\widehat{P}_3)\right].
\end{align*}
Thus, the RHS of \eqref{eq:43} will be higher than the RHS of \eqref{eq:44} if and only if 
$n_1(n_2+n_3-1)\{d(\widehat{P}_1,\widehat{P}_3) - d(\widehat{P}_1,\widehat{P}_2)\} + (n_3-n_2)(n_1+1)d(\widehat{P}_2,\widehat{P}_3) \geq 0$ , i.e., $(A3)$ holds. Therefore, $d_w^*(C_1^{(n+1)},C_2^{(n+1)})$ will be maximized if an observation from $\widehat{P}_2$ is transferred to $C_1$ in the $(n_1+1)$-th iteration if and only if $(A3)$ holds.\\
Let us assume that after transferring all the observations of $\widehat{P}_1$ to $C_1$, we have transferred $r$ many observations of $\widehat{P}_2$ in $C_1$. In the first case we assume that an observation from $\widehat{P}_2$ is transferred to $C_1$ in the next iteration while in the other case we assume that an observation from $\widehat{P}_3$ is transferred to $C_1$ in the next iteration. Let us investigate these two cases separately.\\
 If an observation of distribution $\widehat{P}_2$ is transferred from $C_2$ to $C_1$ in the ($n_1+r+1$)-th iteration, $C_1$ is a possible realization of size $n_1+r+1$ of $\frac{n_1}{n_1+r+1}\widehat{P}_1 + \frac{r+1}{n_1+r+1}\widehat{P}_2$ and $C_2$ is a possible realization of size ($n_2+n_3-r-1$) of the distribution $\frac{n_2-r-1}{n_2+n_3-r-1}\widehat{P}_2 + \frac{n_3}{n_2+n_3-r-1}\widehat{P}_3$. 
\noindent
\begin{center}
$\widetilde{P}_{C_1^{(n_1+r+1)}} = \frac{n_1}{n_1+r+1}$$\widehat{P}_1 + \frac{r+1}{n_1+r+1}$$\widehat{P}_2$ and $\widetilde{P}_{C_2^{(n_1+r+1)}} = \frac{n_2-r-1}{n_2+n_3-r-1}\widehat{P}_2 + \frac{n_3}{n_2+n_3-r-1}\widehat{P}_3$.
\end{center}
 Therefore,    
\begin{align}
d_w^{*}(C_1^{(n_1+r+1)},C_2^{(n_1+r+1)}) 
= \frac{(n_1+r+1)(n_2+n_3-r-1)}{n}&d\left(\frac{n_1}{n_1+r+1}\widehat{P}_1 + \frac{r+1}{n_1+r+1}\widehat{P}_2,\frac{n_2-r-1}{n_2+n_3-r-1}\widehat{P}_2 \right.\nonumber\\
	&\left.+ \frac{n_3}{n_2+n_3-r-1}\widehat{P}_3\right). 
\end{align}
 \noindent
  On the other hand, if an observation of distribution $\widehat{P}_3$ is transferred from $C_2$ to $C_1$ in the $(n_1+r+1)$-th iteration, $C_1$ is a possible realization of size $n_1+r+1$ of $\frac{n_1}{n_1+r+1}\widehat{P}_1 + \frac{r}{n_1+r+1}\widehat{P}_2 + \frac{1}{n_1+r+1}\widehat{P}_3$ and $C_2$ is a possible realization of size $n_2+n_3-r-1$ of the distribution $\frac{n_2-r}{n_2+n_3-r-1}\widehat{P}_2 + \frac{n_3-1}{n_2+n_3-r-1}\widehat{P}_3$. \\
Thus
\begin{center}
$\widetilde{P}_{C_1^{(n_1+r+1)}} = \frac{n_1}{n_1+r+1}$$\widehat{P}_1 + \frac{r}{n_1+r+1}\widehat{P}_2+\frac{1}{n_1+r+1}\widehat{P}_3$ and $\widetilde{P}_{C_2^{(n_1+r+1)}} = \frac{n_2-r}{n_2+n_3-r-1}\widehat{P}_2 + \frac{n_3-1}{n_2+n_3-r-1}\widehat{P}_3$.
\end{center}
 Therefore,    
\begin{align}
d_w^{*}(C_1^{(n_1+r+1)},C_2^{(n_1+r+1)}) 
= \frac{(n_1+r+1)(n_2+n_3-r-1)}{n}d&\left(\frac{n_1}{n_1+r+1}\widehat{P}_1 + \frac{r}{n_1+r+1}\widehat{P}_2 + \frac{1}{n_1+r+1}\widehat{P}_3,\right.\nonumber\\ &\left.\frac{n_2-r}{n_2+n_3-r-1}\widehat{P}_2 + \frac{n_3-1}{n_2+n_3-r-1}\widehat{P}_3\right). \end{align}
 \noindent
Doing some algebraic calculations it can be shown that
\begin{align*}
    &d\left(\frac{n_1}{n_1+r+1}\widehat{P}_1 + \frac{r+1}{n_1+r+1}\widehat{P}_2,\frac{n_2-r-1}{n_2+n_3-r-1}\widehat{P}_2 + \frac{n_3}{n_2+n_3-r-1}\widehat{P}_3\right)\\&-d\left(\frac{n_1}{n_1+r+1}\widehat{P}_1 + \frac{r}{n_1+r+1}\widehat{P}_2 + \frac{1}{n_1+r+1}\widehat{P}_3,\frac{n_2-r}{n_2+n_3-r-1}\widehat{P}_2 + \frac{n_3-1}{n_2+n_3-r-1}\widehat{P}_3\right)\\
    &= \frac{(n_1+1)[n_1(n_2+n_3-1)\{d(\widehat{P}_1,\widehat{P}_3) - d(\widehat{P}_1,\widehat{P}_2)\} + (n_3-n_2)(n_1+1)d(\widehat{P}_2,\widehat{P}_3)]}{(n_1+r+1)^2(n_2+n_3-r-1)} \\&+ \frac{rn(2n_3-1)}{(n_1+r+1)(n_2+n_3-r-1)(n_2+n_3-1)}d(\widehat{P}_2,\widehat{P}_3).
\end{align*}
If $(A3)$ holds, the above quantity will be positive. Thus $d_w^*$ will be higher if an observation from $\widehat{P}_2$ is transferred in the ($n_1+r+1$)-th iteration. Therefore, observations from $\widehat{P}_2$ will be transferred to $C_1$ from the $(n_1+1)$-th to the $(n_1+n_2)$-th iteration.\\
For $r=n_1,n_1+1,\dots,n_1+n_2$,
\begin{center}
    $\widetilde{P}_{C_1^{(r)}} = \frac{n_1}{r}\widehat{P}_1 + \frac{r-n_1}{r}\widehat{P}_2$ and $\widetilde{P}_{C_2^{(r)}} = \frac{n_1+n_2-r}{n-r}\widehat{P}_2 + \frac{n_3}{n-r}\widehat{P}_3$,
\end{center}
and
\begin{align} \label{eq:48}
    d_w^*(C_1^{(r)},C_2^{(r)}) &= \frac{r(n-r)}{n}d\left(\frac{n_1}{r}\widehat{P}_1 + \frac{r-n_1}{r}\widehat{P}_2,\frac{n_1+n_2-r}{n-r}\widehat{P}_2 + \frac{n_3}{n-r}\widehat{P}_3\right) \nonumber\\
    &=\frac{nn_1-r(n_1+n_3)}{n}\left\{\frac{n_1}{r}d(\widehat{P}_1,\widehat{P}_2) - \frac{n_3}{n-r}d(\widehat{P}_2,\widehat{P}_3)\right\}+\frac{n_1n_3}{n}d(\widehat{P}_1,\widehat{P}_3).
\end{align}
If the RHS of \eqref{eq:48} is treated as a continuous and differentiable function of $r$ and we differentiate with respect to $r$ then second derivative of $d_w^*(C_1^{(r)},C_2^{(r)})$ with respect to $r$ would be
\begin{center}
    $\frac{2nn_1^{2}}{nr^3}d(\widehat{P}_1,\widehat{P}_2)+\left\{\frac{2n_3(n_1+n_3)}{(n-r)^{3}}-\frac{2n_1n_3}{(n-r)^{3}}\right\}d(\widehat{P}_2,\widehat{P}_3)=2\left\{\frac{2nn_1^{2}}{nr^3}d(\widehat{P}_1,\widehat{P}_2)+\frac{n_3^{2}}{(n-r)^{3}}d(\widehat{P}_2,\widehat{P}_3)\right\}>0$.
\end{center}
Hence, $d_w^*$ curve is convex between the $n_1$-th and the $(n_1+n_2)$-th iteration.\\
\textit{(c)} After the $(n_1+n_2)$-th iteration, observations from only $\widehat{P}_3$ will be transferred to $C_1$ in each iteration. Hence, for $r=n_1+n_2,n_1+n_2+1,\dots,n-1$,
\begin{center}
$\widetilde{P}_{C_1^{(r)}}=\frac{n_1}{r}\widehat{P}_1+\frac{n_2}{r}\widehat{P}_2+\frac{r-n_1-n_2}{r}\widehat{P}_3$, \quad $\widetilde{P}_{C_2^{(r)}}=\widehat{P}_3$,
\end{center}
and
\begin{align} \label{eq:49}
    d_w^*(C_1^{(r)},C_2^{(r)}) &= \frac{r(n-r)}{n}d\left(\frac{n_1}{r}\widehat{P}_1+\frac{n_2}{r}\widehat{P}_2+\frac{r-n_1-n_2}{r}\widehat{P}_3,\widehat{P}_3\right) \nonumber\\
    &=\frac{n-r}{rn}\{n_2(n_1+n_2)d(\widehat{P}_2,\widehat{P}_3)+n_1(n_1+n_2)d(\widehat{P}_1,\widehat{P}_3)-n_1n_2d(\widehat{P}_1,\widehat{P}_2)\},
\end{align}
which is a decreasing function of $r$. Hence, $d_w^*$ curve goes on decreasing after the $(n_1+n_2)$-th iteration.
\end{proof}
\noindent
Combining \eqref{eq:50}, \eqref{eq:48} and \eqref{eq:49}, we get equation (9) of the main paper, the explicit form of $d_w^*$ in three class problem. Lemma \ref{lma3} tells that $d_w^*$ curve keeps on increasing till the $n_1$-th iteration, keeps on decreasing after the $(n_1+n_2)$-th iteration and is convex between the $n_1$-th and the $(n_1+n_2)$-th iteration. Hence, $d_w^*(C_1^{(r)},C_2^{(r)})$ curve is maximum either at $r=n_1$ or $r=n_1+n_2$. The decision made by the Algorithm SCC* is always correct under the conditions of Theorem 2 (which follows from Theorem 1). Thus, the Algorithm SCC* detects that the observations are coming from more than one distribution and the Algorithm BS* splits the set of all observations into two clusters of which one contains observations from a single distribution while the other contains observations from mixture of the remaining two distributions. Also, for first cluster, the Algorithm SCC* detects that the observations are coming from a single distribution while for the other cluster it detects that the observations are coming from more than one distribution. In the next step, the Algorithm BS* splits the second cluster into two clusters each of which contains observations from a single distribution only. For each of these two clusters, the Algorithm SCC* detects that the observations are coming from a single distribution. Hence, the Algorithm CURBS-I* is perfect for $K=3$.
\\
\underline{Case 3 ($K > 3$)}: Denote the $K$ empirical distributions by $\widehat{P}_1, \ldots, \widehat{P}_K$, which have been constructed based on samples of sizes $n_1, \ldots ,n_K$, respectively. Denote $n=n_1+n_2+\ldots+n_K$. The following lemma describes the behaviour of $d_w^*$ in $K \:(>3)$ class problem. Let us restate the conditions $(A4)$ and $(A5)$ which will be used to prove the lemma.\\
 $(A4)$ $(n^2-n-nn_1+1)\sum_{i=1}^{K}n_{i}d(\widehat{P}_1,\widehat{P}_i) > (n^2-n-nn_j+1)\sum_{i=1}^{K}n_{i}d(\widehat{P}_j,\widehat{P}_i)$ for all $j=2,3,\dots,K$.\\
 $(A5)$  $n(\sum_{i=l+1}^{K}n_i-1)\sum_{i=1}^{l}n_i\{d(\widehat{P}_i,\widehat{P}_j)-d(\widehat{P}_i,\widehat{P}_{l+1})\}+(\sum_{i=1}^{l}n_i+1)^{2}\sum_{i=l+1}^{K}n_{i}\{(n_j-1)d(\widehat{P}_i,\widehat{P}_j)-(n_{l+1}-1)d(\widehat{P}_i,\widehat{P}_{l+1})\}+(\sum_{i=1}^{l}n_{i}+1)(\sum_{i=l+1}^{K}n_{i}-1)\sum_{i=l+1}^{K}n_{i}\{d(\widehat{P}_i,\widehat{P}_{l+1})-d(\widehat{P}_i,\widehat{P}_j)\} > 0$ for all $j=l+2,l+3,\dots,K$.
\begin{lemma} \label{lma4}
In $K \:(>3)$ class problem, the following statements are true.
    \begin{enumerate} [label=(\alph*)]
        \item An observation from $\widehat{P}_1$ will be transferred to $C_1$ in the first iteration if and only if the condition $(A4)$ holds. Furthermore, if $(A4)$ holds, observations from $\widehat{P}_1$ will be transferred to $C_1$ in each of the first $n_1$ iterations and $d_w^*$ curve keeps on increasing till the $n_1$-th iteration.
        \item For $l=1,2,\dots,K-2$, if all observations from $\widehat{P}_1,\widehat{P}_2,\dots,\widehat{P}_l$ are transferred to $C_1$ in the first $\sum_{i=1}^{l}n_i$ iterations, an observation from $\widehat{P}_{l+1}$ will be transferred to $C_1$ in the $(\sum_{i=1}^{l}n_i+1)$-th iteration if and only if the condition $(A5)$ holds. Furthermore, if $(A5)$ holds, observations from $\widehat{P}_{l+1}$ will be transferred to $C_1$ from the $(\sum_{i=1}^{l}n_i+1)$-th to the $\sum_{i=1}^{l+1}n_i$-th iterations, and the $d_w^*$ curve is convex between the $(\sum_{i=1}^{l}n_i)$-th and the $(\sum_{i=1}^{l+1}n_i)$-th iterations.
        \item After the $(n-n_K)$-th iteration, the $d_w^*$ curve keeps on decreasing.
    \end{enumerate}
\end{lemma}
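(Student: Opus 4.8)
The plan is to run the same programme that proved Lemmas~\ref{lma2} and~\ref{lma3}, lifted to $K$ populations by the three-meta-population reduction described for $K>3$ in the main text. I would work throughout with representative distributions, so that $d_w^*$ between any two clusters is a quadratic form in the squared inter-population discrepancies $d(\widehat{P}_i,\widehat{P}_j)$; the only algebraic tool needed is the identity of Remark~\ref{MMDremark}, applied after collapsing all populations currently inside $C_1$ into a single mixture $\bar{P}$ and all populations inside $C_2$ other than the candidate into a single mixture $\bar{Q}$. Each iteration then becomes a comparison among finitely many ``$\bar{P}$ versus $\widehat{P}_j$ versus $\bar{Q}^{(j)}$'' three-population problems, one per admissible candidate $\widehat{P}_j$; note that the third mixture $\bar{Q}^{(j)}$ genuinely depends on the candidate, so one compares several three-population configurations rather than a single one.

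For part~(a) I would first compute, exactly as at \eqref{eq:40}, the value of $d_w^*(C_1^{(1)},C_2^{(1)})$ produced by transferring an observation of $\widehat{P}_j$ in the opening iteration: here $\widetilde{P}_{C_1^{(1)}}=\widehat{P}_j$ and $\widetilde{P}_{C_2^{(1)}}=\frac{n_j-1}{n-1}\widehat{P}_j+\sum_{i\neq j}\frac{n_i}{n-1}\widehat{P}_i$, and expanding via Remark~\ref{MMDremark} gives a value that, up to an additive constant independent of $j$, is an increasing function of $\sum_{i}n_i\,d(\widehat{P}_j,\widehat{P}_i)$. Hence $\widehat{P}_1$ is a maximizer precisely when the resulting inequalities hold, and these inequalities, spelled out, are condition $(A4)$. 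For the rest of the first block I would argue by induction exactly as in Lemma~\ref{lma2}(a): when $C_1^{(r)}$ holds $r<n_1$ observations of $\widehat{P}_1$, the set $C_2^{(r)}$ has representative distribution $\frac{n_1-r}{n-r}\widehat{P}_1+\frac{n-n_1}{n-r}\bar{R}$ with $\bar{R}=\frac{\sum_{i\ge 2}n_i\widehat{P}_i}{n-n_1}$, so collapsing it to two components and repeating the Lemma~\ref{lma2}(a) computation with $\bar{R}$ in the role of $\widehat{P}_2$ shows, under $(A4)$, that adding another $\widehat{P}_1$ dominates adding any $\widehat{P}_j$; the resulting closed form $d_w^*(C_1^{(r)},C_2^{(r)})=\frac{r(n-n_1)^2}{n(n-r)}\,d(\widehat{P}_1,\bar{R})$ is manifestly increasing on $1\le r\le n_1$.

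Part~(b) is the core, and I would prove it by induction on $l$. Suppose all of $\widehat{P}_1,\dots,\widehat{P}_l$ have been absorbed, so $\widetilde{P}_{C_1}=\bar{P}_l:=\frac{\sum_{i\le l}n_i\widehat{P}_i}{\sum_{i\le l}n_i}$ and $C_2$ still holds $n_{l+1},\dots,n_K$ observations of the respective populations. The first transfer of the new block is either $\widehat{P}_{l+1}$ or some $\widehat{P}_j$ with $j\ge l+2$; writing the two resulting values of $d_w^*$ (each a three-population expression as at \eqref{eq:43}--\eqref{eq:44}, with the third mixture being that of the remaining populations) and taking the difference produces an expression that is $\ge 0$ for every $j\ge l+2$ exactly when $(A5)$ holds. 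Evaluating the same difference after $r$ observations of $\widehat{P}_{l+1}$ have already been moved, it picks up only manifestly non-negative extra terms, as in the display following \eqref{eq:44}, so under $(A5)$ the whole block of $n_{l+1}$ observations is absorbed before any of $\widehat{P}_{l+2},\dots,\widehat{P}_K$; on that block $d_w^*(C_1^{(r)},C_2^{(r)})$ has the form in \eqref{dwcurve-morethanthree}, whose second derivative in $r$ is a sum of positive terms, giving the asserted convexity. Part~(c) is then routine: after the $(n-n_K)$-th iteration $C_1$ contains everything except $\widehat{P}_K$, so with $\bar{P}_{K-1}=\frac{\sum_{i<K}n_i\widehat{P}_i}{n-n_K}$ one has $\widetilde{P}_{C_1^{(r)}}=\frac{n-n_K}{r}\bar{P}_{K-1}+\frac{r-(n-n_K)}{r}\widehat{P}_K$ and $\widetilde{P}_{C_2^{(r)}}=\widehat{P}_K$, giving $d_w^*(C_1^{(r)},C_2^{(r)})=\frac{(n-r)(n-n_K)^2}{rn}\,d(\bar{P}_{K-1},\widehat{P}_K)$, which decreases in $r$ just as at \eqref{eq:47} and \eqref{eq:49}.

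The hard part will be the block-stability step in part~(b): showing that once the $\widehat{P}_{l+1}$-block is opened it is exhausted before any other population is touched. Unlike the three-population case, the competitors $\widehat{P}_{l+2},\dots,\widehat{P}_K$ are genuinely distinct, so $(A5)$ must be made to hold for all of them simultaneously and the sign of the relevant $d_w^*$-difference must be controlled uniformly over all $n_{l+1}$ intermediate iterations; exhibiting that difference as ``$(A5)$ plus a non-negative remainder'' is the only genuinely delicate computation, everything else collapsing to the identities already used for $K=2,3$. A secondary point to watch is that the ``without loss of generality'' relabelling of the populations, so that $C_1$ fills up in the order $1,2,\dots,K$, is exactly the content of assuming $(A4)$ and $(A5)$, and this compatibility should be recorded explicitly rather than taken for granted.
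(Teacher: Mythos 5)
Your proposal is correct and follows essentially the same route as the paper's proof: explicit computation of the representative distributions at the block-opening iteration to extract conditions $(A4)$ and $(A5)$, reduction of the within-block stability steps to the two- and three-population arguments of Lemmas \ref{lma2} and \ref{lma3} with the appropriate mixtures substituted, and closed-form expressions for $d_w^*$ on each block to read off monotonicity and convexity. The only difference is cosmetic: your observation that the first-iteration value is, up to a $j$-independent additive constant, proportional to $\sum_i n_i\,d(\widehat{P}_j,\widehat{P}_i)$ is a cleaner normalization of the comparison than the form in which $(A4)$ is stated, and your explicit flagging of the need to control the $d_w^*$-difference uniformly over the genuinely distinct competitors $\widehat{P}_{l+2},\dots,\widehat{P}_K$ makes visible a point the paper passes over silently.
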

\begin{proof}
\textit{(a)} If an observation from $\widehat{P}_1$ is transferred to $C_1$ in the first iteration then
        \begin{center}
            $\widetilde{P}_{C_1^{(1)}}=\widehat{P}_1$ and $\widetilde{P}_{C_2^{(1)}}=\frac{n_1-1}{n-1}\widehat{P}_1+\frac{\sum_{i=2}^{K}n_{i}\widehat{P}_i}{n-1}$.
        \end{center}
        Thus, 
        \begin{align} \label{eq:51}
            d_w^*(C_1^{(1)},C_2^{(1)})=\frac{n-1}{n}d\left(\widehat{P}_1,\frac{n_1-1}{n-1}\widehat{P}_1+\frac{\sum_{i=2}^{K}n_{i}\widehat{P}_i}{n-1}\right).
        \end{align}
        On the other hand, if an observation from $\widehat{P}_{j}$ ($j \neq 1$) is transferred to $C_1$ in the first iteration then
        \begin{center}
            $\widetilde{P}_{C_1^{(1)}}=\widehat{P}_j$ and $\widetilde{P}_{C_2^{(1)}}=\frac{n_j-1}{n-1}\widehat{P}_j+\frac{\sum_{i \neq j}n_{i}\widehat{P}_i}{n-1}$.
        \end{center}
        Therefore,
        \begin{align} \label{eq:52}
            d_w^*(C_1^{(1)},C_2^{(1)})=\frac{n-1}{n}d\left(\widehat{P}_j,\frac{n_j-1}{n-1}\widehat{P}_j+\frac{\sum_{i\neq j}n_{i}\widehat{P}_i}{n-1}\right).
        \end{align}
        The RHS of \eqref{eq:51} will be greater than the RHS of \eqref{eq:52} if and only if 
        \begin{align} \label{eq:53}
            d\left(\widehat{P}_1,\frac{n_1-1}{n-1}\widehat{P}_1+\frac{\sum_{i=2}^{K}n_{i}\widehat{P}_i}{n-1}\right)>d\left(\widehat{P}_j,\frac{n_j-1}{n-1}\widehat{P}_j+\frac{\sum_{i\neq j}n_{i}\widehat{P}_i}{n-1}\right),
        \end{align}
        for all $j=2,3,\dots,K$. By simple algebraic calculation, it can be shown that \eqref{eq:53} is equivalent to the condition $(A4)$. Hence, an observation from $\widehat{P}_1$ will be transferred to $C_1$ in the first iteration if and only if $(A4)$ holds.\\
        Using the similar argument of part $(a)$ of Lemma \ref{lma2} with $\widehat{P}_2$ replaced by the mixture of $\widehat{P}_2,\widehat{P}_3,\dots,\widehat{P}_K$, it can be easily proved that observations from $\widehat{P}_1$ will be transferred to $C_1$ in in each of the first $n_1$ iterations.
        Therefore, for $r=1,2,\dots,n_1$,
        \begin{center}
            $\widetilde{P}_{C_1^{(r)}}=\widehat{P}_1$ and $\widetilde{P}_{C_2^{(r)}}=\frac{n_1-r}{n-r}\widehat{P}_1+\frac{\sum_{i=2}^{K}n_{i}\widehat{P}_i}{n-r}$.
        \end{center}
        Thus for $r=1,2,\dots,n_1$
        \begin{align} \label{eq:54}
            d_w^*(C_1^{(r)},C_2^{(r)})&=\frac{r(n-r)}{n}d\left(\widehat{P}_1,\frac{n_1-r}{n-r}\widehat{P}_1+\frac{\sum_{i=2}^{K}n_{i}\widehat{P}_i}{n-r}\right) \nonumber\\
            &=\frac{r(n-n_1)^2}{n(n-r)}\left\{\frac{1}{n-n_1}\sum_{i=2}^{K}n_{i}d(\widehat{P}_1,\widehat{P}_i)-\frac{1}{2(n-n_1)^2}\sum_{i,j=2}^{K}n_{i}n_{j}d(\widehat{P}_i,\widehat{P}_j)\right\},
        \end{align}
        which is an increasing function of $r$. Hence, $d_w^*$ curve keeps on increasing till the $n_1$-th iteration.\\
        \textit{(b)} After all observations from $\widehat{P}_1,\widehat{P}_2,\dots,\widehat{P}_l$ are transferred to $C_1$ in first $\sum_{i=1}^{l}n_i$ iterations, if an observation from $\widehat{P}_{l+1}$ is transferred to $C_1$ in the $(\sum_{i=1}^{l}n_i+1)$-th iteration then
        \begin{center}
            $\widetilde{P}_{C_1^{\left(\sum_{i=1}^{l}n_{i}+1\right)}}= \frac{\sum_{i=1}^{l}n_{i}\widehat{P}_{i}}{\sum_{i=1}^{l}n_{i}+1}+\frac{1}{\sum_{i=1}^{l}n_{i}+1}\widehat{P}_{l+1}$, and $\widetilde{P}_{C_2^{\left(\sum_{i=1}^{l}n_{i}+1\right)}}=\frac{n_{l+1}-1}{\sum_{i=l+1}^{K}n_{i}-1}\widehat{P}_{l+1}+\frac{\sum_{i=l+2}^{K}n_{i}\widehat{P}_{i}}{\sum_{i=l+1}^{K}n_{i}-1}$.
        \end{center}
        Therefore,
        \begin{align} \label{eq:56}
            &d_w^*\left(C_1^{\left(\sum_{i=1}^{l}n_{i}+1\right)},C_2^{\left(\sum_{i=1}^{l}n_{i}+1\right)}\right) \nonumber\\&= \frac{\left(\sum_{i=1}^{l}n_{i}+1\right)\left(\sum_{i=l+1}^{K}n_{i}-1\right)}{n}d\left(\frac{\sum_{i=1}^{l}n_{i}\widehat{P}_{i}}{\sum_{i=1}^{l}n_{i}+1}+\frac{1}{\sum_{i=1}^{l}n_{i}+1}\widehat{P}_{l+1},\frac{n_{l+1}-1}{\sum_{i=l+1}^{K}n_{i}-1}\widehat{P}_{l+1}+\frac{\sum_{i=l+2}^{K}n_{i}\widehat{P}_{i}}{\sum_{i=l+1}^{K}n_{i}-1}\right).
        \end{align}
        On the other hand, if an observation from $\widehat{P}_j$ (where $j \in \{l+2,l+3,\dots,K\}$) is transferred to $C_1$ in the $(\sum_{i=1}^{l}n_i+1)$-th iteration then 
                \begin{center}
            $\widetilde{P}_{C_1^{\left(\sum_{i=1}^{l}n_{i}+1\right)}}= \frac{\sum_{i=1}^{l}n_{i}\widehat{P}_{i}}{\sum_{i=1}^{l}n_{i}+1}+\frac{1}{\sum_{i=1}^{l}n_{i}+1}\widehat{P}_{j}$ and $\widetilde{P}_{C_2^{\left(\sum_{i=1}^{l}n_{i}+1\right)}}=\frac{n_{j}-1}{\sum_{i=l+1}^{K}n_{i}-1}\widehat{P}_{j}+\frac{\sum_{i(\neq j)=l+1}^{K}n_{i}\widehat{P}_{i}}{\sum_{i=l+1}^{K}n_{i}-1}$.
        \end{center}
        Therefore,
        \begin{align} \label{eq:57}
            &d_w^*\left(C_1^{\left(\sum_{i=1}^{l}n_{i}+1\right)},C_2^{\left(\sum_{i=1}^{l}n_{i}+1\right)}\right) \nonumber\\&= \frac{\left(\sum_{i=1}^{l}n_{i}+1\right)\left(\sum_{i=l+1}^{K}n_{i}-1\right)}{n}d\left(\frac{\sum_{i=1}^{l}n_{i}\widehat{P}_{i}}{\sum_{i=1}^{l}n_{i}+1}+\frac{1}{\sum_{i=1}^{l}n_{i}+1}\widehat{P}_{j},\frac{n_{j}-1}{\sum_{i=l+1}^{K}n_{i}-1}\widehat{P}_{j}+\frac{\sum_{i(\neq j)=l+1}^{K}n_{i}\widehat{P}_{i}}{\sum_{i=l+1}^{K}n_{i}-1}\right).
        \end{align}
        The RHS of \eqref{eq:56} is greater than the RHS of \eqref{eq:57} if and only if
        \begin{align} \label{eq:58}
            &d\left(\frac{\sum_{i=1}^{l}n_{i}\widehat{P}_{i}}{\sum_{i=1}^{l}n_{i}+1}+\frac{1}{\sum_{i=1}^{l}n_{i}+1}\widehat{P}_{l+1},\frac{n_{l+1}-1}{\sum_{i=l+1}^{K}n_{i}-1}\widehat{P}_{l+1}+\frac{\sum_{i=l+2}^{K}n_{i}\widehat{P}_{i}}{\sum_{i=l+1}^{K}n_{i}-1}\right)\nonumber\\&>d\left(\frac{\sum_{i=1}^{l}n_{i}\widehat{P}_{i}}{\sum_{i=1}^{l}n_{i}+1}+\frac{1}{\sum_{i=1}^{l}n_{i}+1}\widehat{P}_{j},\frac{n_{j}-1}{\sum_{i=l+1}^{K}n_{i}-1}\widehat{P}_{j}+\frac{\sum_{i(\neq j)=l+1}^{K}n_{i}\widehat{P}_{i}}{\sum_{i=l+1}^{K}n_{i}-1}\right).
        \end{align}
        Doing some simple algebraic calculation, it can be shown that \eqref{eq:58} is equivalent to the condition $(A5)$. Hence, an observation from $\widehat{P}_{l+1}$ will be transferred to $C_1$ in the $(\sum_{i=1}^{l}n_i+1)$-th iteration if and only if $(A5)$ holds.\\
        Using similar argument of part $(b)$ of \ref{lma3} with $\widehat{P}_1$ replaced by the mixture of $\widehat{P}_1,\widehat{P}_2,\dots,\widehat{P}_l$, $\widehat{P}_2$ replaced by $\widehat{P}_{l+1}$ and $\widehat{P}_3$ replaced by the mixture of $\widehat{P}_{l+2},\widehat{P}_{l+3},\dots,\widehat{P}_K$, it can be easily shown that observations from $\widehat{P}_{l+1}$ will be transferred to $C_1$ from the $(\sum_{i=1}^{l}n_i+1)$-th to the $\sum_{i=1}^{l+1}n_i$-th iteration if $(A5)$ holds and the $d_w^*$ curve is convex between the $\sum_{i=1}^{l}n_i$-th and the $\sum_{i=1}^{l+1}n_i$-th iteration.\\
        Hence, for $r=\sum_{i=1}^{l}n_i,\sum_{i=1}^{l}n_i+1,\dots,\sum_{i=1}^{l+1}n_i$,
        \begin{center}
            $\widetilde{P}_{C_1^{(r)}}=\frac{\sum_{i=1}^{l}n_i\widehat{P}_i}{r}+\frac{r-\sum_{i=1}^{l}n_i}{r}\widehat{P}_{l+1}$ and $\widetilde{P}_{C_2^{(r)}}=\frac{\sum_{i=1}^{l+1}n_i-r}{n-r}\widehat{P}_{l+1}+\frac{\sum_{i=l+2}^{K}n_i\widehat{P}_i}{n-r}$.
        \end{center}
        Thus for $r=\sum_{i=1}^{l}n_i,\sum_{i=1}^{l}n_i+1,\dots,\sum_{i=1}^{l+1}n_i$,
        \begin{align} \label{eq:59}
            d_w^*(C_1^{(r)},C_2^{(r)})&=\frac{r(n-r)}{n}d\left(\frac{\sum_{i=1}^{l}n_i\widehat{P}_i}{r}+\frac{r-\sum_{i=1}^{l}n_i}{r}\widehat{P}_{l+1},\frac{\sum_{i=1}^{l+1}n_i-r}{n-r}\widehat{P}_{l+1}+\frac{\sum_{i=l+2}^{K}n_i\widehat{P}_i}{n-r}\right) \nonumber\\
            &= \frac{n\left(\sum_{i=1}^{l}n_i \right)-r(n-n_{l+1})}{n}\left\{\frac{\sum_{i=1}^{l}n_i}{r}d\left(\frac{\sum_{i=1}^{l}n_{i}\widehat{P}_i}{\sum_{i=1}^{l}n_i},\widehat{P}_{l+1} \right) - \frac{\sum_{i=l+2}^{K}n_i}{n-r}d\left(\widehat{P}_{l+1},\frac{\sum_{i=l+2}^{K}n_{i}\widehat{P}_i}{\sum_{i=l+2}^{K}n_i}\right)\right\}\nonumber\\&+\frac{\left(\sum_{i=1}^{l}n_i \right)\left(\sum_{i=l+2}^{K}n_i \right)}{n}d\left(\frac{\sum_{i=1}^{l}n_{i}\widehat{P}_i}{\sum_{i=1}^{l}n_i},\frac{\sum_{i=l+2}^{K}n_{i}\widehat{P}_i}{\sum_{i=l+2}^{K}n_i}\right).
        \end{align}
        \noindent\\\noindent
        \textit{(c)} After the $(n-n_K)$-th iteration, observation from $\widehat{P}_K$ will start coming to $C_1$ in each iteration. Hence, for $r=n-n_K+1,n-n_K+2,\dots,n$
        \begin{center}
    $\widetilde{P}_{C_1^{(r)}}=\frac{\sum_{i=1}^{K-1}n_{i}\widehat{P}_i}{r}+\frac{r-\sum_{i=1}^{K-1}n_{i}}{r}\widehat{P}_{K}$ and $\widetilde{P}_{C_2^{(r)}}=\widehat{P}_K$.
        \end{center}
        Thus for $r=n-n_K+1,n-n_K+2,\dots,n$
        \begin{align} \label{eq:55}
            d_w^*(C_1^{(r)},C_2^{(r)})&=\frac{r(n-r)}{n}d\left(\frac{\sum_{i=1}^{K-1}n_{i}\widehat{P}_i}{r}+\frac{r-\sum_{i=1}^{K-1}n_{i}}{r}\widehat{P}_{K},\widehat{P}_{K}\right) \nonumber\\
            &=\frac{n-r}{rn}\left\{(n-n_K)\sum_{i=1}^{K-1}n_{i}d(\widehat{P}_{i},\widehat{P}_{K})-\frac{1}{2}\sum_{i,j=1}^{K-1}n_{i}n_{j}d(\widehat{P}_i,\widehat{P}_j)\right\},
        \end{align}
        which is a decreasing function of $r$. Hence, $d_w^*$ curve keeps on decreasing after the $(n-n_K)$-th iteration.
\end{proof}
\noindent
Combining \eqref{eq:54}, \eqref{eq:59} and \eqref{eq:55}, we get equation (13) of the main paper, the explicit form of $d_w^*$ in $K \:(>3)$ class problem.\\  
By Lemma \ref{lma4}, the $d_w^*$ curve keeps on increasing till the $n_1$-th iteration and keeps on decreasing after the $(n-n_K)$-th iteration. Also, for $l=1,2,\dots,K-2$, the $d_w^*$ curve is convex between the $\sum_{i=1}^{l}n_i$-th and the $\sum_{i=1}^{l+1}n_i$-th iteration. Therefore, $d_w^*(C_1^{(r)},C_2^{(r)})$ curve attains its maximum at $r=\sum_{i=1}^{l}n_i$ for some $l=1,2,\dots,K-1$. Since the decision made by the Algorithm SCC* is always correct under the assumption of Theorem 2 (which follows from Theorem 1), whenever the Algorithm SCC* will be applied on a set containing observations of more than one distribution, it will detect that the observations are coming from more than one distribution. The Algorithm BS* splits each of these sets into two subsets such that these two subsets do not contain observations from any common distribution. On the other hand, when the Algorithm SCC* will be applied on a set containing observations from one distribution only, it will detect that the observations are coming from single distribution. Hence, the Algorithm CURBS-I* is perfect for $K>3$.
\subsection*{Proof of Theorem 3}
To prove that the Algorithm CURBS-II* is POP in oracle scenario, we need to show that 
\begin{enumerate}
\item If the specified cluster number and true cluster number are equal, the partition obtained by the algorithm and the original partition will be same.
    \item If the specified cluster number is smaller than the true cluster number, no two different clusters contain observations from any common distribution.
    \item If the specified cluster number is higher than the true cluster number, there will be no cluster containing observations coming from more than one distribution.
\end{enumerate}
In this proof, the true cluster number and the specified cluster number will be denoted by $K$ and $J$ respectively. Also, we refer to the process of splitting $i$ available clusters and merging $(i-1)$ pairs of clusters as the $i$-th stage of the Algorithm CURBS-II*. In other words, the $i$-th stage is comprised of Steps 3, 4 and 5 of the Algorithm CURBS-II*, $i=1,2,\dots,J-1$. \\
For $K \geq 2$, denote the $K$ empirical distributions by $\widehat{P}_1, \ldots, \widehat{P}_K$, which have been constructed based on samples of sizes $n_1, \ldots ,n_K$, respectively. The following lemma describes the properties of the partition obtained by the Algorithm CURBS-II* for different specified number of clusters in $K \:(\geq 2)$ class problem.
\begin{lemma} \label{lma5} In $K \:(\geq 2)$ class problem, the following statements are true.
    \begin{enumerate} [label=(\alph*)]
        \item If $J=j(<K)$, the Algorithm CURBS-II* produces $j$ clusters such that no two different clusters contain observations from any common distribution.
        \item If $J=K$, the partition obtained by the Algorithm CURBS-II* and the original partition are same.
        \item  If $J>K$, from the $K$-th stage onwards, no set containing observations from $\widehat{P}_{i}$ will be merged with a set containing observations from $\widehat{P}_{j}$ in any stage where $i \neq j$.
    \end{enumerate}
\end{lemma}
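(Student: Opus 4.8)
\textbf{Proof plan for Lemma~\ref{lma5}.}
The plan is to prove all three parts by induction on the stages of Algorithm CURBS-II*, tracking the ``block structure'' of the current collection of clusters. Call the set of observations coming from $\widehat P_j$ the $j$-th \emph{block}, so the data $\mathcal D$ is the disjoint union of $K$ blocks. I will use three facts: (i) from Lemmas~\ref{lma2}--\ref{lma4} (precisely, the consequences drawn in the proof of Theorem~\ref{curbs1_perfect}), Algorithm BS* applied to a cluster that is a union of at least two whole blocks returns two clusters, each again a union of whole blocks; (ii) Algorithm BS* applied to a cluster consisting of observations from a single population returns two nonempty sub-blocks of that population; and (iii) since $k$ is characteristic, $d_w^*(A,B)=\frac{|A||B|}{|A|+|B|}d(\widetilde P_A,\widetilde P_B)$ vanishes exactly when $\widetilde P_A=\widetilde P_B$, so any two sub-blocks of the same population are at $d_w^*$-distance $0$, while two sub-blocks of different populations are at strictly positive $d_w^*$-distance (using, as in the oracle setting, that distinct block-compositions give distinct representative mixtures). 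I also use that the operation performed in the $i$-th stage depends only on the current clusters and on $i$, not on $J$.

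For parts (a) and (b) I maintain the invariant: \emph{after every stage, each current cluster is a union of whole blocks} (equivalently, the clusters form a partition of $\{\widehat P_1,\dots,\widehat P_K\}$). The base case is stage~1 (Step~1), where (i) applies because $\mathcal D$ carries all $K\ge 2$ populations. For the inductive step at stage $i$ one notes $i\le J-1<K$ in both regimes. Splitting the $i$ clusters by BS* yields $2i$ clusters that are unions of whole blocks (from impure clusters, by (i)) or sub-blocks of a single population (from pure whole-block clusters, by (ii)), each pure whole-block cluster contributing exactly one sibling pair at $d_w^*$-distance $0$. Since the $i<K$ clusters cover all $K$ populations, they cannot all be single whole blocks, so there are at most $i-1$ of these zero pairs; hence all of them lie among the $i-1$ smallest-$d_w^*$ pairs that get merged. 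The crucial point is that whenever a sub-block fragment takes part in a merge, its sibling is dragged in along this zero pair, reconstituting the whole block; every other merge combines unions of whole blocks into a union of whole blocks. Thus the invariant persists. At termination ($i=J-1$) we obtain $J$ clusters, each a union of whole blocks, which is part (a); and when $J=K$, a partition of a $K$-element set into $K$ nonempty parts must be into singletons, which is part (b).

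For part (c), the first $K-1$ stages of the $J>K$ run are identical to those of the $J=K$ run, so after them the clusters are, by part (b), exactly the $K$ pure blocks. I then induct over stages $i\ge K$ with the invariant: \emph{every current cluster is a sub-block of a single population}. Granting it, splitting by BS* yields $2i$ sub-blocks, and every pair of sub-blocks of the same population is a $d_w^*$-zero pair. If population $s$ occupies $c_s\ge 1$ of the current clusters, after splitting it occupies $2c_s$ fragments and contributes $\binom{2c_s}{2}=c_s(2c_s-1)$ zero pairs, so the total number of zero pairs is $\sum_s c_s(2c_s-1)=2\sum_s c_s^2-i\ge 2\sum_s c_s-i=i>i-1$. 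Therefore all $i-1$ merged pairs are $d_w^*$-zero pairs, i.e. between clusters with equal representative distribution, which by (iii) and distinctness of the $\widehat P_j$ forces them into the same population. This both yields part (c) (no cross-population merge from stage $K$ onwards) and re-establishes the invariant.

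The main obstacle is controlling the merge step: ruling out a merge that joins a fragment of $\widehat P_s$ to a fragment of $\widehat P_t$ while leaving a sibling fragment stranded in another cluster. In both inductions the resolution is the same structural observation: same-population fragments sit at $d_w^*$-distance exactly $0$ and are therefore merged with top priority, so a population's fragments always chain back together as soon as any one of them is touched. The remaining bookkeeping --- counting the zero-distance pairs (at least $i-1$ of them once all clusters are fragments, but at most $i-1$ ``dangerous'' ones while $J\le K$), and invoking the oracle-level non-degeneracy that distinct block-compositions give distinct mixtures --- is routine once this observation is in place.
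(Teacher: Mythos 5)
Your proposal is correct and follows essentially the same route as the paper's proof: a stage-wise induction maintaining the whole-block invariant for parts (a) and (b), and for part (c) the observation that same-population fragments sit at $d_w^*$-distance zero and outnumber the $i-1$ merges, so only within-population merges can occur. Your explicit count $\sum_s c_s(2c_s-1)\ge i$ and the ``sibling dragged in along the zero pair'' argument are slightly more carefully spelled out than in the paper, but the underlying ideas and the reliance on the BS* splitting lemmas and the oracle non-degeneracy of distinct mixtures are the same.
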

\begin{proof}
\textit{(a)} For $K=2$, the Property 2 (that is, when $J=1$) is satisfied trivially. For $K>2$, we will prove this part by mathematical induction. First, we will show that the this property is satisfied for the first stage. Then assuming that it holds for the $i$-th stage, we will show that this holds for the $(i+1)$-th stage also.\\In $K \:(>2)$ class problem, $d_w^*$ attains maximum value at the $\sum_{i=1}^{l}n_i$-th iteration for some $l=1,2,\dots,K-1$ (see the discussion for three class problem and $K \:(>3)$ class problem in Section 4.1 of the main paper). Hence, in the first stage of the Algorithm CURBS-II*, the Algorithm BS* splits the set of observations into two sets such that these two sets do not contain observations from any common distribution. Therefore, Property 2 is satisfied for the first stage. Now, suppose that this property is satisfied after completion of the $i$-th stage, for some $i \in \{1,2,\dots,K-3\}$. That is, among $i+1$ clusters obtained as output after the $i$-th stage, no two different clusters contain observations from any common distribution. Suppose, among these clusters, $U$ be a cluster containing observations from $\widehat{P}_{s}$ only and $W$ be another cluster containing no observation from $\widehat{P}_{s}$, $s \in \{1,2,\dots,K\}$. In the next stage, $U$ is split into $U_1$ and $U_2$ and $W$ is split into $W_1$ and $W_2$. Since $U_1$ and $U_2$ both contain observations from $\widehat{P}_s$, $d_w^*(U_1,U_2)=0$. On the other hand, as $U$ and $W$ do not contain observations from any common distribution, $U_{x}$ and $V_{y}$ do not contain observations from any common distribution for all $x,y=1,2$. Therefore, $d_w^*(U_x,W_y)>0$ for all $x,y=1,2$. In the merging step, $U_1$ and $U_2$ will definitely get merged. Also, in the beginning of the $(i+1)$-th stage, there are at most $i$ many clusters containing observations from 
a single distribution only. Therefore, after splitting each of the clusters, there will be at most $i$ many pairs of clusters such that two clusters of each pair contain observations from same distribution. Hence, if a cluster containing observations from a single distribution is split into two subclusters in any stage then these subclusters will definitely be merged to form back the cluster again. Therefore, after completion of the $(i+1)$-th stage, $i+2$ clusters are produced such that observations from any common distribution are not present in two different clusters simultaneously. On the other hand, if $i+1$ clusters are produced after the $i$-th stage such that no cluster contains observations from single distribution only then in the $(i+1)$-th stage, each of $i$ clusters will be split into two subparts and these subparts do not contain observations from any common distribution. Also, as the statement $(a)$ is true for $i$-th stage, among $i+1$ clusters obtained after $i$-th stage, no two different clusters contain observations from any common distribution. Therefore, among $2i+2$ clusters obtained after splitting in $(i+1)$-th stage, no two different clusters contain observations from any common distribution. Thus, after merging $i$ pairs of clusters, $i+2$ clusters will be produced such that no two different clusters contain observations from same distribution. Hence, Property 2 is satisfied for the $(i+1)$-th stage also. Through mathematical induction, Property 2 is satisfied from the first stage to the $(K-2)$-th stage.\\
        \textit{(b)} For $K=2$, in first stage the Algorithm BS* splits the set of observations into two clusters of which one contains observations from $\widehat{P}_1$ and the other contains observations from $\widehat{P}_2$ (see the discussion for two class problem in Section 4.1 of the main paper). Hence, the original partition and the partition obtained by the Algorithm CURBS-II* are same.
        \par
        For $K>2$, after completion of the $(K-2)$-th stage, $K-1$ clusters are produced such that no two different clusters contain observations from any common distribution (which follows from part $(a)$). Thus, among these $K-1$ clusters, $K-2$ clusters contain observations from a single distribution and the other cluster contains observations from two different distributions. Without loss of generality, assume that $A_{i}^*$ contains observations from $\widehat{P}_i$, $i=1,2,\dots,K-2$ and $A_{K-1}^*$ contains observations from $\widehat{P}_{K-1}$ and $\widehat{P}_{K}$. In the next iteration, each of $A_{i}^*$ ($i=1,2,\dots,K-1$) is split into $A_{i,1}^*$ and $A_{i,2}^*$. Then for $i=1,2,\dots,K-2$,
        \begin{align*}
            &d_w^*(A_{i,1}^*,A_{i,2}^*)=0, \quad d_w^*(A_{K-1,1}^*,A_{K-1,2}^*)=\frac{|A_{K-1,1}^*||A_{K-1,2}^*|}{|A_{K-1,1}^*|+|A_{K-1,2}^*|}d(\widehat{P}_{K-1},\widehat{P}_{K})>0,\\
            &d_w^*(A_{i,j}^*,A_{K-1,1}^*)=\frac{|A_{i,j}^*||A_{K-1,1}^*|}{|A_{i,j}^*|+|A_{K-1,1}^*|}d(\widehat{P}_{i},\widehat{P}_{K-1})>0, \quad d_w^*(A_{i,j}^*,A_{K-1,1}^*)=\frac{|A_{i,j}^*||A_{K-1,2}^*|}{|A_{i,j}^*|+|A_{K-1,2}^*|}d(\widehat{P}_{i},\widehat{P}_{K})>0,  
        \end{align*}
        for $j=1,2$. Also, for $i_1, i_2 \in \{1,2,\dots,K-2\}$ such that $i_1 \neq i_2$, and $j_1,j_2 \in \{1,2\}$,
        \begin{align*}
            d_w^*(A_{i_1,j_1}^*,A_{i_2,j_2}^*)=\frac{|A_{i_1,j_1}^*||A_{i_1,j_2}^*|}{|A_{i_1,j_1}^*|+|A_{i_2,j_2}^*|}d(\widehat{P}_{i_1},\widehat{P}_{i_2})>0.
        \end{align*}
        Since $K-2$ pair of clusters need to be merged in the $(K-1)$-th stage, $A_{i,1}^*$ and $A_{i,2}^*$ will be merged in this stage, $i=1,2,\dots,K-2$. Hence, after completion of the $(K-1)$-th stage, the output will be $K$ clusters each of which contains observations from single distribution only and no two different clusters contain observations from any common distribution. Therefore, the partition obtained by the Algorithm CURBS-II* and the original partition are same. \\
        \textit{(c)} For $K \geq 2$, after completion of the $(K-1)$-th stage, $K$ many clusters are produced each of which contains observations from single distribution only and no two different clusters contain observation from any common distribution (which follows from part \textit{(b)}). In the $K$-th stage, each of these clusters will be split into two subparts. Also, it is to be noted that the $d_w^*$ distance between two sets containing observations from same distribution (which is actually zero) is less than $d_w^*$ distance between two sets containing observations from different distributions. Among $2K$ clusters obtained after splitting, there are $K$ pairs such that both sets of each pair contain observations from same distribution. Therefore, in the merging step, among those $K$ pairs, any $K-1$ pairs of sets will be merged. Hence, Property 3 is satisfied for the $K$-th stage. Now, suppose that this property is satisfied for the $j$-th stage (where $j \geq K$). Also, assume that after completion of the $j$-th stage, there are two clusters $S_{x}$ and $S_{y}$ among the $j+1$ many clusters obtained as output such that $S_{x}$ contains observations from $\widehat{P}_{l_1}$ only and $S_{y}$ contains observations from $\widehat{P}_{l_2}$ only, where $l_1 \neq l_2$. In the next stage, $S_{x}$ is split into $S_{x,1}$ and $S_{x,2}$ and $S_{y}$ is split into $S_{y,1}$ and $S_{y,2}$. Then $d_w^*(S_{x,1},S_{x,2})=d_w^*(S_{y,1},S_{y,2})=0$ and
        \begin{center}
            $d_w^*(S_{x,1},S_{y,1})=\frac{|S_{x,1}||S_{y,1}|}{|S_{x,1}|+|S_{y,1}|}d(\widehat{P}_{l_1},\widehat{P}_{l_2})>0$, $d_w^*(S_{x,1},S_{y,2})=\frac{|S_{x,1}||S_{y,2}|}{|S_{x,1}|+|S_{y,2}|}d(\widehat{P}_{l_1},\widehat{P}_{l_2})>0$.
        \end{center}
        Similarly $d_w^*(S_{x,2},S_{y,1})>0$ and $d_w^*(S_{x,2},S_{y,2})>0$. Thus $d_w^*$ distance between two sets containing observations from same distribution (which is actually zero) is less than $d_w^*$ distance between two sets containing observations from different distributions.  There are at least $j+1$ many pairs of sets containing observations coming from same distribution. Since $j$ many pair of clusters are to be merged in the $(j+1)$-th stage, no subset containing observations from a single distribution will merge with one subset containing observations from another distribution in the $(j+1)$-th stage. Hence, Property 3 is true for the $(j+1)$-th stage also. Using the principle of mathematical induction, this statement is true for all stages starting from the $K$-th stage.
\end{proof}
By part \textit{(a)} and \textit{(b)} of Lemma \ref{lma5}, Property 2 and Property 1 are satisfied for $J<K$ and $J=K$ respectively. By part \textit{(c)} of Lemma \ref{lma5}, if $J > K$, after completion of the $(J-1)$-th stage, the output will be $J$ clusters containing observations from single distribution only. Hence, Property 3 is satisfied for all $J>K$. Therefore, the Algorithm CURBS-II* is POP in the oracle scenario for $K \:(\geq 2)$ class problem.

\end{document}